\documentclass[a4paper,11pt]{article}

\usepackage{authblk}
\usepackage{fullpage}
\title{The blue pebbling cost  and the space in tree-like and negative Resolution} 

\author{Lisa-Marie Jaser}
\author{Jacobo Tor\'an}

\affil{Universität Ulm, Germany\\
\{\texttt{lisa-marie.jaser},\texttt{jacobo.toran}\}\texttt{@uni-ulm.de}}

\usepackage{embrac}

\usepackage{amsmath}
\usepackage{amssymb}
\usepackage{amsthm} 	
\usepackage{amsfonts}
\usepackage{mathrsfs}

\usepackage{enumerate}

\usepackage{lmodern} 

\usepackage{microtype}  
\usepackage{url}

\usepackage{graphicx}
\usepackage{xcolor}

\usepackage[]{mdframed}
\usepackage{amsmath}
\usepackage{amssymb}
\usepackage{amsthm} 	
\usepackage{amsfonts}
\usepackage{mathrsfs} 
\usepackage{multirow}
\usepackage{tikz}
\usetikzlibrary{arrows.meta}

\usepackage{bussproofs} 
\usepackage[boxempty]{stmaryrd}
\newcommand{\emptycl}{\boxempty}

\newcommand{\MemConfig}{\mathbb{M}} 
\newcommand{\restimestep}{i} 
\newcommand{\resendtime}{t} 

\newcommand{\RefGraph}{G_{\proofstd}}

\usepackage{enumerate}

\usepackage{lmodern} 

\usepackage{microtype}  
\usepackage{url}

\newcommand{\mystyle}{\sffamily}

\usepackage{sectsty}
\allsectionsfont{\mystyle}

  \renewenvironment{abstract}{%
        \small
        \begin{center}%
          {\bfseries \mystyle \abstractname\vspace{-.5em}}%
        \end{center}%
        \quotation
      }

\date{}

\usepackage{embrac}
\usepackage{float}

\usepackage{graphicx} 

\newcommand{\proofstd}{\ensuremath{\pi}}

\newcommand{\nat}[1]{[#1]}          

\DeclareMathOperator{\pred}{\mathrm pred}

\newtheorem{theorem}{Theorem}
    \newtheorem{lemma}[theorem]{Lemma}
	\newtheorem{corollary}[theorem]{Corollary}

\theoremstyle{definition}
	\newtheorem{definition}[theorem]{Definition}
	
\DeclareMathOperator{\proj}{\mathsf {proj}}
\DeclareMathOperator{\mproj}{\mathsf {mproj}}
\DeclareMathOperator{\Mproj}{\mathsf {Mproj}}

\DeclareMathOperator{\NMproj}{\mathsf {NMproj}}

\DeclareMathOperator{\SAT}{\mathsf{SAT}}

\DeclareMathOperator{\CNF}{\mathsf{CNF}}

\DeclareMathOperator{\vars}{\mathsf{Vars}}

\DeclareMathOperator{\size}{\mathsf{Sz}}

\DeclareMathOperator{\cspace}{\mathsf{Cs}}

\DeclareMathOperator{\pebspace}{\mathsf{peb-space}}

\DeclareMathOperator{\BW}{\mathsf{BW}}
\DeclareMathOperator{\Pebbling}{\mathsf{P}}

\DeclareMathOperator{\Black}{\mathsf{Black}}
\DeclareMathOperator{\RMc}{\mathsf{RB}}

\DeclareMathOperator{\Rev}{\mathsf{Rev}}
\DeclareMathOperator{\blue}{\mathsf{Blue}}
\DeclareMathOperator{\PD}{\mathsf{PD}}
\DeclareMathOperator{\Blue}{\mathsf{Blue}}

\DeclareMathOperator{\Peb}{\mathsf{Peb}}

\DeclareMathOperator{\RES}{\mathsf{Res}}

\newcommand{\derivabbrevsmall}[2]{{( #1 \vdash #2 )}} 
\newcommand{\derivabbrevcompact}[2]{{\bigl( #1 \vdash #2 \bigr)}} 

\newcommand{\refutabbrevsmall}[1]{\derivabbrevsmall{#1}{\!\emptycl}}
\newcommand{\refutabbrevcompact}[1]{\derivabbrevcompact{#1}{\!\emptycl}}

\newcommand{\derivof}[4][\derives]
{{\ensuremath{{#2} : {#3} \, {#1}\, {#4}}}}
\newcommand{\refof}[2]{\derivof{#1}{#2}{\emptycl}} 

\newcommand{\genericformsmall}[2]{\mathrm{#1}( #2 )}

\newcommand{\genericrefsmall}[3]%
{{\mathrm{#1}}_{#2}\refutabbrevsmall{#3}}
\newcommand{\genericrefcompact}[3]%
{{\mathrm{#1}}_{#2}\refutabbrevcompact{#3}}
\newcommand{\genericderiv}[4]%
{{\mathrm{#1}}_{#2}\derivabbrev{#3}{#4}}
\newcommand{\genericderivsmall}[4]%
{{\mathrm{#1}}_{#2}\derivabbrevsmall{#3}{#4}}
\newcommand{\genericderivcompact}[4]%
{{\mathrm{#1}}_{#2}\derivabbrevcompact{#3}{#4}}
\newcommand{\generictaut}[3]%
{{\mathrm{#1}}_{#2}\derivabbrev{}{#3}}
\newcommand{\generictautcompact}[3]%
{{\mathrm{#1}}_{#2}\derivabbrevcompact{}{#3}}
\newcommand{\generictautsmall}[3]%
{{\mathrm{#1}}_{#2}\derivabbrevsmall{}{#3}}
\newcommand{\arbitraryrefsmall}[3]%
{{\mathscr{#1}}_{#2}\refutabbrevsmall{#3}}

\newcommand{\clspaceabbr}{CS}
\newcommand{\clspaceof}[2][]{\genericformsmall{\clspaceabbr_{#1}}{#2}}
\newcommand{\clspaceref}[2][]{\genericrefsmall{\clspaceabbr}{#1}{#2}}

\newcommand{\treeclspaceabbr}{Tree\text{-}\clspaceabbr}
\newcommand{\treeclspaceref}[2][]{\genericrefsmall{\treeclspaceabbr}{#1}{#2}}

\newcommand{\setsofvarsorlit}[2]%
{\mathrm{#1}({#2})}

\newcommand{\Nclspaceabbr}{N\text{-}\clspaceabbr}
\newcommand{\Nclspaceref}[2][]{\genericrefsmall{\Nclspaceabbr}{#1}{#2}}

\newcommand{\Pclspaceabbr}{P\text{-}\clspaceabbr}
\newcommand{\Pclspaceref}[2][]{\genericrefsmall{\Pclspaceabbr}{#1}{#2}}

\newcommand{\gamevalueformat}[1]{\mathsf{#1}}

\newcommand{\PebConfig}{\mathbb{P}}

\newcommand{\White}{\gamevalueformat{White}}

\newcommand{\pebtimestep}{i} 
\newcommand{\pebendtime}{t} 

\begin{document}

\maketitle

\begin{abstract}
The red--blue pebble game 
is a well known two-player game on graphs that has been used in the past as a tool to analyze complexity measures in several computation models as well as proof systems. We define a new way to measure the cost of the game, the blue cost, which only counts  the number of pebbles that are colored blue during the game.  This new measure characterizes exactly several space bounds in tree-like and negative Resolution.
 
In particular we prove that for 
any unsatiafiable formula $F$,  the clause space requirements of the formula in  tree-like Resolution, exactly coincide with the minimum blue pebbling cost of the game played on a refutation graph of $F$ (not necessarily a tree). This exactly parallels the known result for general Resolution in terms of the standard black pebble game, and improves the existing approximated characterization of tree-like space in terms of reversible pebbling.

We show that the blue pebbling cost is also well suited for analyzing the space requirements of the lifted pebbling formulas
$\Peb_G[\vee]$ and $\Peb_G[\oplus]$ in the two Resolution restrictions. In the case of tree-like Resolution, 
the clause space of $\Peb_G[\vee]$ asymptotically coincides  with the blue cost of the underlying graph $G$. 
For the case of negative Resolution, we obtain almost matching upper and lower bounds for the space in the two classes of lifted formulas, similar to the ones existing for general Resolution. 

We also prove a close to optimal space separation between tree-like and negative Resolution, presenting a class of formulas with $n$ variables that require clause space $\Omega(\frac{n}{\log n})$ in negative Resolution, but have constant space tree-like refutations. This contrasts with the fact that negative Resolution can simulate tree-like Resolution with only a small increase in size.
\end{abstract}

\section{Introduction}
Proof complexity  studies the resources, such as proof size or space, required  for proving or refuting mathematical statements in concrete proof systems. As a bridge between logic and complexity theory, proof complexity uses tools from both 
fields, and 
like in other areas of logic or complexity, the  use of games for the analysis of resources has here also a long tradition. 
A class of games that has been very useful for the analysis of computational models is that of pebble games
\cite{PH70Comparative,Cook74ObservationTimeStorageTradeOff,Sethi75CompleteRegisterAllocation,HPV77TimeVsSpace}.  
This is   particularly true for analyzing complexity measures in proof systems.
Intuitively, the idea of the pebble games is to measure the minimum number of pebbles (the game price) needed by a single player to place a pebble on the single sink of a directed acyclic graph  following certain rules. 
Inspired by the pebble game, a class of contradictory formulas, called pebbling formulas, was introduced in~\cite{BW01ShortProofs}. These formulas have been extremely useful for analyzing several proof systems, as can be seen, for example, in
\cite{Nordstrom06NarrowProofsMayBeSpaciousSTOC,BN11UnderstandingSpaceFULLREF, Nordstrom12RelativeStrength, Nordstrom13SurveyLMCS, ToranW21, RezendeMNR21,JaserT26}.
The reason for this is that some of the pebbling properties of the underlying graphs are transformed into parameters for the complexity of their corresponding pebbling contradictions. Known results in pebbling can  be translated  in this way into proof complexity upper and lower bounds. 
The original pebbling formulas are not hard to refute. 
For the mentioned results, the basic  formulas have been lifted in several ways, substituting each variable in the formula by a combination of new variables, thus increasing its complexity. The most used substitutions are the OR and the $\oplus$ substitutions, in which a variable $x$ is substituted by the subformula $x_1\vee x_2$, respectively  $x_1\oplus x_2$, for two new variables $x_1,x_2$.

In the black pebble game on a single sink directed acyclic graph  (sDAG) $G$, pebbles can only be placed on a vertex if this is a source or if all its direct predecessors already have a pebble, but these pebbles can be removed at any time. In the reversible pebble game, it is also required that a pebble can only be removed when all its predecessors have a pebble on them. We denote the  costs of these two games played on $G$  
by $\Black(G)$ and $\Rev(G)$ and clearly for every $G$, $\Black(G)\leq \Rev(G).$
These are single-player games, but it was proved in \cite{Chan13JustAPebble}, that somehow unexpectedly the reversible pebble game is equivalent (in terms of game price) to a two-player pebble game called the red--blue game. There are different equivalent versions of this game \cite{DT85,Chan13JustAPebble}, the  one used here, also known as the Raz--McKenzie game,  was introduced in \cite{RM99Separation}.  In the red--blue pebble game, two players, Pebbler and Colorer decide the colors of some of the vertices of a sDAG $G$.  
 In the first round, Pebbler places a pebble on the sink and Colorer colors the pebble red. In all subsequent rounds, Pebbler places a pebble on an arbitrary empty vertex of $G$ and Colorer colors this new pebble either red or blue. The game ends when there is a vertex with a red pebble that is either a source vertex or all its direct predecessors in the graph have blue pebbles.
The price of the game on $G$,  $\RMc(G)$ is the smallest number~$r$ such that Pebbler has a strategy to make the game end in at most $r$ rounds regardless of how Colorer plays. Intuitively we can think or strategies for Pebbler thinking that he has to pay one point in every round, and he wants the game to finish as soon as possible.

We introduce in this paper a new price or measure in the red--blue game, the blue price of $G$ or $\Blue(G)$.  The game does not change, just the way to measure its price. This is just the number of pebbles that are colored blue before the game finishes.
We can imagine that Pebbler only has to pay for the blue pebbles and the game does not have to finish quickly, as long as not too many pebbles are colored blue. As we will see, this measure lies between the black and reversible pebbling costs, for every sDAG $G$, $\Black(G)\leq \Blue(G)\leq \Rev(G).$ We will also show graph families for which the first and second measures, or the second and third measures are asymptotically different.  Our motivation for introducing the new pebbling prize is its suitability as a tool for analyzing  the clause space complexity measure in some restrictions of the Resolution proof system.

Resolution is due to its simplicity and its many applications in the analysis of $\SAT$-solvers, the best studied proof system, and  
several complexity measures for the analysis of Resolution refutations like the size, the space or the clause width have been  defined in the last decades. 
In this paper, we will  concentrate on the clause space, \clspaceabbr,  defined as the number of clauses required simultaneously in a refutation. This measure gives an estimate of the minimum memory requirement needed in the refutation of a formula.
A different interpretation of the clause space in a Resolution refutation is the  cost of the black pebble game played on the refutation graph \cite{ET01SpaceBounds}. A  connection between pebbling and clause space that differs  conceptually from this one is the
fact that most of the known space bounds in Resolution have been proven for  the pebbling formulas or for its lifted versions. 


Many restrictions of the general Resolution  system have been introduced in order to restrict the search space for proofs, maybe at the cost of increasing the proof size (see e.g. \cite{KBL99}). Tree-like Resolution and  negative Resolution are two of the best known restrictions. In the first one,
the underlying refutation graph has to be a tree. This restriction has been intensively studied since it is essentially equivalent to the DPLL $\SAT$-solving algorithm.   
In the case of negative Resolution, in every Resolution step, one of the resolved clauses must contain only negative literals.  In terms of proof efficiency, negative Resolution lies between general and tree-like Resolution \cite{BEGJ00RelativeComplexity,BP07Complexity}. Both restrictions are sound and refutation complete.

\subsection{Related results}

It is known that general Resolution can be much more efficient than tree-like or negative Resolution in terms of size (number of clauses in a refutation) \cite{BIW04Near-optimalSeparation,BP07Complexity}. In the last reference it is also shown that  there are  formulas with  negative refutations of polynomial size, but only exponential size 
tree-like Resolution refutations \cite{BEGJ00RelativeComplexity,BP07Complexity}. However such a separation in the other direction does not exist, and negative Resolution  polynomially simulates tree-like Resolution. In fact it is not hard to see that for every unsatisfiable formula $F$ with $n$ variables and  with a tree-like resolution of size $S$, there is also a negative refutation of size  $nS$. 

Space separations between general Resolution and its restrictions and are much more modest. In~\cite{JMNZ12RelatingProofCplx,ToranW21}, 
the authors give examples of families of formulas with $O(n)$ variables that can be solved in clause space  $s$ (for several functions $s$) but require $\Theta(s\log n)$ tree-like Resolution space. We do not know of any result regarding  the  clause space of  negative Resolution refutations.

As mentioned before, space bounds in general Resolution for the lifted pebbling  formulas have been well studied. 
With considerable effort,  space lower bounds for the $\Peb_G[\vee]$ formulas, for the special cases of trees and pyramid graphs, matching the black pebbling number of the underlying graph,  were given in  
\cite{NH08TowardsOptimalSeparationSTOC}  (see also \cite{Nordstrom08}). 
The techniques were very much simplified in \cite{BN11UnderstandingSpaceFULLREF} with the use of $\Peb_G[\oplus]$ formulas.
The authors proved  that the black-white\footnote{Since our results do not use  this version of the game, we have not defined it here and refer the interested reader to the references.} pebbling cost lower bounds the space requirements for refuting these formulas, for every
DAG $G$. Later in \cite{Nordstrom12RelativeStrength} it was shown that this lower bound is optimal. The question of  whether this is also an upper bound,  as well as  the existence of a lower bound for the $\Peb_G[\vee]$ formulas that works for general DAGs, 
are interesting  open problems. 

For the case of tree-like Resolution it was shown in \cite{ToranW21} that the reversible pebbling cost of the underlying graph
$G$ exactly characterizes the space required for refuting $\Peb_G[\oplus]$.  For this result, yet another game, the Prover-Delayer game played on formulas was used. This game was introduced in \cite{PudlakI00} as a tool for proving size lower bounds in tree-like Resolution, and shown to exactly characterize tree-like space in \cite{ET03CombinatorialCharacterization}.

\subsection{Our results}

We use the concept of blue pebbling cost in order  to analyze and characterize the clause space complexity measure in tree-like and negative Resolution. 

In Section~\ref{sec:blue-tree-like} we prove that for any unsatisfiable
$F$, the tree-like Resolution space of $F$ exactly coincides with
the minimum blue pebbling cost of a refutation graph for $F$,
$$\treeclspaceref{F}  =\min_{\refof{\proofstd}{F}} \blue(\RefGraph).$$ 
Interestingly, the minimum is taken over all possible refutation graphs  of~$F$, not only over the trees. This improves a result in  \cite{ToranW21} where  the reversible pebbling number $\Rev(G_\pi)$ 
was used to 
approximate the clause space in tree-like Resolution, by a logarithmic factor. The result completely parallels the characterization 
of clause space in \cite{ET01SpaceBounds}, with tree-like instead of general Resolution and with the blue pebbling cost instead of the black cost.

The rest of the results are related to  the lifted pebbling formulas $\Peb_G[\vee]$ and $\Peb_G[\oplus]$.
In Section~\ref{sec:blue-v-tree-like} we show that the blue pebbling cost also  characterizes the space needed in the 
tree-like refutation of  $\Peb_G[\vee]$. For this we show that $\Blue(G)$ is a lower bound for the cost of the Prover--Delayer game played on $\Peb_G[\vee]$. This result improves the lower bound of $\Omega(\Black(G))$  for the cost of the Prover--Delayer game on $\Peb_G[\vee]$ given in \cite{BIW04Near-optimalSeparation}.

Section~\ref{sec:negative-res} is devoted to the space bounds for the refutation of the lifted pebbling formulas 
in N-Resolution. As in the case of  the existing results for general Resolution, we are not able to obtain matching upper and lower bounds for the clause space, but the difference between the bounds is not large. The most involved result is the space 
lower bound for the 
$\Peb_G[\vee]$ formulas. This follows the lines of the space lower bound for  $\Peb_G[\oplus]$ in general resolution
given in \cite{BN11UnderstandingSpaceFULLREF}, adapting the concepts in the proof to the context of N-Resolution, but with the difficulty that unlike $\oplus$, $\vee$ is a non-authoritarian function. 
We first show that the clause space  in a (general) refutation of $\Peb_G[\vee]$ is at least as large as the number of negated literals in something called the negative projection of the formulas, we then show that
from the negative projection it is possible to obtain an N-Resolution refutation of the non-lifted formula $\Peb_G$ with the
same negative width, and finally show that $\Black(G)$ is a lower bound for the negative width of  $\Peb_G$.
The space lower bound is a consequence of these three results.
For the upper bound, the  concept of the  blue pebbling number plays also a central role. 

The case of the space lower bound for the $\oplus$ lifted formulas is simpler, and uses  special partial assignment of the variables to flip the polarity of the variables in $\Peb_G$. We show that the space requirements in N-Resolution for the formulas obtained this way, are at least as large as the black pebbling price of the underlying graph.
A summary of the space upper and lower bound results for the refutation of  the lifted pebbling formulas can be seen in Table 1.


 \begin{tiny}
 \begin{table}[H]
 \centering
\begin{tabular}{ |c|c|c|c|}
\hline
& \textbf{Resolution}&\textbf{tree-like $\RES$}&\textbf{negative $\RES$}\\
\hline
\multirow{2}{*}{$\Peb_G[\vee]$}                 & {$O(\Black(G))$} &  \multirow{2}{*}{$\Theta(\Blue(G))^\dagger$}                   & {$O(\Blue(G))^\dagger$}                \\ 
  & Open &  &$\Omega(\Black(G))^\dagger$ \\
\hline
\multirow{2}{*}{$\Peb_G[\oplus]$}         & {$O(\Black(G))$}          & \multirow{2}{*}{$\Theta(\Rev(G))$}                & {$O(\Rev(G))^\dagger$}      \\ 
& $\Omega(\BW(G))$ & & $\Omega(\Black(G))^\dagger$\\
\hline
                                                                              
\end{tabular}
\bigskip
\caption{The known upper and lower bounds for the lifted pebbling formulas. The results marked with $\dagger$ have been obtained in this work.}
\end{table}
\end{tiny}

The results in the table show that for the case of the pebbling formulas,  the space requirements in tree-like or negative Resolution are not very different. We investigate in Section~\ref{sec:comparing} the space separations between the two Resolution restrictions for general formulas.
As mentioned,  the largest known difference between general and tree-like Resolution space is by a logarithmic factor. We also achieve this factor showing a family of formulas that can be refuted with constant space in N-Resolution, but require logarithmic 
space in T-Resolution.

In the other direction, we show that there are formulas with $n$ variables and  size  $O(n)$  with constant space refutations in T-Resolution, but which require space $\Omega(\frac{n}{\log n})$ in N-Resolution. This separation is almost best possible since, as
in general Resulution, 
$n+2$ is a clause space upper bound for a formula with $n$ variables in N-Resolution.
This separation  is remarkable because, as mentioned above, it is known
that in terms of size N-Resolution can simulate T-Resolution at the expense of a small size increase.

We conclude with some open problems. 

\section{Preliminaries}

\subsection{Pebble Games}
Black pebbling was first mentioned implicitly in~\cite{PH70Comparative}.
Note that there exist several variants of the (black) pebble game in the literature.
For a detailed description of these variants, we refer to~\cite{Nordstrom09PebblingSurvey}. 
For the following definitions, let $G = (V,E)$ be a DAG with a unique sink vertex $z$.
\begin{definition}[Black pebble game]
    The \emph{black pebble game} on $G$ is the following one-player game:
At any time $\pebtimestep$ of the game, there is  a \emph{pebble configuration} $\PebConfig_\pebtimestep := B_\pebtimestep$,
where $B_\pebtimestep \subseteq V$ is the set of black pebbles.
A black pebble configuration $\PebConfig_{\pebtimestep-1} = B_{\pebtimestep-1}$ can be changed to $\PebConfig_\pebtimestep = B_\pebtimestep$ by applying exactly one of the following rules:
\begin{description}
    \item[Black pebble placement on $v$:] If all direct predecessors of an empty vertex $v$ have pebbles on them, a black pebble may be placed on $v$. 
	\item[Black pebble removal from $v$:] A black pebble may be removed from any vertex at any time. 
\end{description}
A \emph{black pebbling} of $G$ is a sequence of pebble configurations $\Pebbling = (\PebConfig_0,\PebConfig_1,\dots,\PebConfig_\pebendtime)$ such that $\PebConfig_0 = \PebConfig_\pebendtime = 
\emptyset,$ for some $i\leq \pebendtime,$ $z\in B_i$, and for all $\pebtimestep \in [\pebendtime]$ it holds that $\PebConfig_\pebtimestep$ can be obtained from $\PebConfig_{\pebtimestep-1}$ by applying one of the above-stated rules.
Observe that w.l.o.g. we can always assume that $B_{\pebendtime-1}=\{z\}$. \\
For convenience, we will also use the dual notion of  \emph{white pebble} game.
A {white pebbling} is a pebbling where $\PebConfig_\pebtimestep := W_\pebtimestep$ with $W_\pebtimestep \subseteq V$ is a set of white pebbles for all $\pebtimestep \in [\pebendtime]$. 
A \emph{white pebbling} is defined similarly, but the placement and removal rules are switched.\\
Notice that  $\Pebbling = (\PebConfig_0,\PebConfig_1,\dots,\PebConfig_\pebendtime)$ is a black pebbling of $G$ if and only if 
$\Pebbling' = (\PebConfig'_t,\dots,\PebConfig'_0)$ is a white
pebbling of $G$, where  each configuration $\PebConfig'_i$ contains the same 
set of pebbled vertices as in $\PebConfig_i$, but with white pebbles instead 
of black pebbles. In a white pebbling we can always suppose that $W_1=\{z\}.$
\end{definition}

\begin{definition}[Pebbling space, and price]
	The \emph{space} of a pebbling 
    $\Pebbling = (\PebConfig_0,\PebConfig_1,\dots,\PebConfig_\pebendtime)$  
     $\pebspace(\Pebbling) := \max_{\pebtimestep \in \nat{\pebendtime}} | \PebConfig_\pebtimestep |$.
	The \emph{black pebbling price} of~$G$, denoted by $\Black(G)$, is the minimum space of any black pebbling of~$G$. By the observation above, 
 the white pebbling price $\White(G)$ coincides with $\Black(G)$
\end{definition}
We also consider  the reversible pebble game introduced in~\cite{Bennett89TimeSpaceReversible}. In the reversible pebble game, the moves performed in reverse order should also constitute a legal black pebbling, which means that the rules for pebble placements and removals are symmetric. A pebble can only be removed if all its predecessors have pebbles on them.
This implies that reversible pebbling is a restricted
version of black pebbling.
The notion of 
reversible pebbling price, $\Rev(G)$ is defined as in the other pebbling variants.

The red--blue game is played on DAGs and was introduced in \cite{RM99Separation} as a tool for studying the depth complexity of decision trees for search problems. Contrary to the previous pebble games, this one is a two-player game.

\begin{definition}[Red--blue game]
    \label{def:RazMcKenzieGame}
    The \emph{red--blue game} is played on a sDAG $G$ by two players, \emph{Pebbler} and \emph{Colorer}. The game is played in rounds, where Pebbler and Colorer alternate. In the first round, Pebbler places a pebble on the sink and Colorer colors the pebble red. In all subsequent rounds, Pebbler places a pebble on an arbitrary empty vertex of $G$ and Colorer colors this new pebble either red or blue. The game ends when there is a red pebbled vertex  that is either a source vertex or all its direct predecessors in the graph have blue pebbles.
\end{definition}

\begin{definition}[Red--blue price]
    \label{def:RazMcKenziePrice}
    The \emph{red--blue price} $\RMc(G)$ of a sDAG~$G$ is the smallest number~$r$ such that Pebbler has a strategy to make the game end in at most~$r$~rounds regardless of how Colorer plays.
\end{definition}

In~\cite{Chan13JustAPebble} it was shown that the reversible pebbling price and the red--blue price coincide for any single-sink DAG.

\begin{theorem}[\cite{Chan13JustAPebble}]
    For any single-sink DAG $G$ it holds $\RMc(G) = \Rev(G)$. 
\end{theorem}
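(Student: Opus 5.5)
The plan is to go through a recursively defined intermediate quantity and show that both $\RMc$ and $\Rev$ equal it. For a vertex $r$ and a set $B\subseteq V$ of ``free'' vertices, let $f(r,B)=1$ if $r$ is a source or all direct predecessors of $r$ lie in $B$. Otherwise let
$$f(r,B)=1+\min_{v\notin B\cup\{r\}}\max\bigl(f(r,B\cup\{v\}),\,f(v,B)\bigr).$$
Analogously, let $\Rev(r,B)$ be the minimum space of a reversible pebbling that starts empty, ends with exactly $r$ pebbled, and treats the vertices of $B$ as permanently pebbled at no cost. Then $\Rev(G)=\Rev(z,\emptyset)$ up to the convention of also unpebbling $z$ at the end, which is free by time reversal.

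\textbf{Step 1: the red--blue price equals $f$.} Assume a position is described by the most recent red vertex $r$ and the set $B$ of blue vertices. I first show that Pebbler may, without loss of generality, always play inside the predecessor cone of the most recent red vertex and ignore older red vertices. Blue pebbles only help the termination condition, and a newly placed red vertex $v$ is a fresh subgoal that already suffices to end the game. After this reduction, Pebbler's choice of $v$ and Colorer's two answers match the recursion exactly. If $v$ is colored blue, the game continues from $(r,B\cup\{v\})$; if $v$ is colored red, it continues from $(v,B)$. Hence $\RMc(G)=f(z,\emptyset)$.

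\textbf{Step 2: $\Rev(r,B)\le f(r,B)$.} I prove this by induction following the recursion. In the base case, a single pebble on $r$ suffices. In the inductive case, let $v$ be the minimizing vertex, and proceed in four phases:
\begin{enumerate}
\item Reversibly pebble $v$ using $\Rev(v,B)$ pebbles.
\item Keeping $v$, run the pebbling for $(r,B\cup\{v\})$, using at most $1+\Rev(r,B\cup\{v\})$ pebbles.
\item With $r$ now held, run the reversal of the pebbling of $v$ to remove $v$, using at most $1+\Rev(v,B)$ pebbles.
\item Observe that reversibility is preserved, since every phase is a legal reversible pebbling or the reversal of one, and the free vertices $B$ are never touched.
\end{enumerate}
The total space is therefore at most $1+\max(\Rev(v,B),\Rev(r,B\cup\{v\}))\le f(r,B)$.

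\textbf{Step 3: $f(r,B)\le\Rev(r,B)$.} This is the main obstacle. Given an optimal reversible pebbling $P$ of $r$ relative to $B$ with space $s$, I must exhibit a vertex $v$ satisfying two conditions:
\begin{itemize}
\item $v$ can be reversibly pebbled relative to $B$ with $s-1$ pebbles;
\item $r$ can be reversibly pebbled relative to $B\cup\{v\}$ with $s-1$ pebbles.
\end{itemize}
My first attempt is to normalize $P$, using reversibility, so that it has the form ``pebble some $v$, keep it, pebble $r$, unpebble $v$''. I would choose $v$ as the vertex whose pebble is placed earliest among those still present when $r$ is placed, and which then remains on the graph until the end.
\begin{itemize}
\item For the second condition: running $P$ with $v$ declared free saves one pebble at every time step after $v$ is placed. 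Before $v$ is placed, the prefix of $P$ has to be checked to stay below $s$; the normalization should guarantee this.
\item For the first condition: the prefix of $P$ up to the placement of $v$, followed by its own reversal, pebbles $v$. Its space must be shown to be at most $s-1$. This should follow because at the peak moment of $P$ either $r$ or $v$ is already counted.
\end{itemize}
If this choice of $v$ fails, I would instead pick $v$ at the first time the space reaches $s$, and exploit the time-reversal symmetry of $P$ to handle the second half.

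Combining the three steps gives $\RMc(G)=f(z,\emptyset)=\Rev(G)$.
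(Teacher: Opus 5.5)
The paper does not prove this statement; it quotes it from Chan. So your argument has to stand on its own.

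\textbf{The main gap: Step 3.} Steps 1 and 2 are fine in substance, and Step 2 is the easy direction $\Rev\le\RMc$. Step 3 is the whole nontrivial direction $\RMc\le\Rev$, and it is not proved. You never show how to extract, from an \emph{arbitrary} space-$s$ reversible pebbling, a vertex $v$ with $\Rev(v,B)\le s-1$ and $\Rev(r,B\cup\{v\})\le s-1$. The justification you sketch is false: before $v$ is placed neither $v$ nor $r$ is on the graph, and nothing stops an optimal pebbling from using $s-1$ pebbles there.

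For a concrete example, take $r$ with predecessors $a$ (a source) and $c$, where $c$ has source predecessors $c_1,c_2$. Then $s=4$. An optimal pebbling may start with the useless excursion $\{a\},\{a,c_1\},\{a,c_1,c_2\},\{a,c_2\},\{a\},\emptyset$ and only then do the real work.
\begin{itemize}
\item With your choice $v=c$, the prefix before $c$ is placed contains $\{a,c_1,c_2\}$. So ``prefix, then its reversal with $c$ kept'' uses $4=s$ pebbles, not $s-1$. Here $\Rev(c)=3$ does hold, but your argument does not show it.
\item If ``placed earliest'' means first placement, you get $v=a$. Then condition (ii) genuinely fails, since $\Rev(r,\{a\})=4$.
\end{itemize}
The normalization you invoke to exclude such behaviour is to rewrite $P$ as ``pebble $v$, keep it, pebble $r$ relative to $v$, unpebble $v$'', with at most $s-2$ pebbles before $v$ appears. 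That normalization is equivalent to the recursion $\Rev(r,B)=1+\min_v\max(\Rev(v,B),\Rev(r,B\cup\{v\}))$, which is exactly what Step 3 must prove, so it cannot be assumed. The fallback also fails as stated. In a persistent pebbling, the first time $s$ pebbles are used can be the placement of $r$ itself: take the path $a\to b\to z$ with $\{a\},\{a,b\},\{a,b,z\},\{a,z\},\{z\}$. What is missing is a genuine exchange or decomposition argument for reversible pebblings.

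\textbf{Two smaller points.}

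First, the equality holds only for the \emph{persistent} price, i.e.\ pebblings ending in $\{z\}$, which is what your $\Rev(z,\emptyset)$ measures. With the paper's convention that pebblings end empty, time reversal gives only one inequality, and the other one fails. For the path $a\to b\to z$ one has $\RMc=3$, yet $\{a\},\{a,b\},\{b\},\{b,z\},\{b\},\{a,b\},\{a\},\emptyset$ is a legal reversible pebbling with two pebbles. You should state the result for the persistent price.

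Second, in Step 1 your justification for ignoring old red vertices covers only Pebbler's side. For Colorer, let $m$ be the current value of $\min_{r\in R}f(r,B)$ over all red vertices $R$. Let her color the queried $v$ blue exactly when $\min_{r\in R}f(r,B\cup\{v\})\ge m-1$. You then need to show that this quantity drops by at most one per round. Also restrict the recursion to strict ancestors $v$ of $r$. As written, $f(r,B)$ and $f(v,B)$ are defined in terms of each other, and Step 2 needs the pebbling of $v$ never to touch $r$.
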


Finally, we define a game introduced in \cite{PudlakI00} for the analysis of resources in tree-like Resolution. This game is played on formulas.

\begin{definition}[Prover--Delayer game]
	\label{def:ProverDelayerGame}
	The \emph{Prover}-\emph{Delayer} \emph{game} 
	between two players, called \emph{Prover}, and \emph{Delayer}, is
	played on an unsatisfiable CNF formula~$F$ in rounds.
	Each round starts with Prover querying the value of a variable.
	Delayer can give one of three answers:~$0$,~$1$,~or~$\ast$.
	If~$0$ or~$1$ is chosen by Delayer, no points are scored by her and the queried variable is set to the chosen value. 
	If Delayer answers~$\ast$, then Prover gets to decide the value of that variable, and Delayer scores one point.
	The game finishes when any axiom in $F$ has been falsified (all its literals are set to~$0$) by the partial assignment constructed this way.
	If this is not the case, the next round begins. The aim of Delayer is to win as many points as possible, while Prover aims to minimize{} this quantity.
\end{definition}

\begin{definition}[Game value of the Prover--Delayer game]
	Let $F$ be an unsatisfiable CNF formula.
	The game value of the Prover--Delayer game played on~$F$, denoted by $\PD(F)$, is the greatest number of points Delayer can score on~$F$ against an optimal strategy of Prover.
\end{definition}

The Prover--Delayer game  exactly characterizes{} the tree-like clause space of a formula. 

\begin{theorem}[\cite{ET03CombinatorialCharacterization}]
	\label{thm:TreeClauseSpaceProverDelayerCharacterization}
	For any unsatisfiable CNF formula $F$,
$\treeclspaceref{F} = \PD(F)+2.$
\end{theorem}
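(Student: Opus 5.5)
We plan to prove both inequalities through a recursive description of tree-like clause space and of the Prover--Delayer value, matching the two recursions at each queried variable. The first step is a recursive formula for the clause space of a fixed tree-like refutation $\proofstd$. Viewed as a binary tree, the optimal way to hold clauses is a Strahler-type strategy: derive the more expensive premise first, keep it in memory, then derive the other premise, then resolve. Writing $s(C)$ for the optimal space of the subtree rooted at a clause $C$ with premises $C_1,C_2$, we get $s(C)=1$ for an axiom and
$$s(C)=\max\{s(C_1),\, s(C_2)+1,\, 3\}\quad\text{when } s(C_1)\ge s(C_2).$$
Hence $s(C)=\max(s(C_1),s(C_2))$ when the two values differ, and $s(C)=s(C_1)+1$ when they are equal (at least $3$). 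Proving that this order is optimal is a routine exchange argument. Then $\treeclspaceref{F}$ is the minimum of $s(\emptycl)$ over tree-like refutations.

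For $\PD(F)\ge \treeclspaceref{F}-2$, we give Delayer a strategy driven by restrictions. For a partial assignment $\rho$ let $t(\rho)=\treeclspaceref{F|_\rho}$. When $x$ is queried, Delayer compares $t(\rho\cup\{x=0\})$ and $t(\rho\cup\{x=1\})$. If they differ, she answers with the value giving the larger one and scores no point. If they are equal, she answers $\ast$ and scores one point.

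The key lemma is that
$$t(\rho)\le \max\{t_0,t_1\}\ \text{ if } t_0\ne t_1, \qquad t(\rho)\le t_0+1\ \text{ if } t_0=t_1,$$
where $t_b=t(\rho\cup\{x=b\})$. To prove it, we take tree-like refutations of $F|_{\rho,x=0}$ and $F|_{\rho,x=1}$ and add back the literal $x$, respectively $\neg x$, to the clauses of each. This turns them into tree-like derivations of $x$ and $\neg x$ from $F|_\rho$ with the same space, possibly after pruning branches that became unnecessary. A final resolution step then gives $\emptycl$, and the recursion above bounds the space.

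With this lemma, the quantity $t(\rho)+\text{points}$ never decreases during play. The game ends at a falsified axiom, where the restricted formula contains $\emptycl$ and has space $t=1$. Reaching that leaf from the root with $t\le 2$ would be impossible for an unsatisfiable formula needing a resolution step, and the extra constant is absorbed by the $\max\{\cdot,3\}$ term. Careful bookkeeping of the base cases should give exactly $\treeclspaceref{F}-2$ points.

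For $\PD(F)\le \treeclspaceref{F}-2$, we fix an optimal tree-like refutation $\proofstd$ and let Prover walk down it from the root. At a clause $C$ resolved on $x$ from $C_1\ni x$ and $C_2\ni\neg x$, Prover queries $x$. If Delayer answers $0$ or $1$, Prover moves to the premise falsified by the current assignment. If she answers $\ast$, Prover picks the value that sends him into the premise of smaller $s$. The invariant is that the current clause is falsified by the current assignment and that points plus $s(\text{current clause})$ is at most $s(\emptycl)$. Answers $0/1$ never increase $s$. An answer $\ast$ costs one point but decreases $s$ by at least one, because if the premises have different $s$ we enter the strictly smaller one, and if they are equal the parent value is $s+1$. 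The walk ends at an axiom, which is falsified, with $s=1$, which again gives the bound up to the base-case constant.

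We expect the main obstacle to be the exact additive constant $+2$. Off-by-one issues come from the $\max\{\cdot,3\}$ term, from trivial formulas, and from the fact that in the first direction the restriction lemma needs pruning, which might change the structure near the leaves. We will first verify the recursion and both invariants on the smallest cases, such as $\{x,\neg x\}$ and formulas containing $\emptycl$, and fix the normalization there before running the two inductions.
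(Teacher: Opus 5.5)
The paper does not prove this statement; it imports it from \cite{ET03CombinatorialCharacterization}, so there is no in-paper argument to compare with. Judged on its own, your outline is a correct proof. Your recursion for $s$ is the game recursion $\PD(F|_\rho)=\min_x\max\{\PD_0,\PD_1,\min\{\PD_0,\PD_1\}+1\}$ shifted by $2$. The mismatch at a falsified axiom ($t=1$ versus $\PD=0$) is compensated exactly by the $3$ in $\max\{\cdot,3\}$. The constant does come out as $+2$, but not literally through your lemma and invariants as stated, so here is how to close it.

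\textbf{Delayer direction.} Your key lemma fails exactly when $t_0=t_1=1$. Then $F|_\rho$ contains the units $x$ and $\bar x$ but not $\emptycl$, so $t(\rho)=3=t_0+2$. This is harmless, because under your strategy it is the only way the game can end. Indeed, $t$ takes values in $\{1,3,4,\dots\}$, and $t(\rho)=1$ iff an axiom is falsified. In every other case (distinct values, or equal values $\ge 3$) Delayer's answer leads to a position with $t\ge 3$. Hence $t(\rho)+\text{points}\ge\treeclspaceref{F}$ holds up to the position just before the final query, where $t=3$. So Delayer has at least $\treeclspaceref{F}-3$ points there, and the final $\ast$ brings her to $\treeclspaceref{F}-2$.

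\textbf{Prover direction.} Evaluating your invariant at the final axiom gives only $s(\emptycl)-1$. Evaluate it instead just before the last query. That query is made at a resolution node with $s\ge 3$, so Delayer has at most $s(\emptycl)-3$ points before it and at most $s(\emptycl)-2$ after it. When the pivot is already assigned, or at a weakening step, Prover simply moves to the falsified premise without querying.

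\textbf{The hypothesis $\emptycl\notin F$.} The Prover bound uses $\emptycl\notin F$, so that at least one query is made. The identity genuinely needs this: for $F=\{\emptycl\}$ the tree-like space is $1$, while $\PD(F)=0$.

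\textbf{The fact $\treeclspaceref{F}=\min_\proofstd s(\emptycl)$.} Both directions use this in its lower-bound form. Every configurational execution of a refutation tree, possibly interleaving the two subderivations, holds at least $s$ clauses at some moment. This is not just an exchange argument about the order of the premises. It is the standard black-pebbling lower bound for binary trees: take, for each premise subtree, the last moment before the root is derived at which it holds no clause. Then apply induction to the subtree emptied later, during whose re-derivation the other subtree is nonempty. Since this fact is well known, citing it is fine.
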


\subsection{The blue pebbling price}

We consider a new way to define the  price in the red--blue pebble game. This is  just the number of pebbles in the game that are colored blue by Colorer (+1). 

\begin{definition}[Blue pebble price]\label{def:blue-cost}
    The \emph{blue pebbling number} of a single sink DAG $G$,  $\Blue(G)$ is the smallest number $b+1$ such that Pebbler has a strategy
    to make the red--blue game on $G$ end,  with at most $b$ blue pebbles, regardless of Colorer's strategy. The extra 1 is added for technical reasons in order to make the results cleaner. For example, if $G$ has just one vertex we will say $\Blue(G)=1$ and if
    $G$ consists of one edge, then $\Blue(G)=2$.
    \end{definition}

Since in the blue pebbling price the cost of the red moves in the $\RMc$ game is not counted, the blue cost cannot be larger than the reversible pebbling number. We will show that this number cannot be smaller than the black pebbling number and therefore
$\Black(G)\leq \Blue(G)\leq \Rev(G).$
We will provide examples of graph families $\{G_n\}$ and $\{H_n\}$
for which there is an asymptotic difference between the values of $\Rev(G_n)$ and $\blue(G_n)$,
as well as between   $\Black(H_n)$ and $\blue(H_n)$.

\subsection{Formulas}

We consider only propositional formulas in conjunctive normal form, CNF. A literal is a variable or its negation. 
For a set of variables $\{x_1,\dots,x_n\}$, a clause is a (non-tautological) disjunction of literals, and it is represented
by a set of literals. A formula is a conjunction of clauses, and it is represented by the set of its clauses. We denote by $\vars(F)$
the set of variables in $F$.

For a partial assignment  $\alpha:\vars(F)\rightarrow \{0,1\}$, and a clause $C\in F$,
$C\!\restriction\!_\alpha$ denotes the result of substituting in $C$ the variables assigned in $\alpha$ by its value. If some literal in $C$ is assigned value 1 in $\alpha$, then $C\!\restriction\!_\alpha=1$ otherwise $C\!\restriction\!_\alpha$ is the clause
after deleting all literals assigned value 0 by $\alpha$. For a set of clauses $F$, $F\!\restriction\!_\alpha$
denotes the result of applying $\alpha$ to all the clauses in $F$.

\begin{definition}[Pebbling formulas]
Let $G=(V,E)$ be a DAG with a set of sources $S \subseteq V$ and a unique sink $z$. We identify every vertex $v \in V$ with the Boolean variable $v$ (and the lifted variables $v_1$ and $v_2$). For a vertex $v\in V$ we denote by  $\pred(v)$ the set of its direct predecessors. In particular, for a source vertex $v$, $\pred(v)=\emptyset$. The \emph{pebbling contradiction} over $G$, denoted $\Peb(G)$, is the conjunction of the following clauses:
	\begin{itemize}
		\item for all  vertices $v$, the clause 
  $\bigvee_{u \in \pred(v)} \bar u \lor v$, 
         \hfill (\emph{pebbling axioms})
		\item 
  for the unique sink $z$, the unit clause $\bar z$. \hfill (\emph{sink axiom})
	\end{itemize}
\end{definition}
A well known method to make  formulas  harder to refute is to substitute some suitable Boolean function~$f(x_1,\dots,x_{d})$ of arity $d$ for each variable $x$ and expand the result into $\CNF$ ($x_1,\dots,x_{d}$ are new variables). We say that the function $f$ is \emph{non-authoritarian} if no partial assignment of any subset  of less than $d$ variables can fix the value of $f$ to 0 or 1, and $f$ is authoritarian otherwise. 
We restrict ourselves to the arity $d=2$ and the special cases of $f(x_1,x_2)=x_1\vee x_2$
and $f(x_1,x_2)=x_1\oplus x_2.$ Observe that  $\oplus $ is non-authoritarian, while $\vee$ is an authoritarian function.

\begin{definition}[Substitution formulas]
    For a $\CNF$ formula $F$  and a function $f\in\{\vee,\oplus\}$ we define $F[f]$ to be the formula resulting from 
    substituting every variable $x$ in $F$ by $f(x_1,x_2)$ and writing the result as a $\CNF$.
    \end{definition}
    
    For example, for the clause $C=(x\vee\bar y)$, $C[\vee]$ is the conjunction of the clauses $(x_1\vee x_2\vee \bar y_1)$ and 
    $(x_1\vee x_2\vee \bar y_2)$, while $C[\oplus]$ is defined by the four clauses
    $ (x_1 \lor x_2\lor   {y_0} \lor \bar{y_2}), (x_1 \lor x_2\lor  \bar y_1 \lor {y_2}),(\bar x_1 \lor \bar  x_2\lor   {y_1} \lor \bar{y_2}),$ and 
    $(\bar x_1 \lor \bar x_2\lor\bar    {y_1} \lor {y_2})$.

\subsection{Resolution basics}
\begin{definition}[Resolution]
    Let $F$ be a CNF formula. A \emph{Resolution derivation} of a clause $C$ in $F$ (write $F \vdash C$) is defined as a sequence of clauses $\pi= (C_1,...,C_m)$ such that $C_m = C$ and for all $i=1,...,m$, $C_i$ is either 
    \begin{itemize}
        \item a clause in the formula $F$ (an axiom),
        \item $C_i\supseteq C_j$ with $j<i$  (a weakening of a previous clause), or
        \item is obtained by applying the  Resolution rule to some $C_j$ and $C_k$, $j,k< i$:
        \begin{equation*}
            \frac{A\lor x \qquad B\lor \overline{x}}{A\lor B}.
        \end{equation*}
        In this last case $A\vee B$ is called the resolvent of the previous two clauses.
    \end{itemize}
    When $C$ is the empty clause $\square$, the proof is called \emph{refutation} of $F$ (write $F\vdash\Box$).
\end{definition}

The weakening rule is not strictly necessary but simplifies some proofs. 
A Resolution derivation $\pi$ of a clause $C$ can be represented as an inference DAG $G_\pi$ in which each clause is identified with a vertex in $G_\pi$, and there is an edge from a clause
$A$ to another $B$, whenever $B$ is a resolvent of  $A$. The vertex corresponding to the axioms do not have any predecessors,
and the derived  clause $C$  is a sink in $G_\pi$.

For the study of space in Resolution, we consider configurational proofs. 

\begin{definition}[Configuration-style Resolution]
\label{def:ConfigurationStyleResolution}
A \emph{configurational derivation } of a clause $C$ from a CNF formula~$F$ is an ordered sequence of \emph{memory configurations} (sets of clauses)~$\proofstd = (\MemConfig_0,\dots,\MemConfig_{\resendtime})$ such that $\MemConfig_0 = \varnothing$, $C \in \MemConfig_{\resendtime}$ and for each $\restimestep \in \nat{\resendtime}$, the configuration~$\MemConfig_{\restimestep}$ is obtained from $\MemConfig_{\restimestep-1}$ by applying exactly one of the following rules:
\begin{description}
\item[Axiom Download:] $\MemConfig_\restimestep = \MemConfig_{\restimestep-1} \cup \{D\}$ for some axiom clause $D \in F$.
\item[Erasure:] $\MemConfig_\restimestep = \MemConfig_{\restimestep-1} \setminus \{D\}$ for some $D \in \MemConfig_{\restimestep-1}$.
\item[Inference:] $\MemConfig_\restimestep = \MemConfig_{\restimestep-1} \cup \{D\}$ for some resolvent $D$ inferred from $C_1, C_2 \in \MemConfig_\restimestep$ by the resolution rule.
\end{description}
\end{definition}


\begin{definition}[Complexity measures for Resolution]
\label{def:ClauseSpace}
The \emph{size} of a Resolution derivation $\proofstd = (C_1,\dots,C_t)$,  $\size(\pi)=t$ is the number of clauses used in the derivation.

The \emph{clause space} of a memory configuration $\MemConfig$, $\clspaceof{\MemConfig} = | \MemConfig|$, is  
the number of clauses in $\MemConfig$. 
The clause space of a configurational refutation $\proofstd = (\MemConfig_0,\dots,\MemConfig_{\resendtime})$ is defined by $\clspaceof{\proofstd} := \max_{\restimestep \in \nat{\resendtime}} \clspaceof{\MemConfig_\restimestep}$.

Taking the minimum over all refutations of a formula $F$, we define $\size({F\vdash }) := \min_{\refof{\proofstd}{F}} \size({\proofstd})$, and  $\clspaceref{F}:= \min_{\refof{\proofstd}{F}} \clspaceof{\proofstd}$ 
 as the length and  clause space of refuting $F$ in Resolution.
\end{definition}

In this work, we consider the following restrictions of  general Resolution.

\begin{definition}[Tree-like Resolution]
    A \emph{tree-like Resolution} derivation of a clause $C$ from a formula $F$ is a Resolution derivation on $C$ in which the inference  graph is a tree. This implies  that a clause has to be re-derived if it is used more than once in a proof.
\end{definition}
\begin{definition}[Negative and  positive Resolution]
    A \emph{negative Resolution} derivation of a clause $C$ from a  formula $F$ is a Resolution derivation on $C$ in which in every 
    inference  step one of the  parent clauses  consists only of negative literals.\\
    In  the dual concept of \emph{positive Resolution}  in each inference step  a parent clause  must  have only positive literals.
\end{definition}

The size and space complexity measures are extended to the Resolution restrictions in the natural way. For an unsatisfiable formula $F$, we denote clause space of  $F$ in tree-like and negative Resolution by 
 $\treeclspaceref[]{F}$ and $\Nclspaceref[]{F}.$  For the analysis of proofs in negative Resolution we also use the concept of 
 negative width.

 \begin{definition}[Negative width]\label{def:negative-width}
   The negative width of a clause $C$ is the number of negated literals it contains. The negative width of a refutation $\pi$ is the maximum of the 
   negative width of its clauses, and the negative width of an unsatisfiable formula $F$ is the minimum of the negative width of an N-Resolution of $F$.
\end{definition}



\section{The blue pebbling cost}\label{sec:blue-number}

For a sDAG $G=(V,E)$,  a set $B\subseteq V$ and $t\in V\setminus B$,  $G_{B,t}$ denotes 
a version of the graph $G$ in which the vertices in $B$ have been colored blue, and  $t$ is  colored red. We will also assume that all vertices in $V$ that do not reach $t$ are deleted from $G$.
$\Blue(G_{B,t})$ is the blue pebbling number when the game is played on the partially colored graph $G_{B,t}$, and 
the vertices in $B$ do not count as points 
that Pebbler has to pay in the game on $G_{B,t}$. If $t$ is a source node in $G$, $t\in B$ or all predecessors of $t$ are in $B$, then $\Blue(G_{B,t})=1$ (see Definition~\ref{def:blue-cost}).

\begin{lemma}\label{lem:blue-pebbling}
\begin{enumerate}
\item For $B\subseteq V$, $t\in V$ for every $v\in V\setminus (B\cup \{t\}),$  $$\Blue(G_{B,t})\leq \max\{\Blue(G_{B,v}),  
    \Blue(G_{B\cup \{v\},t})+1\}, \hbox{ and }$$ 
    \item there is some $v\in V\setminus (B\cup \{t\}),$ such that  $\Blue(G_{B\cup\{v\},t})< \Blue(G_{B,t}).$  
\end{enumerate}
\end{lemma}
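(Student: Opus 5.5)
The plan is to unfold the definition of $\Blue(G_{B,t})$ as the value of a game tree. A position is a partially colored graph: some vertices are blue, some red, and the current target $t$ is red. Pebbler pays $1$ for each blue answer by Colorer. First I will note a monotonicity fact: adding blue vertices to $B$, or replacing the red target $t$ by any red vertex (after discarding vertices that do not reach it), describes a subgame. In such a subgame the only relevant red vertex is $t$. This is because a red vertex $u$ reached by $t$ through red vertices can be treated as the new target: any vertex that ends the game is a red vertex all of whose predecessors are blue, and we may always restrict attention to the red vertex most recently placed. This justifies the notation $G_{B,t}$, in which a position is described by the blue set $B$ and a single red target $t$.

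For part 1, I will describe a strategy for Pebbler on $G_{B,t}$. He first places a pebble on $v$. If Colorer colors $v$ red, the play continues on a position in which $v$ is red and every vertex of $B$ is blue. Pebbler then follows an optimal strategy for $G_{B,v}$, restricted to the vertices that reach $v$. This costs at most $\Blue(G_{B,v})-1$ blue pebbles, and the game ends at a red vertex whose predecessors are all blue. If Colorer colors $v$ blue, Pebbler has paid one point and the position is $G_{B\cup\{v\},t}$. Playing optimally there costs at most $\Blue(G_{B\cup\{v\},t})-1$ further points, for a total of at most $\Blue(G_{B\cup\{v\},t})$. Adding the $+1$ of the definition gives the claimed maximum. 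The only care needed is that the vertices Pebbler queries in the subgame are empty in the real game. This holds because the subgame's vertex set is contained in the vertices still uncolored, and the vertices already pebbled are exactly those of $B$ together with the targets.

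For part 2, I will take an optimal strategy for Pebbler on $G_{B,t}$ and assume $\Blue(G_{B,t})>1$; otherwise the statement is vacuous, or I would need the convention that such a $v$ exists only in this nontrivial case. Pebbler's strategy begins with some first query $v\notin B\cup\{t\}$. Colorer may answer blue, after which the remaining game from $G_{B\cup\{v\},t}$ can cost at most $\Blue(G_{B,t})-1-1$ further blue pebbles. Hence $\Blue(G_{B\cup\{v\},t})\le \Blue(G_{B,t})-1$. One subtlety arises when the optimal first query lies in a part of the graph irrelevant to $t$. Since $G_{B,t}$ deletes the vertices that do not reach $t$, every query lies in the graph, so this case causes no trouble.

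The main obstacle is the reduction in the first step: justifying that a game position with several red vertices is no harder than the single-target position $G_{B,u}$ for the latest red vertex $u$. Equivalently, I must show that Colorer never benefits from the older red vertices. I would argue this directly by following Pebbler's strategy for the subgame and noting that the terminal condition for $u$ implies termination of the whole game.
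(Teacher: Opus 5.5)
Your proposal is correct and follows the paper's proof essentially step for step. Part 1 has Pebbler query $v$ first and splits on Colorer's answer; the red branch is handled by playing the single-target game $G_{B,v}$, which is legitimate because $v$ reaches $t$, so by acyclicity $t$ is not a vertex of $G_{B,v}$ and every queried vertex is empty. Part 2 takes the first query of an optimal Pebbler strategy. Your caveat that part 2 needs $\Blue(G_{B,t})>1$ is a correct refinement that the paper omits, though when $\Blue(G_{B,t})=1$ the claim is actually false rather than vacuous.
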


\begin{proof}
   For the first part,  the optimal strategy of Pebbler cannot be worse than  choosing $v$ as the  first vertex to pebble in 
   $\Blue(G_{B,t})$.
If Colorer answers blue, then Pebbles pays one point and the game continues on $\Blue(G_{B\cup\{v\},t})$ (this is the second term).
If Colorer answers red, no point is paid,  the game continues with the same set of blue vertices, and with the two red vertices $v,t$. The price of reaching a contradiction in this game cannot be larger than the price of the game on the graph with the unique red vertex $v$. 

For the second part, we can define a decision tree $T$ corresponding to 
an optimal strategy for Pebbler  
for the game on $\Blue(G_{B,t}).$ 
The vertices in the tree are labeled with vertices in $G$ and from each internal vertex in $T$ come out two edges,
corresponding to the two possible answers red or blue, of Colorer for $v$. Every path in the tree corresponds to a sequence of 
answers of Colorer and the leaves correspond to end positions in the game. Let $b=\Blue(G_{B,t})$. Every path from the root to a leaf in $T$ have at most $b-1$ edges labeled blue.  
The vertex at the root of the decision tree is such a vertex $v$ satisfying the inequality, since if $v$ is colored blue, we  move 
to a subtree in $T$ with from which the maximum number of blue edges in a path to a leaf is $b-2$. 
\end{proof}

\begin{lemma}\label{black<blue}
For any sDAG $G=(V,E)$, $\Black(G)\leq \Blue(G)$.
\end{lemma}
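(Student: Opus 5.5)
We prove the stronger statement that for every $B\subseteq V$ and $t\in V\setminus B$ there is a white pebbling strategy with at most $\Blue(G_{B,t})$ pebbles that starts with a white pebble on $t$ and ends with white pebbles only on vertices of $B$. Applying this with $B=\emptyset$ and $t=z$ gives a white pebbling of $G$ of space $\Blue(G)$, and since $\White(G)=\Black(G)$ the lemma follows. Throughout, the pebbling is done in the graph $G_{B,t}$, i.e.\ restricted to the vertices that reach $t$.

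In the white game, a white pebble on $v$ may be removed once all of $\pred(v)$ carry white pebbles, and white pebbles may be placed anywhere. The invariant we maintain is that the white pebbles on vertices of $B$ stay permanently, while $t$ carries the current ``red'' pebble.

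The proof is by induction on $b=\Blue(G_{B,t})$, and within a fixed $b$ by induction on the number of vertices reaching $t$ that lie outside $B$.

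\textbf{Base case $b=1$.} Then $t$ is a source or all its predecessors are in $B$. We place white pebbles on $\pred(t)\subseteq B$ and remove the pebble on $t$. This ends with white pebbles only on $B$ and uses at most $1$ pebble outside $B$.

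\textbf{Inductive step.} By Lemma~\ref{lem:blue-pebbling}(2) choose $v$ with $\Blue(G_{B\cup\{v\},t})\le b-1$, and by part~1 (taking $v$ as the optimal first move) we also have $\Blue(G_{B,v})\le b$. The plan has two steps.
\begin{enumerate}
\item Starting from the pebble on $t$, place a white pebble on $v$ and run the strategy for $G_{B\cup\{v\},t}$, treating $v$ as an additional member of the set $B$. By induction this uses at most $b-1$ pebbles outside $B\cup\{v\}$, so at most $b$ pebbles counting $v$. It ends with white pebbles only on $B\cup\{v\}$.
\item Run the strategy for $G_{B,v}$, which ends with pebbles only on $B$ and uses at most $b$ pebbles outside $B$.
\end{enumerate}

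\textbf{Counting.} The only pebbles not charged against $B$ are $t$ or $v$ together with the pebbles of the current subgame. So the total space, counted excluding $B$, is at most $b$.

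\textbf{Main obstacle.} The difficulty is the pebbles on $B$ themselves, which should not be counted. I would handle this by measuring space only outside the set $B$ and noting that the top-level call has $B=\emptyset$. In recursive calls, the vertices of $B$ must correspond to pebbles that are genuinely present and that were already counted at the point they were introduced. This requires strengthening the claim to: the pebbling uses at most $|B|+\Blue(G_{B,t})$ pebbles in total.

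Under this stronger claim, step~1 would cost $|B|+1+(b-1)$ pebbles, which matches the budget. Step~2, however, would cost $|B|+b$ while the pebble on $v$ is gone, and this also matches only because the pebble on $t$ has already been removed.

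The second induction is needed for termination. When $\Blue(G_{B,v})=b$ the value $b$ does not drop, but the graph $G_{B,v}$ is strictly smaller, since $t$ no longer reaches $v$ by acyclicity. Hence the induction on the pair (value, size) terminates.
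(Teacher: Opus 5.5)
Your two-phase white pebbling is the time-reversal of the paper's decomposition (pebble $v$ via $G_v$, keep it, then finish $G-v$). The bookkeeping is fine: $B$ stays pebbled, only pebbles outside $B$ are counted, and the induction on (value, size) terminates.

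\textbf{The gap is the inequality $\Blue(G_{B,v})\le b$ that step~2 needs.} Your justification does not give it:
\begin{itemize}
\item Lemma~\ref{lem:blue-pebbling}(1) says $\Blue(G_{B,t})\le\max\{\Blue(G_{B,v}),\Blue(G_{B\cup\{v\},t})+1\}$. It bounds $\Blue(G_{B,t})$ from above and says nothing upper-bounding $\Blue(G_{B,v})$.
\item Knowing that $v$ is an optimal first move only tells you that the continuation with two red vertices $t,v$ is cheap. That game is easier for Pebbler than $G_{B,v}$, not harder.
\end{itemize}
With $B\neq\emptyset$ the inequality is false for general $v$. Suppose $t$ has predecessors $b\in B$ and a source $w$, and $b$ has a predecessor $v$ whose two predecessors are sources. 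Then $\Blue(G_{B,t})=2$ (query $w$), but $\Blue(G_{B,v})=3$. So the specific choice of $v$ must enter the argument.

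To close the gap you need two facts.
\begin{itemize}
\item[(i)] If $v$ reaches $t$ by a path avoiding $B$, then $\Blue(G_{B,v})\le\Blue(G_{B,t})$. Let Colorer in $G_{B,t}$ do the following. On vertices reaching $v$, copy an optimal Colorer for $G_{B,v}$; their predecessors again reach $v$, so the end conditions there are identical. Answer red on every vertex reachable from $v$ by a $B$-avoiding path; each such vertex has a predecessor on that path that is red or unpebbled, so it never ends the game. Answer blue everywhere else.
\item[(ii)] The vertex supplied by Lemma~\ref{lem:blue-pebbling}(2) has such a path. A vertex all of whose paths to $t$ meet $B$ is not a predecessor of any vertex with a $B$-avoiding path to $t$. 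Colorer can color all such vertices blue without risk, so the value depends only on the part of $G_{B,t}$ having $B$-avoiding paths to $t$. Hence adding such a vertex to $B$ cannot lower the value, so it cannot be the vertex chosen by Lemma~\ref{lem:blue-pebbling}(2).
\end{itemize}

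The paper sidesteps this by replacing ``color $v$ blue'' with ``delete $v$'', using $\Blue(G-v)=\Blue(G_{\{v\},t})$, and recursing on $G-v$ with $B=\emptyset$. It therefore only needs the plain monotonicity $\Blue(G_v)\le\Blue(G)$. That is fact (i) with $B=\emptyset$, which the paper simply asserts because $G_v$ is a subgraph.
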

\begin{proof}
Let $G=(V,E)$ be a sDAG with $n$ vertices and sink $t$. Assume $\blue(G)=b$. We prove by induction over $n$ that $\Black(G)\leq b$.
If $n=1$, then by definition $\blue(G)=1$.

For $v\in V$ let us denote by $G_v$ the subgraph of $G$ in which $v$ is the unique sink, and all the vertices that do not reach $v$ are deleted, and let $G-v$ be the subgraph resulting after deleting $v$ and all the incident edges.
Similarly as in Lemma~\ref{lem:blue-pebbling}, for the case of black pebbling, it holds 
$\Black(G)\leq \max\{\Black(G_v),\Black(G-v)+1\}$. This is because a possible strategy for pebbling $G$ is to 
place a pebble in $v$ (for this $\Black(G_v)$ pebbles are needed), remove all the pebbles except the one in $v$, and keeping this pebble until the end, pebble $G$. For the second part of the strategy, at most $\Black(G-v)$ (plus the pebble on $v$) are needed.
By the induction hypothesis this means $\Black(G)\leq \max\{\blue(G_v),\Blue(G-v)+1\}.$

By Lemma~\ref{lem:blue-pebbling}
there must be a vertex $v\in V$ such that $\Blue(G_{\{v\},t})=\Blue(G_{\emptyset,t})-1.$
Observe also that $\Blue(G-{v})= \Blue(G_{\{v\},t})$. This is because a configuration $(B,R)$ is an ending configuration in the first graph if and only if $(B\cup\{v\},R)$ is an ending configuration in the second graph. Therefore, there exists a vertex $v\in V$ with $\Blue(G-v)<\blue(G)$ which implies
$\Black(G)\leq \max\{\blue(G_v),\Blue(G)\},$
and because $G_v$ is a subgraph of $G$, it follows  $\blue(G_v)\leq\blue(G)$.
\end{proof}

There are DAG families $\{G_n\}$ with an asymptotic difference between the values of $\Rev(G_n)$ and $\blue(G_n)$, for example, if $G_n$ is a directed path with $n$ vertices, then $\Rev(G_n)=\Theta(\log n)$ \cite{Bennett89TimeSpaceReversible}, while $\blue(G_n)=2$. Also, there are families $\{H_n\}$ of graphs in which the prices of $\Black(H_n)$ and $\blue(H_n)$ are different. An example of such a family are the road graphs of width at least 2 \cite{ChanLNV23}. These graphs are  fan-in 2 sDAGs that can be considered as generalizations of directed paths. 
\begin{definition}[Road Graph]
    The \emph{road graph} $R(s,l)=(V_R,E_R)$ of length $l$ and width $s$ ($l\geq s$) is defined as follows:
    \begin{align*}
        V_R=&\{v_{i,j}| i=1,...,s, \text{ }j=1,...,l-s+i\},\\
        E_R=&\{(v_{i,j},v_{i,j+1})|i=1,...,s, \text{ }j=1,...,l-s+i-1\}\cup\\
        &\{(v_{i,j},v_{i+1,j+1}|i=1,...,s-1, \text{ }j=1,...,l-s+i-1\}\cup\\
        &\{(v_{s,j},v_{1,j+1})|j=1,...,l-s\}
    \end{align*}
    We call the vertices $v_{i,j}$ for $j=1,...,l-s+1$ the $i$-th \emph{layer}.
\end{definition}

\begin{figure}[h]
\centering
\begin{tikzpicture}
    \fill (0,0) circle (0.1);
    \fill (0,-1) circle (0.1);
    \fill (1,0) circle (0.1);
    \fill (1,-1) circle (0.1);
    \fill (2,0) circle (0.1);
    \fill (2,-1) circle (0.1);
    \fill (3,0) circle (0.1);
    \fill (3,-1) circle (0.1);
    \fill (4,0) circle (0.1);
    \fill (4,-1) circle (0.1);
    \fill (5,-0.5) circle (0.1);
    \draw[-Stealth] (0,0) -- (0.9,0);
    \draw[-Stealth] (0,-1) -- (0.9,-1);
    \draw[-Stealth] (1,0) -- (1.9,0);
    \draw[-Stealth] (1,-1) -- (1.9,-1);
    \draw[-Stealth] (2,0) -- (2.9,0);
    \draw[-Stealth] (2,-1) -- (2.9,-1);
    \draw[-Stealth] (3,0) -- (3.9,0);
    \draw[-Stealth] (3,-1) -- (3.9,-1);
    \draw[-Stealth] (4,0) -- (4.9,-0.5);
    \draw[-Stealth] (4,-1) -- (4.9,-0.5);
    \draw[-Stealth] (0,0) -- (0.9,-1);
    \draw[-Stealth] (0,-1) -- (0.9,0);
    \draw[-Stealth] (1,0) -- (1.9,-1);
    \draw[-Stealth] (1,-1) -- (1.9,0);
    \draw[-Stealth] (2,0) -- (2.9,-1);
    \draw[-Stealth] (2,-1) -- (2.9,0);
    \draw[-Stealth] (3,0) -- (3.9,-1);
    \draw[-Stealth] (3,-1) -- (3.9,0);

\end{tikzpicture}
\caption{A road graph of length $6$ and width $2$}\label{Road graph}

\end{figure}

It is easy to see, that the black pebbling number of  a road graphs of width $s$ is at most $s+2$. 
For such  pebbling strategy one can places pebbles on the sources of  the graph and move towards the sink. 

It was shown in \cite{ChanLNV23} Lemma~4.5 that for any $l,s$ with $l\geq s$, $\Rev(R(s,l)=\Theta(s\cdot m)$ where
$m=\max\{\log(l/s),1\}$. We adapt their proof to show that for $s\geq 2$, this also holds for the blue pebbling number. For this we need the following definition and lemma from the mentioned reference. 

\begin{definition}[Blocking set]
    Let $G$ be a sDAG with sink $s$. A \emph{blocking set} of $G$ is a subset of vertices $B\subseteq V(G)$ such that every path from a source vertex in $G$ to $s$ contains a vertex in $B$. A blocking set $B$ is called \emph{minimal} if no subset of $B$ is a blocking set of $G$.
\end{definition}

\begin{lemma}[\cite{ChanLNV23}]\label{lem:chanLNV}
A minimal blocking set of a road graph that spreads over $d$ layers has size at least $d+q-1$, where $q$ is the width of the topmost layer. 
\end{lemma}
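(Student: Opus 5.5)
The plan is to prove the bound by induction on the number $d$ of layers the blocking set spreads over. Here I read a ``layer'' as the set of vertices at one distance level from the sources, i.e.\ the vertices $v_{i,j}$ with a fixed $j$. I would first record two structural facts about $R(s,l)$. \emph{(Fan-out.)} Every vertex $v_{i,j}$ has arcs to $v_{i,j+1}$ and to its cyclic successor in the next layer ($v_{i+1,j+1}$, or $v_{1,j+1}$ when $i=s$). \emph{(Expansion.)} If $X$ is a nonempty set of vertices in one layer that is not the whole layer, then the set of out-neighbours of $X$ in the next layer has size at least $|X|+1$, unless the next layer is the whole image. This should follow because consecutive layers are joined like a cyclic ``shift by $0$ or $1$'' pattern, so a proper arc-set of a cycle always gains its right endpoint's successor. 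In the tapering part near the sink, where widths shrink, the same argument applies to the truncated shift pattern and I would check it separately. I would also use minimality in the standard way: every $b\in B$ lies on a source-to-sink path that avoids $B\setminus\{b\}$.

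\textbf{Base case $d=1$.} All vertices of $B$ lie in a single layer $L$ of width $q$. Every vertex of $L$ is reachable from a source and reaches the sink. So if some $u\in L\setminus B$ existed, a path through $u$ would avoid $B$. Hence $B=L$ and $|B|=q=d+q-1$.

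\textbf{Inductive step.} Let $L_0$ be the topmost layer met by $B$, of width $q$, and let $k=|B\cap L_0|\ge 1$.
\begin{enumerate}
\item If $k=q$, then $L_0\subseteq B$ already blocks everything. By minimality, $B=L_0$, contradicting $d\ge 2$. So $1\le k<q$.
\item Let $X=L_0\setminus B$ be the set of free vertices. Since $B$ has no vertex above $L_0$, every vertex of $X$ is reachable from a source. Then $B'=B\setminus L_0$ must block every path from $X$ to the sink.
\item Consider the residual graph whose ``sources'' are the out-neighbours $N(X)$ of $X$ in the next layer $L_1$. By the expansion fact, $|N(X)|\ge q-k+1$.
\item I would then argue that $B'$, restricted to vertices reachable from $N(X)$, is a minimal blocking set of a road-like graph. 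This graph spreads over $d-1$ layers and its top layer has width at least $q-k+1$, so the induction hypothesis gives $|B'|\ge (d-1)+(q-k+1)-1$.
\item Adding the $k$ vertices in $L_0$ then gives
\[
|B|\ge k+(d-1)+(q-k+1)-1=d+q-1 .
\]
\end{enumerate}

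\textbf{Main obstacle.} I expect the hardest part to be step 4, making the induction hypothesis applicable to the residual instance. The residual graph is only the part of the road graph reachable from $N(X)$, not a road graph itself. So I would state the lemma in a slightly more general form, for ``road-like'' graphs whose top layer is an arbitrary cyclic interval set and which satisfy the expansion property, and prove that version by the same induction. I must also check two further points. First, minimality transfers to $B'$: every $b\in B'$ keeps its private path, and that path necessarily enters through $X$. Second, the topmost layer of $B'$ may lie strictly below $L_1$. In that case I would iterate the expansion fact across the intermediate blocking-free layers, since each such layer only increases the free width by at least one, which only strengthens the bound.
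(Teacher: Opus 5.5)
Your step~3 fails, and the failure comes from reading ``topmost'' as the layer of $B$ nearest the sources. The out-neighbour expansion you rely on holds in the full-width part, where consecutive layers are joined by the cyclic shift. It is false in the pyramid at the end of the road graph: there each layer is one vertex narrower than the previous one, and the extreme vertices have a single out-neighbour. This cannot be ``checked separately'', because with $q$ the width of the source-side layer the statement is false. In $R(3,l)$ (with the diagonal edges $v_{i,j}\to v_{i+1,j+1}$ running to the end of row $i$, as needed for the single sink $v_{3,l}$), take $B=\{v_{1,l-2},v_{2,l-2},v_{3,l-1}\}$. The only vertex of layer $l-2$ outside $B$ is $v_{3,l-2}$, and its only out-neighbour is $v_{3,l-1}\in B$, so $B$ is blocking. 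Each vertex of $B$ has a private path to the sink, so $B$ is minimal. It spreads over $d=2$ layers and its source-side layer has width $3$, yet $|B|=3<d+3-1$. In your notation $X=\{v_{3,l-2}\}$ and $|N(X)|=1=|X|$. When $B$ lies in the full-width part, both readings give $q=s$ and your argument is fine.

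In the paper the pyramid is at the top. The proof of Lemma~\ref{lem:road-graph} speaks of the last layer of $B$ lying in the pyramid, and of its lowest and highest layers $a<b$. So $q$ is the width of layer $b$, the one nearest the sink; in the example $q=2$ and the bound $3$ is tight. The paper only cites this lemma. Your idea does prove it once you run it downwards with in-neighbourhoods, which expand everywhere. If $Y$ is a nonempty proper subset of a layer, its set $P(Y)$ of direct predecessors has at least $|Y|+1$ elements. In the full-width part this is the cyclic shift; in the pyramid the row just below the smallest row of $Y$ is new.

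Let $R_t$ be the set of vertices of layer $L_t$ outside $B$ that reach the sink while avoiding $B$. Then:
\begin{itemize}
\item $R_a=\emptyset$ because $B$ blocks, and $|R_b|=q-|B\cap L_b|$.
\item $P(R_{t+1})\subseteq R_t\cup(B\cap L_t)$.
\item Minimality gives $R_t\neq\emptyset$ for $a<t\le b$: follow the private path of a vertex of $B\cap L_a$ upwards.
\item Minimality also gives $R_t\neq L_t$ for $t\le b$: for $t<b$, the private path of a vertex of $B\cap L_b$ crosses $L_t$ at a vertex reachable from a source, which cannot lie in $R_t$.
\end{itemize}
Summing $|B\cap L_t|\ge |R_{t+1}|+1-|R_t|$ over $a\le t<b$ telescopes to $|B|\ge q+d-1$. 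No generalized ``road-like'' induction as in your step~4 is needed.
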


\begin{lemma}\label{lem:road-graph}
    For the road graph $R=R(s,l)$ with length $l$ and width $s\geq 2$ it holds that $\blue(R)=\Theta(s\cdot m)$, where $m=\max\{\log(l/s),1\}$.
\end{lemma}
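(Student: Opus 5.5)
The upper bound is immediate: $\Blue(R)\le\Rev(R)=O(s\cdot m)$ by the result of \cite{ChanLNV23} (Lemma~4.5) together with the observation that the blue cost never exceeds the red--blue price, which equals $\Rev$ by \cite{Chan13JustAPebble}. So the work is in the lower bound $\Blue(R)=\Omega(s\cdot m)$. I will prove it by giving a Colorer strategy that forces Pebbler to create $\Omega(s\cdot m)$ blue pebbles before the game ends, in the style of the adversary argument of \cite{ChanLNV23}.

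\textbf{Colorer's invariant.} At any moment Colorer maintains a current red vertex $t$ (initially the sink), with the blue pebbles forming the set $B$. The relevant graph is the subgraph $G_{B,t}$: vertices that reach $t$ avoiding $B$. The game ends exactly when $B$ contains a blocking set for $t$. When Pebbler places a pebble on $v$, Colorer answers as follows.
\begin{itemize}
\item If $v$ lies below $t$ and the sub-road-graph ending at $v$ still has many unblocked layers, say at least half of the remaining distance measured in layers, Colorer answers \emph{red*} and moves $t:=v$. No blue point is gained, but the remaining relevant region shrinks by at most a factor $2$ in length.
\item Otherwise Colorer answers \emph{blue}.
\end{itemize}
The ending condition is then a blocking set inside the current region. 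By Lemma~\ref{lem:chanLNV}, any such blocking set spreading over $d$ layers has at least $d+q-1$ blue vertices. Because the road graph of width $s$ contains $s$ vertex-disjoint paths to a full-width layer, at least about $s$ blue vertices are needed to block the first full layer.

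\textbf{Counting.} The plan is a potential argument over phases. The length of the current region can halve only $O(\log(l/s))$ times before it drops below $s$. The main claim is that in each phase where Colorer keeps the red vertex within a segment of length $L$, Pebbler must spend $\Omega(s)$ blue pebbles to move to a segment of length $L/2$. The reason is that the blues must separate the lower half from the top; by Lemma~\ref{lem:chanLNV} this requires $\Omega(s)$ vertices spread across the layers. Moreover, blues placed in one dyadic segment do not help block another. Summing over $m$ phases gives $\Omega(s\cdot m)$. When $m=1$ the bound is $\Omega(s)$, and this follows directly from the $s$ disjoint paths into the sink's neighbourhood, which requires $s\ge 2$ for the width to matter.

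\textbf{Main obstacle.} The hard step is making the charging argument rigorous: the blues counted in different phases must be genuinely disjoint. Unlike the reversible setting, red answers are free here, so Pebbler may probe many vertices. I would handle this by always letting Colorer choose the red answer that keeps the region with the larger residual blocking requirement, so that the potential $\Phi=(\text{number of remaining dyadic scales})\cdot s$ is non-increasing under red answers. Each blue answer should decrease $\Phi$ by at most $O(1)$; the blocking-set bound of Lemma~\ref{lem:chanLNV} is exactly what should guarantee this.
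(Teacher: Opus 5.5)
\textbf{Verdict: genuine gap.} Your upper bound is the paper's, but the lower bound fails: the Colorer strategy you specify does not force $\Omega(s\cdot m)$ blue pebbles.

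\textbf{Why your Colorer loses.} Under your rule Colorer answers red, at no cost to Pebbler, whenever the queried vertex keeps at least half of the current length. So Pebbler queries a vertex near the middle of the current region, gets red, moves the target there, and repeats. Each queried vertex is an ancestor of the previous one, so every new region is untouched by earlier queries. After $O(\log(l/s))$ such rounds the region is a road graph of length $O(s)$, and Pebbler has paid no blue pebble. He then finishes with $O(s)$ blue pebbles, for instance because $\Blue\le\Rev$ and $\Rev=O(s)$ there. Hence your strategy certifies only $\Omega(s)$.

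This is exactly where your potential argument breaks. Each such red answer removes a whole dyadic scale, lowering $\Phi$ by $s$ with no payment. So the property you need, that red answers do not decrease $\Phi$, is false for your own rule, and Lemma~\ref{lem:chanLNV} never gets to charge anything.

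Relatedly, the game does not end ``exactly when $B$ contains a blocking set for $t$''. It ends when a red vertex is a source or has all its direct predecessors blue. A blue blocking set far below $t$ ends nothing: it only cuts off the sources and leaves a residual game between it and $t$. The correct argument exploits precisely this.

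\textbf{The missing idea (the paper's route).} Colorer must not concede any shrinking of the region before Pebbler has paid for an entire blocking set. She answers blue to every query until the blue pebbles complete a minimal blocking set $B$, and only the colour of the completing vertex $v$ is chosen strategically. Let $a\le b$ be the lowest and highest layers of $B$, and $d=b-a+1$.
\begin{itemize}
\item If the top layer of $B$ lies in the final pyramid (width $q<s$), or if $(a+b)/2>l/2$, she answers red. Play continues below $B$, on the graph ending at the predecessor of $v$ on a source path avoiding $B\setminus\{v\}$.
\item Otherwise she answers blue, and play continues on the part above $B$.
\end{itemize}
Lemma~\ref{lem:chanLNV} gives $|B|\ge d+q-1$, i.e.\ $|B|\ge d+s-1$ in full-width layers. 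This pays at once for the $d$ layers swallowed by $B$ and for the loss of a factor $2$ in length. Induction on $l-s$ then yields $\Omega(s\log(l/s))$.

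\textbf{Base case.} Your $m=1$ argument via ``$s$ disjoint paths into the sink's neighbourhood'' does not work as stated. The sink has in-degree $2$, and its two predecessors already form a blocking set. The paper instead gets $\Omega(s)$ from $\Black\le\Blue$ (Lemma~\ref{black<blue}) together with the $\Omega(s)$ black pebbling price of the pyramid of height $s$ at the end of $R$.
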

\begin{proof}

    The upper bound follows from Lemma~\ref{lem:chanLNV} and the fact that for every $G$, $\Blue(G)\leq \Rev(G)$.

    
    For the lower bound we give an inductive proof over the parameter  $k=l-s$ in the road graph. The base case $l=s$ is the pyramid at the end of $R$ and the results follows from Lemma~\ref{black<blue} and the  well known fact that the black pebbling number of a pyramid of base $s=\Omega(s)$. 
    For the induction step,   assume that for some constant $c$, a road graph of width $s$ and length $l$ with $l-s=i$ has blue pebbling number at least $cs \log( l/s)$.  We give a strategy for Colorer. The game is divided in phases. In the first phase
    Pebbler and Colorer play until the pebbles chosen by  of Pebbler complete a blocking set $B$ of $R$. 
     Colorer  chooses the blue color every time, except maybe for the last vertex $v$, for which Colorer can either choose blue or red. If she chooses blue, the next phase continues 
     on the graph in which the  vertices on the lowest layer that intersects with $B$ act as the new sources
     (we suppose that the vertices that might not be colored in this layer are colored blue for free). 
     
     If Colorer chooses red, for the last vertex in the block, we consider a minimal path from the sources of $R$ to $v$. Let $u$ be the predecessor $v$  on this path. In the next phase of the game, vertex $u$  acts as the new sink and the game is played on the graph induced by the vertices in $R$ that can reach $u$.
     In both cases, we consider that all pebbles that are not in the blocking set are deleted (this  would only be a disadvantage for Colorer). We now need to give the criteria for Colorer to  decide whether 
     vertex $v$ is colored blue or red. Assume $B$ spreads over $d$ layers.\\
    The first case we need to consider is when the last layer of the $B$ has width $q<s$, i.e. the last layer is in the pyramid at the end of the graph. In this case Colorer chooses red. The road graph that remains has length $l-d-q-1$ because the number of layers below the one with width $q$ is $q-1$. The number of blue pebbles needed is 
    \[|B|+\Blue(R(s,l-q-d-1))\geq d+q-1+s\log((l-q-d-1)/s)\geq s\log(l/s).\]
    For the other two cases let $a$ be the lowest and $b$ be the highest layer of $B$ and let $m=(a+b)/2$. It holds $d=b-a+1$. If $m> l/2$, Colorer chooses red again and the remaining road graph has length $a-1$. The blue pebbling number now is
    \[|B|+\Blue(R(s,a-1))\geq d+s-1+s\log((a-1)/s)\geq d+s-1+s\log((l-d-1)/2s)\geq s\log(l).\]
    For the last case assume $m\leq l/2$. The remaining road graph has length $l-b+1$ and the blue pebbling number is
    \[|B|+\Blue(R(s,l-b+1))\geq d+s-1+s\log((l-b+1)/s)\geq d+s-1+s\log((l-d-1)/2s)\geq s\log(l).\]
    \end{proof}

From this result, and the the fact that the black price for a road graph $R(l.s)$ is at most $s+2$ we obtain the following result similar to Theorem 4.2 in \cite{ChanLNV23}. 

\begin{corollary}
   For any function $s(n)=O(n^{1/2-\epsilon})$ for a constant $0<\epsilon<\frac{1}{2}$,  there are sDAG families $\{G_n\}$of size $O(n)$ such that $\Black(G_n)=O(s(n))$ while $\Blue(G_n)=\Omega(s(n)\log n).$
\end{corollary}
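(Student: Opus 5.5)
The plan is to instantiate road graphs $R(s,l)$ with parameters chosen in terms of $n$ and $s(n)$, and read off both bounds from results already established. Given $s=s(n)=O(n^{1/2-\epsilon})$, take $G_n=R(s',l)$ with width $s'=\max\{s(n),2\}$ and length $l=n$. The vertex count of $R(s',l)$ is $\sum_{i=1}^{s'}(l-s'+i)\le s'\,l$. To get size $O(n)$ I will instead take $l=\lfloor n/s'\rfloor$, so that $|V|\le s'l\le n$; the number of edges is at most twice the number of vertices since fan-in is $2$, so the size is $O(n)$. The requirement $l\geq s'$ holds for large $n$, because $s'^2=O(n^{1-2\epsilon})=o(n)$.

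For the black upper bound, I use the observation stated just before Lemma~\ref{lem:chanLNV}: a road graph of width $s'$ has black pebbling price at most $s'+2$. This gives $\Black(G_n)=O(s(n))$.

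For the blue lower bound, I apply Lemma~\ref{lem:road-graph} with width $s'\geq 2$, which gives $\Blue(G_n)=\Omega(s'\cdot\max\{\log(l/s'),1\})$. The only computation needed is that $\log(l/s')=\Omega(\log n)$. Since $l/s'\ge n/s'^2-1$ and $s'^2=O(n^{1-2\epsilon})$, we get $l/s'=\Omega(n^{2\epsilon})$. Hence $\log(l/s')\ge 2\epsilon\log n-O(1)=\Omega(\log n)$, because $\epsilon$ is a constant. Therefore $\Blue(G_n)=\Omega(s(n)\log n)$.

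The one delicate point is the regime of small $s(n)$, for example constant $s(n)$ or $s(n)=1$. There Lemma~\ref{lem:road-graph} requires width at least $2$, which is why I replace $s$ by $s'=\max\{s,2\}$. This changes all the bounds only by constant factors. Everything else is routine parameter bookkeeping; the real content is carried entirely by Lemma~\ref{lem:road-graph}.
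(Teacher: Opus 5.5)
Your proposal is correct and follows the same route as the paper, which derives the corollary directly from Lemma~\ref{lem:road-graph} together with the $s+2$ black pebbling bound for road graphs of width $s$. Your parameter choice $s'=\max\{s(n),2\}$ and $l=\lfloor n/s'\rfloor$ is exactly the bookkeeping the paper leaves implicit. It gives size $O(n)$, and $\log(l/s')=\Omega(\log n)$ follows from $s=O(n^{1/2-\epsilon})$.
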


\section{Blue pebbling exactly characterizes clause space tree-like Resolution}\label{sec:blue-tree-like}

In \cite{ToranW21} it was shown that the cost of the reversible pebble game played on the refutation graph of a formula, approximates the clause space in tree-like Resolution, by a logarithmic factor. We obtain an exact characterization of this complexity measure using the blue pebbling number. 
Note that the minimum in Theorem~\ref{thm:TreeCS-leq-min-RMc-RefGraph} is taken over all possible refutation graphs  of~$F$, not only over the trees. We need a intermediate result stating that the black and blue costs coincide for binary trees. 


\begin{lemma}
\label{lem:bin-tree}
	For any binary tree $T$ it holds
    $\Black(T)=\blue(T)$.
\end{lemma}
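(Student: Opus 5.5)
By Lemma~\ref{black<blue} we already have $\Black(T)\leq\blue(T)$, so it remains to show $\blue(T)\leq\Black(T)$ for every binary tree $T$ (in-tree, edges directed toward the root/sink, every internal vertex with at most two predecessors). I will argue by induction on the size of $T$, using the well-known recursive characterization of the black pebbling price of binary trees: if the root $z$ has a single predecessor subtree $T_1$ then $\Black(T)=\max\{\Black(T_1),2\}$, and if $z$ has two subtrees $T_1,T_2$ then $\Black(T)=\max\{\Black(T_1),\Black(T_2)\}$ when the two values differ and $\Black(T_1)+1$ when they are equal. A single vertex gives $\Black=\blue=1$, and a single edge gives $2$, matching Definition~\ref{def:blue-cost}.

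The Pebbler strategy that attains this bound is the following. The root $z$ is red. Suppose it has two predecessors $u_1,u_2$ with subtrees $T_1,T_2$, and assume $\Black(T_1)\geq\Black(T_2)$. Pebbler first pebbles $u_2$ (the root of the cheaper subtree). If Colorer colors $u_2$ red, then since $T$ is a tree the subtree $T_2$ is disjoint from the rest, and the game can be continued inside $T_2$ with red sink $u_2$. By induction this costs at most $\blue(T_2)=\Black(T_2)$ blue pebbles, which is at most $\Black(T)$. If Colorer colors $u_2$ blue, Pebbler has paid one point and now pebbles $u_1$. If $u_1$ is also blue, the game ends with a red $z$ whose predecessors are all blue, after 2 blue pebbles in total. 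If $u_1$ is red, the game continues in $T_1$ with red sink $u_1$, for a total of $1+\Black(T_1)$ blue pebbles by induction. This naive count gives $\max\{\Black(T_2),\,2,\,1+\Black(T_1)\}$, which is too large when $\Black(T_1)>\Black(T_2)$. So the order has to be swapped: pebble $u_1$ first. A red answer then costs $\Black(T_1)$. A blue answer costs $1$ plus the cost of the continuation, which is either $1$ more if $u_2$ is blue or $\Black(T_2)$ if $u_2$ is red, giving total $\max\{2,1+\Black(T_2)\}\leq\Black(T)$. This holds both when $\Black(T_2)<\Black(T_1)$ and, in the equal case, when the target is $\Black(T_1)+1$. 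The single-predecessor case is handled the same way, with bound $\max\{\Black(T_1),2\}$. Throughout I use the blue/red bookkeeping of Lemma~\ref{lem:blue-pebbling}: a red answer lets us restrict the game to the subgraph reaching the new red vertex, and a blue answer is paid for once.

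The main obstacle is justifying the recursive formula for $\Black$ on binary trees, in particular the lower bound $\Black(T)\geq\Black(T_1)+1$ when both subtrees have equal price. If that formula cannot simply be cited as standard, I would prove it directly. Consider the first time at which some subtree, say $T_i$, has its root pebbled by a pebbling that started from no pebbles on it; the other subtree must be pebbled while at least one pebble remains in $T_i$ or on $u_i$, since it is needed for $z$. A second point to handle carefully is the passage to subgames. When $u_i$ is red, pebbles in other disjoint subtrees are irrelevant because in a tree no vertex outside $T_i$ reaches $u_i$. Hence the game really does reduce to $T_i$ with no earlier blue pebbles that could help, which is exactly what the induction needs.
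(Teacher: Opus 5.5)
Your route is the paper's. You induct over subtrees and let Pebbler first query the root of the subtree with the larger black price. On a red answer you recurse into that subtree; on a blue answer you query the other root and recurse there or finish. You then compare with the recursive behaviour of $\Black$ on binary trees, which the paper also uses implicitly through its case split $\Black(T_{v_1})=\Black(T_{v_2})=k-1$ versus $\Black(T_{v_1})=k>\Black(T_{v_2})$. You additionally treat roots of in-degree one, which the paper skips.

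There is an off-by-one slip you need to repair. When both $u_1$ and $u_2$ are colored blue, two blue pebbles have been placed, so that branch has $\blue$-value $3$, not $2$. You are mixing the count of blue pebbles with the $+1$-normalized value $\blue$.

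Your recursive formula for $\Black$ has the matching error. Without sliding, pebbling a root with two predecessors needs three pebbles at once. For instance, a root with one leaf child and one single-edge child has $\Black=\blue=3$. Your formula gives $2$ there, and so does your strategy bound $\max\{\Black(T_1),2,1+\Black(T_2)\}$. (The paper's own case split silently skips the same small cases, e.g.\ $\{\Black(T_{v_1}),\Black(T_{v_2})\}=\{1,2\}$ with $k=3$.)

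The two mistakes cancel. The argument becomes correct once you state the target as $\blue(T)\le\max\{\Black(T_1),\,1+\Black(T_2),\,3\}\le\Black(T)$ for $\Black(T_1)\ge\Black(T_2)$. The three lower bounds on $\Black(T)$ are justified as follows:
\begin{itemize}
\item $\Black(T)\ge\Black(T_1)$ by restricting a pebbling of $T$ to $T_1$.
\item $\Black(T)\ge 3$ by looking at the moment $z$ is pebbled.
\item $\Black(T)\ge\min\{\Black(T_1),\Black(T_2)\}+1$ by the standard argument.
\end{itemize}
Your ``first time'' sketch does not quite give the third bound, because the pebble on $u_i$ may later be removed and $T_i$ re-pebbled. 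Instead, take for each $i$ the last time before $z$ is pebbled at which $T_i$ carries no pebble. These times differ, since only one pebble is placed per move. The subtree whose time is later is then pebbled from scratch while the other subtree holds at least one pebble.
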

\begin{proof}
By Lemma~\ref{black<blue} we know that $\Black(T)\leq \blue(T)$.  For the other direction, let $T$ be a binary tree with $\Black(T)=k$. 
	For any node $v$ in $T$ let		
	$T_v$ be the subtree  rooted at~$v$.
	We show by induction on $d$, the depth of $T_v$, that for any vertex~$v$ in~$T,$ if $\Black(T_v)=k$		
	then there is  a strategy for		
	Pebbler in the red--blue game on $T_v$ in which at most $k$ vertices are colored blue.		
	The case $d=0$ is trivial.
	For $d>0$, the game starts, according to the rules, by Pebbler 		
	querying the root~$v$ of the subtree and Colorer{} answering~red. 		
	We consider two cases, depending on whether for both direct predecessors 
	$v_1$ and~$v_2$ of~$v$ in~$T,$ $\Black(T_{v_1})=\Black(T_{v_2})=k-1$ or not.
	In the first case,		
	Pebbler queries one of them, say~$v_1$. If the answer is red, he continues on $T_{v_1}$. By induction hypothesis $\blue(T_{v_1})\leq k-1$. Otherwise he continues on~$T_{v_2}$. By the induction hypothesis, $\blue(T_{v_2})\leq k-1$ and then $\blue(T_{v})\leq 1+k-1=k$.
    
    In case for (w.l.o.g.) $\Black(T_{v_1})=k$ and $\Black(T_{v_2})\leq k-1$, Pebbler queries $v_1$ first. If the answer from Colorer is red, he continues on $T_{v_1}$ and needs at most $k$ blue pebbles. If the answer is blue, he queries $v_2$. If the answer from Colorer is blue again, the game finishes and Pebbler has paid 2 points. Otherwise Pebbler continues on $T_{v_2}$ and uses at most $1+k-1=k$ blue pebbles.      
\end{proof}

\begin{theorem}
	\label{thm:TreeCS-leq-min-RMc-RefGraph}
	For any unsatisfiable $\CNF$ formula $F$ with $n$ variables it holds
	$$	\treeclspaceref{F}  =\min_{\refof{\proofstd}{F}} \blue(\RefGraph).$$ 
	
\end{theorem}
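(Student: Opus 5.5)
We prove the two inequalities separately.

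\textbf{The direction $\le$.} Fix a refutation $\pi$ of $F$ with $\blue(\RefGraph)=b$. From it we extract a Prover strategy in the Prover--Delayer game on $F$ that lets Delayer score at most $b-2$ points. Theorem~\ref{thm:TreeClauseSpaceProverDelayerCharacterization} then gives $\treeclspaceref{F}\le b$.

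The simulation keeps the following invariant. Red-colored clauses of $\RefGraph$ are falsified by the current partial assignment $\alpha$, and blue clauses are satisfied by $\alpha$. Initially the sink $\emptycl$ is red and is trivially falsified.
\begin{itemize}
\item When Pebbler's optimal strategy asks for a vertex (clause) $C$, Prover queries the variables of $C$ not yet set by $\alpha$, one at a time.
\item As soon as some literal of $C$ becomes true, Colorer answers blue. If instead all literals of $C$ end up false, Colorer answers red.
\item Prover always chooses the falsifying value when Delayer answers $\ast$.
\end{itemize}
The game ends at a red clause that is either an axiom or whose two parents are both blue. A red axiom is falsified, which ends the Prover--Delayer game. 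A red resolvent cannot have two parents satisfied by $\alpha$, by soundness of the resolution rule, so that ending case never occurs. Weakenings are handled the same way.

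Under this Prover rule, every $\ast$ is resolved by falsifying. So a literal of $C$ becomes true only through an explicit $0/1$ answer by Delayer, and such answers are free.

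\textbf{The scoring obstacle.} Points are scored on $\ast$ answers, which lead toward red outcomes, whereas Pebbler pays for blue. So the map must be inverted. Prover should instead set the variable so as to \emph{satisfy} a literal of $C$ when Delayer answers $\ast$, letting the blue outcome correspond to the scored point. A red outcome then arises only from Delayer's own free answers.

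The issue is that querying one clause may cost several $\ast$ points before it becomes blue. To charge at most one point per blue pebble, I would use the Prover-side monotonicity together with a strategy-refinement argument. Equivalently, I would run the induction of Lemma~\ref{lem:blue-pebbling} directly on the quantity $\PD(F\!\restriction\!_\alpha)$, proving by induction on the decision tree of Pebbler that the tree-like space needed to derive a red clause $C$ from $F$ together with the blue clauses is at most $\Blue(G_{B,C})$. This replaces the game simulation with a direct tree-like construction:
\begin{itemize}
\item Following Lemma~\ref{lem:blue-pebbling}(1) with a chosen vertex $v$, first derive $v$ in space $\Blue(G_{B,v})$.
\item Then, keeping $v$ in memory, derive $C$ using $v$ as a free axiom, in space $\Blue(G_{B\cup\{v\},C})+1$.
\end{itemize}
Reusing $v$ this way would break tree-likeness. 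The fix is that in tree-like resolution a clause kept in memory can be used once, and the remaining derivation treating $v$ as an axiom can be replayed: each use of $v$ is re-derived while the other branch holds one clause. This gives the bound $\max\{\Blue(G_{B,v}),\Blue(G_{B\cup\{v\},C})+1\}$, which is exactly the recurrence. I expect making this replay rigorous to be the main obstacle. The cleanest route is probably through $\PD$: a Delayer strategy on the $F$-derivation of $C$ yields, case by case on red/blue for $v$, a Colorer strategy.

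\textbf{The direction $\ge$.} Take an optimal tree-like refutation $\pi$. By the result of \cite{ET01SpaceBounds} restricted to trees, its space equals $\Black(\RefGraph)$, possibly after reordering subproofs, since tree-like space of a tree refutation equals the black pebbling price of the (binary) tree. By Lemma~\ref{lem:bin-tree}, $\Black(\RefGraph)=\blue(\RefGraph)$. So the minimum over all refutation graphs is at most $\treeclspaceref{F}$.
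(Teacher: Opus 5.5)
Your $\ge$ direction matches the paper. The $\le$ direction has a genuine gap, and the obstacle that made you abandon the simulation is not real. With your second rule (on $\ast$, Prover gives $x$ the value that \emph{satisfies} $C$), the first $\ast$ inside $C$ already makes $C$ satisfied. Prover then stops querying $C$ and simulates a blue answer, so each point is charged to a distinct blue pebble, and a clause colored red received only free answers. This is exactly the paper's proof.

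What you never supply is the bound $b-2$ rather than $b-1$, which exact equality needs because $\treeclspaceref{F}=\PD(F)+2$. The paper's argument runs as follows. If an axiom is falsified first, complete the simulated play by any total extension of $\alpha$. The play can then only end at a red source, since a red resolvent or weakening with blue parents contradicts soundness. So the last move is a red answer on an axiom. The sibling branch, where Colorer answers blue on that axiom, contains at least one more blue pebble. Every branch of Pebbler's strategy has at most $b-1$ blue pebbles, so the branch actually followed has at most $b-2$.

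The replacement route you sketch does not close the gap as written.
\begin{enumerate}
\item Lemma~\ref{lem:blue-pebbling}(1) is an \emph{upper} bound on $\Blue(G_{B,t})$. Bounding space by $\Blue(G_{B,C})$ needs the reverse inequality: you must choose $v$ via Lemma~\ref{lem:blue-pebbling}(2) and separately prove $\Blue(G_{B,v})\le\Blue(G_{B,C})$.
\item Your invariant fails at the base case. If both parents of the red clause $C$ are blue, then $\Blue(G_{B,C})=1$, yet deriving $C$ from them needs three clauses in memory. This already happens for $F=\{x,\bar x\}$ with $B=\{x,\bar x\}$ and $C=\emptycl$. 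So this induction can at best give $\Blue(\RefGraph)+O(1)$, not equality.
\item Re-deriving $v$ at every use inside the derivation of $C$ naively costs the \emph{sum} of the two spaces, not their maximum, and you leave this open. The max-recurrence arises naturally only on the Prover--Delayer side, which is precisely the simulation you dropped.
\end{enumerate}
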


\begin{proof}
	For proving  $\treeclspaceref{F}  =\min_{\refof{\proofstd}{F}} \blue(\RefGraph)$ we follow
	the intuition behind the red--blue game and 
	identify 
	the color{} blue with 1 and the color{} red with 0.
    Let~$\proofstd$ be a Resolution refutation of~$F$ with a refutation-graph~$\RefGraph$ and~$\blue(\RefGraph)=k$.
	By Theorem~\ref{thm:TreeClauseSpaceProverDelayerCharacterization}, 
    it suffices to give a strategy for Prover in the Prover--Delayer game played on~$F$
	under which he has to pay at most $k-2$ points. Consider an optimal strategy $S$ for Pebbler in
   in the red--blue game played on~$\RefGraph$. In this strategy the number of pebbles colored blue
   is at most $k-1$ (the other point is counted by definition). Also, w.l.o.g. we can suppose that the game ends with a vertex in $\RefGraph$ being colored blue by Colorer (if the game is finishing no matter what color she gives, she might as well color the vertex blue and score one more point). 
  Prover just has to simulate Pebbler strategy in $S$  and we will see that the number of times 
    Delayer scores a point (by answering $*$), is  bounded by $k-2$.
    By doing so, a partial assignment~$\alpha$ falsifying an axiom in~$F$ will be produced.  
	The game is divided in stages.
	Initially the partial assignment is the empty assignment.
	In each stage, 
	if Pebbler chooses a clause $C \in \RefGraph$, Prover queries the variables in $C$ not yet assigned by $\alpha$, one by one in any order, extending the partial assignment~$\alpha$ with the answers of Delayer, until either:
	\begin{enumerate}
		\item \label{itm:CIsSatOrFalsByTva} the clause~$C$ is satisfied or falsified by~$\alpha$, or
		\item \label{itm:AVarInClauseIsGivenAstValue} a variable~$x$ in~$C$ is given value~$\ast$ by Delayer.
	\end{enumerate}
	In case~\ref{itm:CIsSatOrFalsByTva}, Prover moves to the next stage, simulating the strategy of Pebbler assuming Colorer{} has given clause~$C$ the
	color~${C}\restriction_{\alpha}$. 
	In case~\ref{itm:AVarInClauseIsGivenAstValue}, Prover extends~$\alpha$ by assigning~$x$ with the value that satisfies~$C$ and moves to the next stage, simulating the strategy of Pebbler, assuming Colorer{} has given clause~$C$ the color~$1$ (blue). Observe that the number of times a vertex is colored blue in $G_\pi$
 cannot be larger than $\blue(G_\pi)-1$ since  the colors are  being chosen according to the strategy of Delayer, and this cannot be better than the best strategy of Colorer.
 The red--blue game finishes placing  at most $k-1$ blue pebbles.  This means that  Delayer can score at~most~$k-1$~points
    by answering $*$. We will show that in fact she can score at most $k-2$ points. 
	The red--blue game finishes when either a source 
	in~$\RefGraph$ is assigned color~$0$ by Colorer, or a vertex with all its direct predecessors being colored~$1$ is colored~$0$.
    The second situation is not possible since for any partial assignment~$\alpha$ it cannot be that~$\alpha$ satisfies
	two parent clauses in a Resolution proof, while falsifying their
	resolvent.
    The first situation corresponds to~$\alpha$ falsifying an axiom $A$ in~$F$.
    Moreover, this clause can not be falsified after Delayer gives a $*$ to $A$ because in this case the produced assignment 
    $\alpha$ would satisfy $A$, so it must be the case that $A$ is falsified by assigning  a literal 
    from  another queried clause, or in other words, in the red--blue game Pebbler has not queried clause $A$, thus saving the last blue point.

	To prove that the other inequality, let $k=\treeclspaceref{F}$. 
	There is a tree-like refutation $\pi$ of $F$ whose underlying graph $G_\pi$ is a binary tree with black pebbling		price~$k$. The result follows from Lemma \ref{lem:bin-tree}.
    \end{proof}

\section{The tree-like space of $\Peb_G[\vee]$}\label{sec:blue-v-tree-like}

We give another application of the blue pebbling cost 
showing that in tree-like Resolution, the refutation space for the $\Peb_G[\vee]$ formulas asymptotically  coincides  with the blue cost of $G$.
It is interesting to compare this results with the corresponding one for the formulas lifted with the $\oplus$ gadget. In \cite{ToranW21}
it was shown that $\treeclspaceref{\Peb_G[\oplus]}=\Theta(\Rev(G)).$
\begin{theorem}\label{thm:tree-cs-peb}
Let $G$ be a sDAG. 
$\treeclspaceref{\Peb_G[\vee]}=\Theta(\blue(G)).$
\end{theorem}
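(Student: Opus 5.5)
Both directions go through the Prover--Delayer characterization of Theorem~\ref{thm:TreeClauseSpaceProverDelayerCharacterization}, $\treeclspaceref{F}=\PD(F)+2$. For $\Peb_G[\vee]$ the variables are $v_1,v_2$ for each vertex $v$, and $v$ is ``true'' iff $v_1\vee v_2$ holds.

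\textbf{Upper bound.} I would have Prover simulate an optimal Pebbler strategy for the red--blue game on $G$, which colors at most $\Blue(G)-1$ pebbles blue. Pebbler first places the sink $z$, which is red: we may assume $z$ false, since the sink axioms $\bar z_1,\bar z_2$ can be falsified once queried. When Pebbler places $v$, Prover queries $v_1$ and then, if needed, $v_2$.
\begin{itemize}
\item If the answers make $v_1\vee v_2$ false, Prover treats $v$ as red.
\item If they make it true, Prover treats $v$ as blue.
\end{itemize}
Whenever Delayer answers $\ast$, Prover chooses the value $1$. This fixes $v$ as true (blue), so each $\ast$ is charged to a blue vertex. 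A vertex can absorb two $\ast$ answers only if Prover sets the first to $0$, but Prover sets $\ast$ to $1$, so after one $\ast$ the vertex is already blue. Thus each placed vertex costs at most one point, and only when it ends up blue. Delayer scores at most $\Blue(G)-1$ points, up to an additive constant.

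The game ends when a red vertex $v$ has all predecessors blue. Then $u_{j_u}=1$ for some $j_u$ for each predecessor $u$, and $v_1=v_2=0$. Hence the lifted axiom $\bigvee_u \bar u_{j_u}\vee v_1\vee v_2$ is falsified. This gives $\treeclspaceref{\Peb_G[\vee]}=O(\Blue(G))$.

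\textbf{Lower bound.} I would give a Delayer strategy scoring $\Omega(\Blue(G))$ by simulating an optimal Colorer in the red--blue game on $G$. Delayer keeps a coloring of vertices.
\begin{itemize}
\item When Prover queries $v_j$ for a vertex $v$ that Colorer has not yet colored, Delayer asks Colorer for a color. If Colorer says red, Delayer answers $0$ for all future queries to $v_1,v_2$.
\item If Colorer says blue, Delayer answers $\ast$ on the first queried variable of $v$. If Prover sets it to $1$, $v$ is true, and further queries to $v$ are answered freely with $1$ or $\ast$. If Prover sets it to $0$, Delayer answers $\ast$ again on the other variable; if that is also set to $0$, Delayer has scored $2$ points but $v$ is now false.
\end{itemize}
The last case is the obstacle, and I would handle it with the standard non-authoritarian trick. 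Colorer's blue decision costs Pebbler one blue point, while Delayer earns at least one point, so Delayer's score tracks blue pebbles. When Prover makes $v$ false after two stars, Delayer treats this as Pebbler having paid for $v$ being blue and then recolors $v$ red, and continues Colorer's strategy on the graph $G_{B,v}$-type position. Lemma~\ref{lem:blue-pebbling} controls the cost: Colorer's value drops by at most one each time.

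Each such event gives Delayer two points against at most a one-unit drop in the remaining blue value. A potential argument then shows Delayer's score is at least the number of blue pebbles Pebbler is forced to pay, i.e.\ at least $\Blue(G)-O(1)$.

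It remains to check that an axiom is falsified only at a red--blue end position. The sink axioms force $z$ to be red. A pebbling axiom $\bigvee_u \bar u_{j_u}\vee v_1\vee v_2$ is falsified only if every predecessor $u$ is true and $v$ is false. By Delayer's rule the true vertices are exactly the blue ones and the false ones are red, so this is precisely a red vertex with all predecessors blue. Formalizing the recoloring potential argument is the main difficulty.
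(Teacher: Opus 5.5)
Your upper bound is correct and is the paper's argument: Prover simulates Pebbler, querying $v_1$ then $v_2$. Because you resolve every $\ast$ as $1$, you even get at most one point per blue vertex, which is sharper than the paper's factor $2$.

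\textbf{The lower bound fails as written, and this is not a formalization issue.}

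The problem is that your Delayer answers $\ast$ on the second variable of a blue vertex after Prover has set the first one to $0$. This lets Prover make $v_1\vee v_2$ false, which already destroys the strategy:
\begin{itemize}
\item Let Prover start by querying $s_1$ and then $s_2$ for a source $s$.
\item If $\Blue(G)\ge 2$, Colorer must color $s$ blue, because red would end the red--blue game with no blue pebbles.
\item Delayer then answers $\ast$ twice, Prover sets both variables to $0$, and the source axiom $s_1\vee s_2$ is falsified.
\item Delayer has scored $2$ points, whatever $\Blue(G)$ is.
\end{itemize}

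Your recoloring potential argument cannot repair this. Lemma~\ref{lem:blue-pebbling} only bounds the value of a position from above by the values of its two successors. It gives no lower bound on the value after a blue vertex is turned red, and that value can drop to $0$: for a source, or for any vertex whose predecessors are already blue. So the claimed invariant ``two points against at most a one-unit drop'' is false.

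\textbf{The fix.} No trick is needed for $\vee$. On the second query to a vertex that Colorer colored blue, Delayer should simply answer $1$, copying Colorer; this is what the paper does. Then:
\begin{itemize}
\item every blue vertex has $v_1\vee v_2$ true and still earns exactly one point from the first $\ast$;
\item every red vertex, including the sink, gets $v_1=v_2=0$;
\item so the truth values of the pairs coincide exactly with Colorer's coloring.
\end{itemize}
As you check correctly in your last paragraph, a lifted axiom is then falsified only at an end position of the red--blue game. By the choice of Colorer, such a position contains at least $\Blue(G)-1$ blue pebbles. Hence $\PD(\Peb_G[\vee])\ge \Blue(G)-1$ directly, with no recoloring or potential argument.
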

\begin{proof}
    Let $G$ be a fixed sDAG. We prove that
    $\RMc(G) \leq \PD(\Peb_{G}[\vee])$ and $\PD(\Peb{G}[\vee]) \leq 2 \cdot \RMc(G)$.
    The results then follow from Theorem~\ref{thm:TreeClauseSpaceProverDelayerCharacterization}. 
    
        Let $b=\Blue(G)$. We first show the inequality $\PD(\Peb_{G}[\vee]) \leq  2b$ by giving a strategy for Prover, in which the Delayer can score at most $2b$ points. Prover  simulates the strategy of Pebbler in the red--blue game: If Pebbler pebbles a vertex~$v$ of $G$, Prover will query the variables~$v_1$ and~$v_2$ of $\Peb_{G}[\vee]$ in this order. The red--blue game ends after at most~$b$~vertices have been colored blue. We argue, that the Prover--Delayer game also ends after at most~$2b$~queries. Thus, Delayer only gets a chance to score $2b$ points. 
        In case the second variable of a pair gets queried, the best choice Delayer has is to follow the strategy of Colorer and to ensure that $v_1 \vee v_2$ is true under her constructed assignment, if~$v$ is colored~$1$, and false if~$v$ is colored~$0$. At the end of the red--blue game either a source vertex $s$ in $G$ is colored~$0$, or a vertex~$v$ of~$G$ is colored~$0$, while all its direct predecessors are colored~$1$. In the first case, the source $s$ being colored $0$ leads to the falsification of the corresponding source axiom $s[\vee]$ by Delayer. In the second case, Delayer will falsify a clause of the corresponding pebbling axioms $\big( \bigwedge_{u \in \pred_G(v)} \overline{u} \lor v \big) [\vee]$.
        
        Next, we show the inequality $\PD(\Peb_{G}[\vee]) \geq  b$ by giving a strategy for Delayer such that under any strategy of Prover, she scores at least~$b$~points. By Definition~\ref{def:RazMcKenziePrice}, there is a strategy of Colorer, in which Pebbler has to pebble $b$  vertices colored blue by Colorer, in order  to end the game. Delayer  essentially copies this strategy: The first time a variable in the  pair corresponding to a vertex $v$  gets queried, if Colorer would answer blue,  
        she can answer~$\ast$, otherwise she answers 0. The second time, she can copy the response of Colorer on $v$. 
        With this strategy she  scores at least~$b$~points. \qedhere
\end{proof}

\section{The space of lifted pebbling formulas in negative Resolution}\label{sec:negative-res}

In this section, we study the space in negative Resolution for  the pebbling formulas
lifted with the $\vee$ and with the $\oplus$ functions.  We show that for $\Peb_G[\vee]$ formulas, the space in negative Resolution lies between $\Black(G)$ and $\Blue(G)$, while for $\Peb_G[\oplus]$ formulas, the space in negative Resolution lies between $\Black(G)$ and $\Rev(G)$. It is interesting to compare these results with the existing space bounds for general Resolution shown in Table 1.

\subsection{Upper bounds}

\begin{theorem}\label{thm:upper-peb-nres-space}
Let $G$ be a sDAG. 
\begin{enumerate}
    \item 
$\Nclspaceref{\Peb_G[\vee]}\leq \blue(G)+4$
\item 
$\Nclspaceref{\Peb_G[\oplus]}\leq 2\Rev(G)+4$
\end{enumerate}

\end{theorem}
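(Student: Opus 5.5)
We simulate an optimal Pebbler strategy in the red--blue game as a depth-first traversal of its decision tree, keeping only negative clauses in memory. Every resolution step will resolve a clause that contains only negative literals against either an axiom or another negative clause, so the derivation is negative. For a set $B$ of (blue) vertices and a choice function $\sigma$ that picks an index $\sigma(u)\in\{1,2\}$ for each $u\in B$, write $N_\sigma(B)=\bigvee_{u\in B}\bar u_{\sigma(u)}$; this is a negative clause.

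\textbf{Key claim for $\Peb_G[\vee]$.} We prove by induction on $k=\Blue(G_{B,t})$ the following statement. For every negative clause $D$ and every $\sigma$, suppose the memory contains the two negative clauses $D\vee N_\sigma(B)\vee\bar t_1$ and $D\vee N_\sigma(B)\vee\bar t_2$. Then $D\vee N_\sigma(B)$ can be derived in negative Resolution using at most $k+c$ clauses simultaneously, with $c$ a small constant and the two input clauses counted.

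\emph{Base case.} Here $t$ is a source, or all predecessors of $t$ lie in $B$.
\begin{itemize}
\item If $t$ is a source, download the axiom $t_1\vee t_2$ and resolve it successively with the two input clauses. The negative parent is always the input clause.
\item If all predecessors of $t$ lie in $B$, download the lifted pebbling axiom $\bigvee_{u\in\pred(t)}\bar u_{\sigma(u)}\vee t_1\vee t_2$ and do the same. All of its negative literals already occur in $N_\sigma(B)$.
\end{itemize}

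\emph{Induction step.} Choose $v$ as in Lemma~\ref{lem:blue-pebbling}, i.e. the root of Pebbler's optimal decision tree. Extend $\sigma$ to $B\cup\{v\}$ by setting $\sigma(v)=1$ and apply the induction hypothesis to $(B\cup\{v\},t)$, whose value is at most $k-1$. The required inputs $D\vee N_\sigma(B)\vee\bar v_1\vee\bar t_i$ are obtained by weakening the present inputs. This yields $D\vee N_\sigma(B)\vee\bar v_1$, which we keep in memory. Repeating with $\sigma(v)=2$ yields $D\vee N_\sigma(B)\vee\bar v_2$. We now erase the two original inputs. The two clauses just derived are exactly the inputs for the claim on $(B,v)$, whose value is at most $k$. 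Applying the claim there with the same $D$ gives $D\vee N_\sigma(B)$.

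\emph{Space count.} The first phase uses at most $(k-1+c)+1$ clauses: one extra clause is stored while the second recursive call runs. The second phase uses at most $k+c$ clauses. So the bound $k+c$ propagates. Weakening into fresh copies costs one more clause, and we absorb it into $c$; I expect $c=4$ once weakening is folded into the first resolution step of the recursive call.

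\emph{Conclusion for part 1.} Apply the claim with $B=\emptyset$, $t=z$ and $D=\emptycl$. The sink axioms $\bar z_1,\bar z_2$ are precisely the required inputs, so we obtain $\emptycl$ in space $\Blue(G)+4$.

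\textbf{Part 2 ($\Peb_G[\oplus]$).} Here we use $\Rev(G)=\RMc(G)$ (the result of \cite{Chan13JustAPebble} stated earlier) and run the same recursion over Pebbler's red--blue strategy. Now both answers carry information. For $\oplus$, ``$v$ true'' and ``$v$ false'' are each described by two assignments to $(v_1,v_2)$. The recorded clauses are therefore $D$ together with the two literals that falsify one such assignment. I plan to arrange the polarities so that the remembered clause is always negative, by letting $\sigma$ range over the assignments consistent with the color and resolving the positive parts against the lifted axioms.

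In this setting every query, red or blue, forces a stored clause for each of two branches. The recursion is then $S(r)\le S(r-1)+2$ in the number $r$ of rounds, which gives $2\Rev(G)+4$.

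\textbf{Main obstacle.} I expect the main difficulty to be the exact constant bookkeeping: weakening costs, and the order in which inputs are erased. For $\oplus$ there is a further difficulty: every resolution step must keep a negative parent, even though the $\oplus$-axioms contain mixed polarities. My first attempt will be to order the eliminations so that the positive literals $v_i$ are always resolved away against the stored negative clause before they can combine with another non-negative clause.
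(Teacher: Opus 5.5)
Your plan is the paper's: pick $v$ via Lemma~\ref{lem:blue-pebbling}, obtain the $\bar v_1$- and $\bar v_2$-clauses through $(B\cup\{v\},t)$, then recurse on $(B,v)$. However, the space count has a genuine gap.

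Once a red step has been taken, the inputs $I_j=D\vee N_\sigma(B)\vee\bar t_j$ are derived clauses that cannot be downloaded again, and \emph{both} phase-1 calls need them. Your claim, as you prove it, erases its inputs. So the first call must run on copies while $I_1,I_2$ stay in memory, which costs $(k-1+c)+2$, not $(k-1+c)+1$. Suppose instead the first call works on $I_1,I_2$ directly and must return them intact. Then inside that call, the recursion on $(B\cup\{v\},w)$ has to carry $I_1,I_2$ along, again $+2$.

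Let $f$ be the cost when a call may erase its inputs and $g$ the cost when it must preserve them. The call on $(B,v)$ adds no overhead, and you get $f(k)\le\max\{g(k-1),\,f(k-1)+1\}$ and $g(k)\le\max\{g(k-1)+1,\,f(k)+2\}$. This gives $f(k)=2k+O(1)$. The $+2$ is paid at every blue level, so it cannot be absorbed into $c$.

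Repaired this way, your argument yields $2\Blue(G)+O(1)$. That suffices for the $O(\Blue(G))$ entry of Table~1, but not for $\Blue(G)+4$. The paper's proof passes over the same point. It derives each $\bar v_i$ in space $k+3$ by treating $\bar t_1,\bar t_2$ as axioms that stay in memory (``rooted'' refutations). Yet its phase~2 deletes everything except $\bar v_1,\bar v_2$, so the refutation it builds is not rooted, and the reuse breaks one level down. Your explicit bookkeeping exposes the problem rather than creating it.

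Smaller repairs are needed even for the weaker bound.
\begin{enumerate}
\item Induction on $k$ alone does not cover the call on $(B,v)$, whose value is only $\le k$. Induct on $|V(G_{B,t})\setminus B|$, as the paper does.
\item The inequality $\Blue(G_{B,v})\le\Blue(G_{B,t})$, which you and the paper both assert, needs proof. It can fail for arbitrary $v$, for instance when $B$ separates $v$ from $t$. For the root $v$ it holds. Strict decrease on the blue side forces $v$ to reach $t$ by a path avoiding $B$. Pebbler on $G_{B,v}$ can then simulate the red subtree, answering queries outside $G_v$ himself: red if the vertex is reachable from $v$ avoiding $B$, blue otherwise.
\item Part~2 is only a plan. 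The missing device is the paper's. First derive $\bar v_1\vee\bar v_2$ from the refutation under $v_1=v_2=1$. Keep it in memory to cancel the positive literal that appears when the two odd-parity refutations are un-restricted; this gives $\bar v_2$ and then $\bar v_1$. Finish with the $v_1=v_2=0$ refutation, with $\bar v_1,\bar v_2$ in memory. There the factor $2$ is affordable because in the red--blue price both answers strictly lower the value, unlike the red answer in Part~1.
\end{enumerate}
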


Let $t$ be the unique sink in $G$. We consider first the pebbling formulas $\Peb_G[\vee]$ and for a sets $B\subseteq V$ 
and $t\in V\setminus \{t\}$
let $A_{B,t}$ be the set of partial assignments $\alpha$ defined on the variables $t_1, t_2$  and for each $v\in B$ on exactly one of the variables in the pair
$\{v_1, v_2\}$ and such that $\alpha(t_1)=\alpha(t_2)=0 $ and $\alpha(v_i)=1.$

Let $G=(V,E)$ with sink $t$ and $B\subseteq V$, $t\in V\setminus B$ and let $\alpha\in A_{B,t}$, 
we will consider a particular kind of negative Resolution refutations of $\Peb_G[\vee]\ \restriction_\alpha$ that we call {\em rooted refutation}. In such a refutation, the first two steps download the clauses $\bar t_1, \bar t_2$ in memory,
and these two clauses stay in memory until the end of the refutation. Observe that the clause space in a rooted refutation is at most two memory units larger than an unrestricted refutation.

The first part of Theorem~\ref{thm:upper-peb-nres-space} follows from the next lemma by letting $B=\emptyset$.

\begin{lemma}
Let $G$ be a graph with single sink $t$ and let $B\subseteq V$, $t\in V\setminus B$ and let $\alpha\in A_{B,t}$, then  
there is a rooted refutation for $\Peb_G[\vee]\ \restriction_\alpha$ with clause space at most  $\Blue (G_{B,t})+4$
\end{lemma}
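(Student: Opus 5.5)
We argue by induction on $\Blue(G_{B,t})$, and within a fixed value on the number of vertices of $G$ that reach $t$ and lie outside $B$. We mirror Lemma~\ref{lem:blue-pebbling}: we pick the vertex $v$ that an optimal Pebbler strategy on $G_{B,t}$ queries first, and build the rooted refutation of $\Peb_G[\vee]\restriction_\alpha$ from two refutations of smaller instances. The key observation is what the restriction $\alpha\in A_{B,t}$ does to the formula. Setting $t_1=t_2=0$ turns the sink axiom clauses into $\bar t_1,\bar t_2$, which are kept in memory in a rooted refutation. It also shortens the pebbling axioms of $t$ by removing the literals $t_1,t_2$, so they become purely negative clauses. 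Setting one $v_i=1$ for each $v\in B$ satisfies every lifted pebbling axiom of $v$, and deletes from the other axioms exactly the clauses that contain $\bar v_i$. What remains of $v$'s contribution is only clauses containing $\bar v_{3-i}$.

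For the base case $\Blue(G_{B,t})=1$, there are two possibilities. If $t$ is a source, then $\Peb_G[\vee]\restriction_\alpha$ contains the empty clause. If all predecessors of $t$ lie in $B$, then the axiom of $t$ restricted by $\alpha$ also becomes empty: choose, for each predecessor $u$, the literal $\bar u_j$ where $u_j$ is the variable fixed to $1$. In either case the space is trivially at most $5$.

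For the inductive step, fix the first queried vertex $v$. Let $b_{\mathrm{red}}=\Blue(G_{B,v})$ and $b_{\mathrm{blue}}=\Blue(G_{B\cup\{v\},t})$, so that $\max\{b_{\mathrm{red}},\,b_{\mathrm{blue}}+1\}\le \Blue(G_{B,t})$.

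\emph{Blue branch.} Let $\alpha_j=\alpha\cup\{v_j\mapsto 1\}\in A_{B\cup\{v\},t}$ for $j=1,2$. By induction there are rooted refutations $\pi_j$ of $\Peb_G[\vee]\restriction_{\alpha_j}$ in space $b_{\mathrm{blue}}+4$. Un-restricting $v_j$ turns $\pi_j$ into a negative derivation of $\bar v_j$, because only clauses of $v$'s own axioms (which contain the positive literal $v_j$) or clauses with $\bar v_j$ change. Positive occurrences of $v_j$ appear only in $v$'s own axioms. To keep the derivation negative, we replace those axioms by the weakening $\bar v_j$ itself, or equivalently we note that an axiom of $v$ gets used only through its negative part.

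\emph{Red branch.} Let $\beta=\alpha|_B\cup\{v_1,v_2\mapsto 0\}\in A_{B,v}$. By induction there is a rooted refutation of $\Peb_G[\vee]\restriction_\beta$ in space $b_{\mathrm{red}}+4$. Its root clauses $\bar v_1,\bar v_2$ are exactly what the blue branch produces. Deleting the $t$-part of $\alpha$ does no harm, since vertices not reaching $v$ may be ignored.

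\emph{Combining the branches.} We run $\pi_1$ (under $t_1=t_2=0$, with $\bar t_1,\bar t_2$ held) to derive $\bar v_1$ and keep it. We then run $\pi_2$ to obtain $\bar v_2$; this takes at most $b_{\mathrm{blue}}+4+1\le \Blue(G_{B,t})+4$ space. Finally we run the red refutation, which uses $\bar v_1,\bar v_2$ in place of downloading them. Every step is negative because $\bar v_1,\bar v_2$ are negative clauses, and in the red phase $\bar t_1,\bar t_2$ may be dropped, so that phase costs at most $b_{\mathrm{red}}+4$.

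The main obstacle is the un-restriction step in the blue branch. Lifting a refutation of $F\restriction_{v_j=1}$ to a derivation of $\bar v_j$ in the presence of the $\vee$-gadget requires care: some lifted axioms of successors of $v$ contain $\bar v_j$, but others contain $\bar v_{3-j}$, and the latter are not removed by $v_j=1$. We need the clauses carrying $\bar v_j$ to feed into the derivation while remaining negative-compatible. I would first verify that adding the literal $\bar v_j$ back to every clause of $\pi_j$ preserves negativity: a negative parent stays negative, and a non-negative parent stays non-negative. Here the only subtlety is the axioms of $v$ itself, which are satisfied and hence absent, so no resolution on $v_j$ occurs. The remaining point is to check that the final clause is $\bar v_j$ and not the empty clause plus extra literals. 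Holding $\bar t_1,\bar t_2$ costs the advertised two units, which accounts for the $+4$.
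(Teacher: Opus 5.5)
\paragraph{Same route as the paper, with a gap.} Your decomposition is the paper's own. You take the vertex $v$ from Lemma~\ref{lem:blue-pebbling}, apply the induction hypothesis for $B\cup\{v\}$ once for each $v_j$ to derive $\bar v_1$ and then $\bar v_2$, and finish with a rooted refutation of the subgraph with sink $v$ that consumes these two clauses. Your un-restriction argument for the blue branch is fine in its final form: the axioms of $v$ are satisfied by $v_j=1$, so re-inserting $\bar v_j$ keeps every step negative and ends in a subclause of $\bar v_j$.

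\paragraph{Where it fails: the root clauses.} The step that does not go through is the handling of $\bar t_1,\bar t_2$, and the paper's proof has the same problem (it also says ``delete everything from memory except these two clauses'').
\begin{itemize}
\item The red phase works only if the red refutation never needs $\bar v_1,\bar v_2$ after erasing them. A later re-download cannot be replaced by derived copies that are gone. This is exactly why the induction hypothesis must assert a \emph{rooted} refutation.
\item The refutation you assemble erases $\bar t_1,\bar t_2$ before its red phase, so it is not rooted. The inductive step therefore does not re-establish the hypothesis it relies on.
\item Keeping $\bar t_1,\bar t_2$ through the red phase costs up to $b_{\mathrm{red}}+6$. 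Nothing forces $b_{\mathrm{red}}$ below $\Blue(G_{B,t})$: for the path $s\to v\to t$ both values equal $2$.
\item Weakening ``rooted'' to ``downloaded once, never re-downloaded'' does not help. Then $\pi_1$ may erase $\bar t_1,\bar t_2$ in its own red phase while $\pi_2$ still needs them. Re-downloading breaks the weakened invariant that the next level up needs, so they must be carried through that phase at cost $+2$.
\end{itemize}

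\paragraph{What the bookkeeping actually gives.} Let $a$ be the peak while the root clauses are still needed and $b$ the peak afterwards. Your construction satisfies $a(k)\le\max\{a(k-1)+1,\,b(k-1)+2\}$ and $b(k)\le\max\{b(k-1)+1,\,a(k)\}$. Here the $a(k)$ on the right comes from the red sub-refutation: smaller graph, same blue value. This yields only $2\Blue(G_{B,t})+O(1)$. For complete binary trees of height $h$, where $\Blue=h+2$, it gives about $2h$ clauses instead of $h+6$.

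So the argument supports the $O(\Blue(G))$ entry of the table, but not the additive constant $+4$. To get the lemma as stated, you would need a way to avoid carrying the root clauses across the red phases nested inside $\pi_1$.

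\paragraph{A smaller point.} The inequality $b_{\mathrm{red}}\le\Blue(G_{B,t})$ does not follow from Lemma~\ref{lem:blue-pebbling}, which bounds $\Blue(G_{B,t})$ from above, and the paper does not justify it either. It needs a short Colorer argument. Colorer plays optimally on the vertices that reach $v$, colors the vertices reachable from $v$ red, and colors all other vertices blue. Then the extra red vertex $t$, and every other vertex outside $G_{B,v}$, can never end the game.
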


\begin{proof}
    By induction on $n'=|V\setminus B|$. For $n'=0$, the predecessors of $t$ in $G$ are in $B$. The clauses $t_1\vee t_2, \bar t_1$ and $\bar t_2$ belong to the formula  $\Peb_G[\vee]\ \restriction_\alpha$. There is a rooted refutation of these three clauses with clause space 4.   
    On the other hand $\Blue (G_{B,t})=1$.

For the induction step, let $B\subseteq V$ and let $k=\Blue (G_{B,t}).$
By Lemma~\ref{lem:blue-pebbling} there is some vertex $v\in V\setminus(B\cup \{t\}$ with $\Blue (G_{B\cup\{v\},t})<\Blue (G_{B,t})$.
For an assignment $\alpha\in A_{B\cup\{v\},t}$ by induction hypothesis,  
there is a rooted refutation of $\Peb_G[\vee]\ \restriction_\alpha$ with  clause space at most $\Blue (G_{B\cup\{v\},t})+4\leq\Blue (G_{B,R})+3=k+3$  (the last inequality follows by the second part of Lemma~\ref{lem:blue-pebbling}). 

This means that for $\alpha\in A_{B,t}$ it is possible to derive the clause $\bar v_1$ in negative Resolution from 
$\Peb_G[\vee]\ \restriction_\alpha$ in clause space $k+3$, and it is possible to derive the two clauses $\bar v_1$ and $\bar v_2$ in clause space $k+4$ (one after the other).

 Let $G'$ the subgraph of $G$ with single sink $v$ (the vertices not reaching $v$ in $G$ are deleted in $G'$). Also  by induction hypothesis for $\alpha\in A_{B,t}$ the formula $\Peb_G'[\vee]\ \restriction_\alpha$
can be refuted with a rooted refutation in negative Resolution and  clause space  
 at most $\Blue (G'_{B,v})+4\leq \Blue (G_{B,t})+4=k+4.$

For $\alpha\in A_{B,t}$, a strategy to refute $\Peb_G[\vee]\ \restriction_\alpha$ in negative Resolution is the following: derive 
$\bar v_0$ and $\bar v_1$ in clause space $b+4$. Delete everything from memory except these two clauses and simulate the rooted refutation
of $\Peb_G'[\vee]\ \restriction_\alpha$ but with the axioms from $\Peb_G'[\vee]\ \restriction_\alpha$.
The only two axioms from $\Peb_G'[\vee]\ \restriction_\alpha$ that are not in $\Peb_G[\vee]\ \restriction_\alpha$
are $\bar v_0$ and $\bar v_1$,
and these have been derived and kept in memory.
\end{proof}

We prove now the second part of the theorem in a similar way, but using the red-blue pebble game.

\begin{lemma}
Let $G$ be a graph with single sink $t$ and sources $S\subseteq V\setminus \{t\}$, 
and let $v\in V\setminus S$. Let $\alpha$ be an assignment of the variables $v_1,v_2$. 
\begin{enumerate}
    \item If $\alpha(v_1)+\alpha(v_2)$ is even, then $\Peb_G[\oplus]\restriction_\alpha\leq 2\RMc(G_{S,v})+4$ and 
    \item if $\alpha(v_1)+\alpha(v_2)=1$ is odd, then $\Peb_G[\oplus]\restriction_\alpha\leq 2\RMc(G_{S\cup\{v\},t})+4$
\end{enumerate}
\end{lemma}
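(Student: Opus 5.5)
The statement is somewhat garbled, so I read it as a clause-space bound in negative Resolution. For the restriction $\alpha$ on $v_1,v_2$, the restricted formula has a negative refutation of space at most $2\RMc(\cdot)+4$, in the appropriately colored game. Under the restriction, the vertex $v$ effectively gets value $\alpha(v_1)\oplus\alpha(v_2)$. An even sum means $v$ is "false" (red), so the part of the formula above $v$ becomes irrelevant and we must refute the pebbling formula of the subgraph with sink $v$. An odd sum means $v$ is "true" (blue), so $v$ acts as a new source and the game lives on $G$ with $v$ blue. In both cases I will argue by induction on the number of uncolored vertices, mirroring the proof of the previous lemma for $\Peb_G[\vee]$. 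I will generalize the statement to arbitrary sets $B$ of blue vertices, with $\alpha$ fixing each pair $u_1,u_2$ for $u\in B$ to odd parity. I will also fix the red target $r$ to even parity and use rooted refutations that keep the two clauses encoding "$r_1\oplus r_2$ is false" in memory.

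\textbf{Base case.} If $r$ is a source or all predecessors of $r$ are blue, the restricted pebbling axioms for $r$ collapse to the clauses $r_1\vee r_2$ and $\bar r_1\vee\bar r_2$. Together with the restriction fixing $r$ even, these give a constant-space refutation. Negativity needs some care here: every clause of $\Peb_G[\oplus]$ mixes polarities. So I would choose $\alpha$ on the red vertex so that a purely negative unit survives, for instance $\alpha(r_1)=\alpha(r_2)=1$, or $(0,0)$ handled symmetrically. I would use the freedom in choosing which odd or even assignment to take to keep one parent negative at each step.

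\textbf{Induction step.} Let $u$ be the first vertex Pebbler queries in an optimal $\RMc$ strategy on $G_{B,r}$. Both subgames, ``$u$ red'' on $G_{B,u}$ and ``$u$ blue'' on $G_{B\cup\{u\},r}$, have price at most $\RMc(G_{B,r})-1$. By induction, for each of the two odd assignments to $u_1,u_2$ I get a refutation of the restricted formula with space $2(\RMc-1)+4$. Reading these refutations as derivations gives the two clauses expressing ``$u$ is false'', namely $(u_1\vee\bar u_2)$ and $(\bar u_1\vee u_2)$ weakened appropriately. Derive the negative one of these clauses and keep it; it costs one memory unit. Then derive the needed negated literals, at most two clauses held simultaneously, which is where the factor $2$ comes from. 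Finally run the induction-hypothesis refutation for $u$ red, i.e.\ on $G_{B,u}$, replacing the axioms that the even restriction on $u$ would provide by the kept clauses. This yields space $2(\RMc-1)+4+2=2\RMc+4$.

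\textbf{Main obstacle.} The hardest step will be keeping every inference negative while reconstructing the parity information of $u$. The $\oplus$ gadget forces clauses like $u_1\vee\bar u_2$ that are not negative, and resolving them is only allowed against a purely negative clause. I would first try to choose the restriction on blue vertices as $(1,0)$ or $(0,1)$ and on red vertices as $(1,1)$. The derived ``memory'' clauses would then be the negative units $\bar u_1$ and $\bar u_2$, exactly as in the $\vee$ case. If that fails, I would carry both parity cases through the induction and accept the factor $2$ in the space bound.
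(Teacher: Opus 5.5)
\textbf{There is a genuine gap in the glueing step.} Your skeleton matches the paper's. You split on the first vertex $u$ queried in an optimal red--blue strategy and use the red subgame $G_{B,u}$ and the blue subgame $G_{B\cup\{u\},r}$ as induction hypotheses for the even and odd restrictions of $(u_1,u_2)$. The problem is glueing these refutations together, which is the only nontrivial point for \emph{negative} Resolution, and it fails as you describe it.

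\begin{itemize}
\item The clauses $(u_1\vee\bar u_2)$ and $(\bar u_1\vee u_2)$ expressing ``$u$ is false'' are both mixed, so there is no ``negative one'' to derive and keep.
\item Turning a negative refutation of $F\restriction_\beta$ with $\beta(u_1)=0$ back into a derivation from $F$ re-inserts the positive literal $u_1$ into the shortened clauses. Negative parents then stop being negative, so the result is not an N-Resolution derivation. The only restriction whose lifting is automatically negative is $(1,1)$, which re-inserts $\bar u_1,\bar u_2$.
\item For the same reason the ``rooted'' analogy with the $\vee$ case breaks. The sink clauses of $\Peb_G[\oplus]$ are mixed, and the only negative axioms are the source clauses $\bar s_1\vee\bar s_2$.
\item Your fallback (restriction $(1,1)$ for red and $(1,0)$ or $(0,1)$ for blue) does not by itself give the units $\bar u_1,\bar u_2$. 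Lifting those refutations yields $\bar u_1\vee\bar u_2$ and non-negative clauses. Nor is there any freedom of choice: the final refutation must dispose of all four assignments to $(u_1,u_2)$.
\end{itemize}

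\textbf{The missing idea} is to order the four restricted refutations so that each is simulated using negative clauses already in memory.
\begin{enumerate}
\item Lift the refutation for $(1,1)$ (red subgame) to a negative derivation of $\bar u_1\vee\bar u_2$.
\item Keeping this clause, simulate the refutation for $(0,1)$ (blue subgame). Each of its axioms that comes from a clause $A\vee u_1$ of $F$ is obtained by downloading $A\vee u_1$ and resolving it with the negative clause $\bar u_1\vee\bar u_2$. The simulated derivation then carries only the extra negative literal $\bar u_2$ and ends with $\bar u_2$.
\item Keeping $\bar u_2$, do the same with $(1,0)$, using the negative unit $\bar u_2$, to obtain $\bar u_1$.
\item With $\bar u_1,\bar u_2$ in memory, simulate the refutation for $(0,0)$ (red subgame again) down to $\square$.
\end{enumerate}
So the red subgame is used twice, once before and once after the two blue ones. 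At most two extra clauses are held at any time, which is where the factor $2$ comes from. Your proposal uses the red subgame only once, at the end, and never produces the negative clauses needed to simulate it.
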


\begin{proof}
    By induction on $n'=|V\setminus S|$. For 
For $n'=1$ the only vertex that is not a source is $t$. 
For the two predecessor $u,w$ of $t$ the clauses belong there are 8 clauses with  variables in  $\{u_1,u_2,v_1,v_2\}$ that are axioms in $\Peb_G[\vee]\ \restriction_\alpha$ and  there is negative refutation of these  clauses with clause space 4   
    On the other hand $\RMc (G_{S,t})=0$.

For the induction step, let  $k=\RMc (G_{S,t}).$
There is some vertex $v\in V\setminus(S\cup \{t\}$ with $\RMc (G_{S\cup\{v\},t})<\RMc (G_{S,t})$
and $\RMc (G_{S,v})<\RMc (G_{S,t})$. 

For the assignment $\alpha$ with $\alpha(v_1)=\alpha(v_2)=1$ by the induction hypothesis
there is negative refutation of $\Peb_G[\vee]\ \restriction_\alpha$ with  clause space at most $2\RMc (G_{S,v})+4\leq 2(\RMc (G_{S,t})-1)+4=2k+2$. 

This means that  it is possible to derive the clause $(\bar v_1\vee \bar v_2)$ in negative Resolution from 
$\Peb_G[\vee]\ \restriction_\alpha$ in clause space $2k+2$.

Analogously  for the assignments $\beta$ and $\gamma$ with $\beta(v_1)=0, \beta (v_2)=1$ and $\gamma(v_1)=1, \beta (v_2)=0$ 
there are negative refutation $R_1$ of $\Peb_G[\vee]\ \restriction_\beta$ and $R_2$ of $\Peb_G[\vee]\ \restriction_\gamma$ with  clause space at most $2\RMc (G_{S\cup\{v\},t})+4\leq 2k+2$ and also for the assignment $\delta(v_1)=\delta(v_2)=0$
there is a negative refutation $R_3$ of $\Peb_G[\oplus]\restriction_\delta$ using at most  space  $2k+2$.

A strategy to refute $\Peb_G[\oplus]$  in negative Resolution is the following: derive 
$(\bar v_1\vee\bar v_2)$ in clause space $2k+2$. Delete everything from memory except this clause 
and simulate first the refutation $R_1$ obtaining $\bar v_2$ in space $2k+3$. Keep  $(\bar v_1\vee\bar v_2)$ as well as
$\bar v_2$ in memory and simulate $R_2$ to derive $\bar v_1$ in space $2k+4$. Delete $(\bar v_1\vee\bar v_2)$ keeping 
the two clauses $\bar v_1$ and $\bar v_2$ in memory and simulate $R_3$ to refute
$\Peb_G[\oplus]$ in space $2k+4$.
\end{proof}

\subsection{Lower bounds}

We give  in this section space lower bounds for N-Resolution refutation  the lifted pebbling formulas, showing that for both kinds
of lifted formulas,
the black pebbling number of the underlying graph is a lower bound for the space. These results  are considerably more involved than the upper bounds, and are not completely tight, since as mentioned, there are known  graph families for which  the corresponding 
black and blue or reversible pebbling numbers differ by a logarithmic factor.

For the case of the $\Peb_G[\vee]$ formulas, we show that the space needed in a negative refutation must be at least as large
as the maximum number of negative literals that appear together in a clause in a  negative refutation of $\Peb_G$ (the negative width of the refutation). We then use the fact that the negative width of a negative refutation of $\Peb_G$ is at least the 
black pebbling number of $G$.
The structure of the proof as well as the next  definitions are clearly inspired  in \cite{BN11UnderstandingSpaceFULLREF}. 

 \begin{definition}\label{def:projection}\index{projection}
   Let $V$ be a set of variables, 
   and let $\mathbb D$ be a set of clauses over $V[\vee]$. The  projection of $\mathbb D$,  $\proj(\mathbb D)$ is the set of clauses 
    $C$
    with $\vars(C)\subseteq V$, 
    with the property that $C[\vee]$ is a logical consequence of $\mathbb D$, 
   $\mathbb D\models C[\vee].$
\end{definition}

Observe that if a clause $C$ is in the  projection of a $\mathbb D$ then
every weakening of $C$,  is also
in $\proj(\mathbb D)$. Since we want the set of variables in the projection
to remain small, we need to consider only the minimal projected clauses. These are
defined as follows:

\begin{definition}\label{def:min-projection}
Let $\mathbb D$ be a set of clauses over $V[\vee]$ and $C$ be a clause with $\vars(C)\subseteq V$. We say that  $C$ is a {\em minimal  clause projected} by $\mathbb D$ 
and write 
$C\in\mproj(G)$,
if $C\in\proj(G)$ but for any proper sub-clause $C'\subsetneq C,$ $C'\not\in \proj(G)$.

For a set of clauses $\mathbb D$ with $\vars(\mathbb D)\subseteq  V[\vee]$,  
we denote by $\Mproj(\mathbb D)$ the set of clauses
$C$ that are minimal projections of some subset of
clauses in $\mathbb D$
$$\Mproj(\mathbb D)=\{C\ | \ \exists\ \mathbb D'\subseteq \mathbb D, C\in \mproj(\mathbb D')\}.$$
\end{definition}


 Notice that the empty clause can be in
$\proj(\mathbb D)$ only if $\mathbb D$ is unsatisfiable.
Also, if $\mathbb D'\subseteq \mathbb D$, then
$\Mproj(\mathbb D')\subseteq \Mproj(\mathbb D)$.

  Let $G$ be a sDAG, and consider the formula $\Peb_G[\vee].$
For a set of clauses $\mathbb D$ with  variables in $\vars(F[\vee]$ we define teh negative projection $\NMproj(\mathbb D)$ as the set of clauses in 
$\Mproj(\mathbb D)$ that do not have any positive literals or are axioms in $\Peb_G$,
$$ \NMproj(\mathbb D)=\{C\in \Mproj(\mathbb D)\ |\ C \hbox{ does not have positive literals or } C\in
\Peb_G\}.$$

  The next result shows that for a  set of clauses $\mathbb D$, the number of clauses in $\mathbb D$ is always greater than the number of distinct variables that appear negated  in some clause in $\NMproj(\mathbb D)$. We will later apply this result considering that $\mathbb D$ is a configuration in an N-Resolution proof of the lifted formula $\Peb_G[\vee]$.

\begin{lemma}\label{lem:cs>nproj}
Let $\mathbb D\not=\emptyset$ be a set of clauses
with  variables in $\vars(\Peb_G[\vee])$. The number
of clauses in $\mathbb D$ is larger than
the set of negated literals in the
minimal  negative projections of $\mathbb D$
$$|\mathbb D| > |\{ x\ |\ \bar x\in C\  \mbox{\rm for some } C\in \NMproj(\mathbb D)\}|.$$
\end{lemma}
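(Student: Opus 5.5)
The plan is to adapt the Hall's-theorem argument of \cite{BN11UnderstandingSpaceFULLREF} (projection size versus configuration size) to the $\vee$ lift. Let $N$ be the set of variables $x$ such that $\bar x$ occurs in some $C\in\NMproj(\mathbb D)$. For each such $x$ fix a witness: a subset $\mathbb D_x\subseteq\mathbb D$ and a clause $C_x\in\mproj(\mathbb D_x)$ containing $\bar x$. The key structural fact concerns a minimal projected clause $C$ with $\bar x\in C$. By minimality, $\mathbb D'\not\models (C\setminus\{\bar x\})[\vee]$, and $(\bar x)[\vee]=\bar x_1\wedge\bar x_2$. Hence there are assignments satisfying $\mathbb D'$ that falsify $(C\setminus\{\bar x\})[\vee]$ and set $x_1\vee x_2$ true. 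Every such assignment must falsify both $\bar x_1$ and $\bar x_2$, so it sets $x_1=x_2=1$. Flipping a single coordinate, say $x_1$, gives an assignment that falsifies $C[\vee]$, and therefore falsifies some clause of $\mathbb D'$. That clause must contain the literal $x_1$ (respectively $x_2$) positively. So every negated variable of a minimal projected clause is ``supported'' by clauses of $\mathbb D'$ mentioning $x_1$ or $x_2$ positively.

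Next I set up a bipartite graph between $N$ and $\mathbb D$. A clause $D$ is adjacent to $x$ if $D$ contains $x_1$ or $x_2$. The goal is to show that every subset $N'\subseteq N$ has a neighbourhood of size at least $|N'|$, and in fact a strict surplus overall; an injection then gives $|\mathbb D|\ge|N|$. To get the strict inequality I will use that $\mathbb D\neq\emptyset$, and argue that some clause is unmatched or contributes extra. One candidate is a clause needed to make the projection nontrivial: the conjunction must imply something beyond the literals $x_i$. Alternatively, I can reproduce the counting of \cite{BN11UnderstandingSpaceFULLREF}, taking a minimal $\mathbb D^*\subseteq\mathbb D$ with $|N(\mathbb D^*)|\ge|\mathbb D^*|$ and deriving a contradiction.

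For the Hall condition I will argue by contradiction. Take $N'$ with neighbourhood $\mathbb D''$ satisfying $|\mathbb D''|<|N'|$. Form a partial assignment that satisfies every clause of $\mathbb D''$ by choosing one positive literal among the $x_i$, $x\in N'$. Because there are fewer such clauses than variables in $N'$, some variable $y\in N'$ gets no chosen literal. Setting $y_1=y_2=0$ then keeps $\mathbb D''$ satisfiable, and the remaining clauses of $\mathbb D$ do not mention $y_1,y_2$ positively, so they are unaffected. Combining this with a satisfying-style extension witnessing the minimality of $C_y$, I get an assignment that satisfies $\mathbb D_y$ but falsifies $C_y[\vee]$, which contradicts $C_y\in\proj(\mathbb D_y)$.

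The restriction to negative clauses or axioms of $\Peb_G$ in $\NMproj$ is used to control the positive part of $C_y$. The axioms have a single positive literal $v$, and lifted this gives $v_1\vee v_2$. Negative clauses make falsifying $C_y[\vee]$ purely a matter of setting pairs to $1$, which is monotone and compatible with the partial assignment above. I expect the main obstacle to be this merging step, namely combining the minimality witnesses of different $C_x$ that come from different subsets $\mathbb D_x$ into a single consistent assignment. The $\vee$ gadget is not authoritarian-free, so setting one coordinate already fixes the value. I would first try to handle this by induction on $|N'|$ using only the witness for $y$ together with monotonicity. If that fails, I would follow the locality-lemma route of \cite{BN11UnderstandingSpaceFULLREF}.
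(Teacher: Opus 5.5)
There is a genuine gap, and it starts with a sign error in your structural fact. If $C\in\mproj(\mathbb D')$, $\bar x\in C$, and $\beta$ satisfies $\mathbb D'$ but falsifies $(C\setminus\{\bar x\})[\vee]$, then $\beta$ must satisfy $(\bar x)[\vee]=\bar x_1\wedge\bar x_2$. So $\beta(x_1)=\beta(x_2)=0$, not $1$: an assignment making $x_1\vee x_2$ true would falsify $C[\vee]$ outright. Flipping $x_1$ to $1$ then falsifies $C[\vee]$, and the clause of $\mathbb D'$ that breaks must contain $\bar x_1$, i.e.\ $x_1$ \emph{negatively}. The correct incidence relation is therefore ``$D$ contains $\bar x_1$ or $\bar x_2$'', which is the one the paper uses. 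Your Hall step has the wrong polarity throughout: setting $y_1=y_2=0$ makes $(\bar y)[\vee]$ true, so no extension of that assignment can ever falsify $C_y[\vee]$. The axiom case is also not covered. For $C\in\Peb_G$ with positive literal $v$, falsifying $C[\vee]$ needs $v_1=v_2=0$, which is not the ``monotone'' situation you describe. The paper treats this case with a separate argument along predecessor chains in $G$.

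Even with the signs fixed, matching variables into clauses does not close. A deficient $N'$ frees only one variable $y$, but $C_y$ may contain many other negated variables $z$, each needing some $z_a=1$. Your choice of one literal per clause of $N(N')$ is not a matching, so it may set both $z_1,z_2$ to $0$. Moreover, setting chosen coordinates to $0$ can falsify clauses outside $N(N')$ that contain them positively, so you cannot glue this with the minimality witness for $C_y$. This is exactly the merging problem you flag, and it is left unresolved, as is the strict inequality.

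The missing idea is to apply Hall in the other direction. A matching of a set of clauses into $V^-$ lets you satisfy all of them by setting a single coordinate $x_a:=0$ per pair. Every twin is then free to be set to $1$, which falsifies $(\bar x)[\vee]$ for all matched $x$ at once. The paper first shows that $\mathbb D$ cannot be matched into $V^-$; otherwise some $C\in\NMproj(\mathbb D)$ would be falsified while its witness set is satisfied. It then takes a \emph{maximal} $\mathbb D_1$ with $|N(\mathbb D_1)|<|\mathbb D_1|$ and shows that $V_2=V^-\setminus N(\mathbb D_1)$ is empty. Otherwise $\mathbb D\setminus\mathbb D_1$ matches into $V_2$ by maximality. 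For a projected $C=C_1\cup C_2$ meeting $V_2$, one combines an assignment on $V_1[\vee]$ obtained from minimality ($\hat{\mathbb D}\not\models C_1[\vee]$) with the matching assignment on $V_2$, falsifying $C[\vee]$. That split is how the paper handles your merging step, and $|V^-|=|N(\mathbb D_1)|<|\mathbb D_1|\le|\mathbb D|$ gives strictness directly.
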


\begin{proof}
Let $V^-\subseteq V$ be the set of variables  that appear negated some  clause in $\NMproj(\mathbb D)$, 
and suppose $V^-\not=\emptyset.$ 
We construct a bipartite incidence graph with two kinds of vertices.
On the left side the vertices represent the set of clauses in 
$\mathbb D$ and on the 
right side the vertices are the variables in $V^-.$ There is an edge
between $x\in V^-$ and a clause $D$ if at least one of the lifted
literals  $\bar x_1, \bar x_2$ appears in $D$. For a set $\mathbb D'\subseteq \mathbb D$
we denote by $N(\mathbb D')$ the set of its neighbors. 
We claim that there is a set of 
clauses $\mathbb D'\subseteq \mathbb D$ 
such that $|N(\mathbb D')|<|\mathbb D'|$. If this is not true,
then for all subsets $\mathbb D'\subseteq \mathbb D$, 
$|N(\mathbb D')|\geq |\mathbb D'|$, and 
by Hall's Theorem, there would be a set of edges $M$ in the incidence graph matching
$\mathbb D$ into $V^-$.  Using the set of edges in $M$ it is possible
to obtain a partial assignment $\alpha$ over $V^-[\vee]$ satisfying
all the clauses in $\mathbb D$ and with the property that for each pair of lifted variables $\{x_1,x_2\}$ for $x\in V^-$, at most one variable in each pair 
is assigned by $\alpha$.
We show how this 
assignment $\alpha$ can be extended to a total assignment $\alpha^*$ of $V[\vee]$  in such a
 way, that for some clause $C\in\NMproj(D)$, the formula $C[\vee]$ is falsified, which would be a contradiction.
 If there is a clause $C$ containing only negated literals, 
 this is done assigning 
 for each 
 literal  $\bar x$ in $C$  the twin variable $x_b$ of the variable $x_a$ in the
 domain of $\alpha$ with the value 1. Suppose for example that $C=(u\vee\bar x\vee \bar y\vee\bar z)$. Then the formula $C[\vee]$ is the conjunction of all the clauses
$(u_1\vee u_2\vee \bar x_a\vee\bar y_b \vee \bar z_c)$ with $a,b,c\in\{1,2\}$.
For such an assignment $\alpha^*$, there is a combination $a,b,c$ for which  $\alpha^*(u_1\vee u_2\vee \bar x\vee _a\vee \bar y_b \vee\bar z_c)=0$
and therefore $\mathbb D'\not\models C[\vee]$
contradicting the fact that $C\in \mproj(\mathbb D')$. 

In case all the clauses $\NMproj(\mathbb D)$ are axioms with some positive literal, let us consider 
one of such clauses $C$ with positive literal $x$. If $C[\vee]$ is satisfied by $\alpha$, then it must be because
$\alpha(x_1)=1$ or $\alpha(x_2)=1$. But since $x\in V^-$, there must be some other clause $C'\in \NMproj (\mathbb D)$ in which $x$ appears negated. $C'$ might have some variable $y$ such that $y_1\vee y_2$ is satisfied  by $\alpha$ but since $\bar x\vee y$ is part of an axiom in 
$\Peb_G$, $x$ is a predecessor of $y$ in $G$. Continuing with the argument, at some point we must reach an axiom $A\in \NMproj(\mathbb D)$ that is not satisfied by $\alpha$, and $\alpha $ can be extended to falsify $A$. 

So, there must be a set of clauses $\mathbb D'\subseteq \mathbb D$ with
$|N(\mathbb D')|<|\mathbb D'|$. If $\mathbb D'=\mathbb D$ then the result is clearly true. We show that no other case is possible. Let us suppose that
there is a set $\mathbb D'\subsetneq \mathbb D$ with 
$|N(\mathbb D')|<|\mathbb D'|$ and let $\mathbb D_1$ be a maximal set with this property. We define 
$\mathbb D_2=\mathbb D\setminus \mathbb D_1$, $V_1=N(\mathbb D_1)$, and 
$V_2 = V^-\setminus V_1$. By the hypothesis $V_2\not=\emptyset$. 
By the maximality of $\mathbb D_1$, for every
subset $\mathbb D''\subseteq\mathbb  D_2$ the number of neighbors of 
$\mathbb D''$ in $V_2$ must be greater  than $|\mathbb D''|$, 
since otherwise $\mathbb D_1$ would not be maximal having less neighbors than its cardinality.
Again, using Hall's Theorem, there must be a set of edges $M_2$,
matching $\mathbb D_2$ into $V_2$.
$M_2$ induces a partial  assignment $\alpha_2$ that satisfies all the clauses in $\mathbb D_2$ and
only assigns at most one of the twin variables  $x_1, x_2$ for each $x\in V_2$.
Consider now the set of clauses $S=\{C\in \NMproj(\mathbb D)$ with $\vars(C)\cap V_2\not=\emptyset\}.$
$S\not=\emptyset$ since $V_2\neq \emptyset$. We show that there is at least one clause $C\in S$ that is not satisfied by $\alpha_2$.
If there is a clause in $S$ with only negated literals for the variables in $V_2$, this is clearly the case. Otherwise, an argument as in the first part of the proof shows that there must be such a clause  $C$ that is not satisfied by $\alpha_2$.

By definition, there is a set of clauses $\mathbb{ \hat{D}}\subseteq\mathbb D$ with
$C\in\mproj(\mathbb {\hat{D}})$. We show that there is an assignment $\alpha$
that satisfies $\mathbb {\hat{D}}$ but falsifies $C[\vee]$, which is a contradiction. For this, let $\mathbb{\hat{D}}_1=\mathbb {\hat{D}}\cap \mathbb D_1$
and $\mathbb {\hat{D}}_2=\mathbb {\hat{D}}\cap \mathbb {{D}}_2$. Let also $C=C_1\cup C_2$
with $\vars(C_1)\subseteq V_1$ and $\vars(C_2)\subseteq V_2.$

By the definition of minimal projected clause we know that 
$\mathbb {\hat{D}}\models C[\vee]$ but $\mathbb {\hat{D}}\not\models C_1[\vee]$.
Since all the variables in $C_1$ are in $V_1$ we can infer that
$\mathbb{\hat{D}}_1\not\models C_1[\vee]$ and therefore there is partial assignment
$\alpha_1$ with domain $V_1[\vee]$ satisfying $\mathbb{\hat{D}}_1$ and falsifying $C_1[\vee]$.
The clauses in $\mathbb{\hat{D}}_2$ can be satisfied by $\alpha_2$ and therefore by $\alpha=\alpha_1\cup\alpha_2$.  
$\alpha$ satisfies all the clauses in 
$\mathbb{\hat{D}}$. Once more, 
$\alpha$ can be extended to  the twin variables of 
those assigned in $V_2$, falsifying $C_2[\vee].$
This way we obtain an assignment 
 that satisfies $\mathbb{\hat{D}}$ but falsifies $C[\vee]$, and
this is a contradiction since we supposed $C\in \mproj(\mathbb{\hat{D}})$.    
\end{proof}

We study now the clause space needed in an  N-Resolution refutation  of $\Peb_G[\vee]$. Unlike the negative refutations of $\Peb_G$, in which all the only clauses that contain positive literals are axioms, in the lifted formula the resolvent of an axiom and a negative clause 
contains one positive literal. This complicates the structure of the refutation. The next lemma states that at the price of one additional memory unit, a configurational refutation of $\Peb_G[\vee]$ can be transformed into another one in which the clauses with positive literals only stay in memory at most three configuration steps until they are removed again, and in fact, when an axiom is downloaded in memory, all the clauses already in memory contain only negative literals.

\begin{lemma}\label{lem:normal-form}
    Let $\pi$ be a configurational N-Resolution refutation of $\Peb_G[\vee]$ with $\cspace(\pi)=s.$ Then there is a an N-Resolution refutation 
    $\pi'$ with $\cspace(\pi')\leq s+1$ and satisfying that for every axiom $A\in\Peb_G[\vee]$ with a pair of positive literals $a_1, a_2$, if $A$ is downloaded in $\pi'$ at step $i$ in the refutation,
    then the clauses in the memory configuration at step $i-1$ do not contain any positive literal.
\end{lemma}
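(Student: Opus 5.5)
Before planning, it helps to see which clauses of $\Peb_G[\vee]$ can contain positive literals. For a vertex $v$ with predecessors $u^1,\dots,u^r$, the lifted axioms are $(\bar u^1_{a_1}\vee\cdots\vee\bar u^r_{a_r}\vee v_1\vee v_2)$; the sink axioms $\bar t_1,\bar t_2$ are purely negative. In an N-Resolution step one parent is purely negative. Such a step can only remove positive literals from the other parent; it never adds any. Every positive literal therefore originates in an axiom and appears only as the pair $v_1, v_2$ (or one of the two). Every clause with a positive literal is thus the result of a chain $A=E_0,E_1,\dots,E_m$. Here $A$ is an axiom and each $E_{j+1}$ is obtained by resolving $E_j$ with a negative clause $N_j$ in memory. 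Resolving on a positive literal of $E_j$ ends the chain after at most two such steps, giving a negative clause. Resolving on a negative literal of $E_j$ would need a positive $N_j$, which is impossible. So each chain has length at most $2$ before the clause becomes negative. At that point it can serve as the negative parent of further steps, or it can itself be the positive parent of a resolution with another positive clause, which is forbidden.

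The plan is to reorder the refutation so that each chain is executed contiguously. I will give a transformation that processes $\pi$ time step by time step and keeps the invariant that, after each simulated step, the memory of $\pi'$ consists exactly of the negative clauses of the current configuration $\MemConfig_i$ of $\pi$. Non-negative clauses of $\pi$ are never stored. When $\pi$ derives a new \emph{negative} clause $D$, we look at its derivation. If both parents are negative, they are in our memory and we perform the inference. Otherwise $D$ is the end of a chain $A\to E_1(\to E_2=D)$ whose negative side clauses $N_0,N_1$ were in memory at earlier times of $\pi$ (say at time $i_0<i$). Those times may be earlier than now, and the $N_j$ may have been erased by $\pi$ since then. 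To avoid this problem, we perform the modification in a preprocessing step: in $\pi$ we postpone all erasures of negative clauses until they are no longer needed by later chain steps. Equivalently, and more simply, we define $\pi'$ by replacing every download of a positive axiom with nothing. At the moment the chain's final negative clause $D$ is produced in $\pi$, we download $A$, resolve with $N_0$, erase $A$, resolve with $N_1$ to get $D$, and erase $E_1$. For this we need $N_0,N_1$ to still be present at that moment in $\pi$.

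The key claim is that we can first normalize $\pi$, without increasing space, so that the side clauses $N_j$ of a chain are not erased while the chain is still alive. This holds because erasing a negative clause later only increases space during the window in which the chain clause $E_j$ is alive. In the new proof $\pi'$ that window is shrunk to at most three steps, in which the space is at most (negative clauses present) $+2$. The clause count must be compared with $\pi$ itself. At the time $\pi$ uses $N_j$, both $N_j$ and $E_j$ are in memory, together with every negative clause still needed. Hence the negative memory of $\pi'$ plus the one chain clause is at most $s$ plus the one extra unit for the transient second chain clause ($A$ and $E_1$ simultaneously, or $E_1$ and $D$). This accounts for the $s+1$ bound. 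Chains that never end in a negative clause used later can simply be deleted.

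The main obstacle is the bookkeeping of timing. In $\pi$, several chains may be alive simultaneously and interleaved, and the side clauses of a chain may be erased before its completion. I must show that delaying these erasures and executing each chain contiguously at its completion time never needs more than $s+1$ clauses. The approach I will try first is a charging argument: at the completion time $i$ of a chain in $\pi$, every live chain in $\pi$ occupies at least one clause of $\MemConfig_i$. Each side clause whose erasure I delayed was needed by such a live chain. I then charge the delayed side clause to that chain's positive clause, which $\pi'$ does not store.
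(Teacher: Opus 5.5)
Your proposal is correct and is essentially the paper's proof. Your delayed side clause $N_0$ is the paper's placeholder $C_{R_1}$, kept in memory instead of the one-positive-literal resolvent $R_1$, and running the chain contiguously when $\pi$ derives the negative clause is the paper's on-demand re-download of $A$ to re-derive $R_1$, which costs one extra unit. Note that the bound comes from your charging argument: each delayed $N_0$ is paid for by the live $E_1$ that $\pi'$ does not store, and $|\MemConfig_{i-1}|\le s-1$ because the new clause $D$ enters $\MemConfig_i$. It does not come from normalizing $\pi$ itself, which in general would increase its space.
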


\begin{proof}
We describe how a configurational N-Resolution refutation of $\Peb_G[\vee]$ with $\cspace(\pi)=s$ can be transformed into an N-Resolution refutation 
    $\pi'$ with $\cspace(\pi')\leq s+1$ and satisfying that for every axiom $A\in\Peb_G[\vee]$ with a pair of positive literals $a_1, a_2$, if $A$ is downloaded in $\pi'$ at step $i$ in the refutation, then the following actions take place in the next configurations:
    \begin{itemize}
        \item Step $i+1$ is an inference step and $A$ is resolved with a negative clause $C=C'\vee \bar a_j\in \pi'$ for a $j\in\{1,2\}$, and the resolvent $R_1,$ is obtained.
        \item Step $i+2$, $A$ is deleted from memory,
        \item  Step $i+3$ is an inference step and $R_1$ is resolved with a negative clause in $\pi'$ (that contains the conjugated literal of the unique positive literal in $R_1$), obtaining $R_2$, and
\item Step $i+4$, $R_1$ is deleted from memory.
    \end{itemize}
        
   The condition that  the axiom $A$ stays in memory only for two steps is easy to achieve. Since we  want to minimize memory space, we do not need to download an axiom $A$ in memory
until right before it is needed in an inference step, and it can be deleted right in the next step. If it is needed again at some other step, it can be downloaded again, and this does not require any extra space.
So, in a first step, the refutation $\pi$ can be transformed into an intermediate one $\hat \pi$, with the same space requirements, and  with the property that when an axiom $A$
with two positive literals is downloaded at step $i$, then in step $i+1$
$A$ is resolved with some clause in memory, and in step $i+2$ $A$ is deleted from memory.

We now transform $\hat \pi$ into the goal refutation $\pi'$.
Let $R_1$ be a clause in memory with exactly one positive literal.
$R_1$ is the resolvent of an axiom $A$ and a negative clause $C$.
The idea is to keep $C$ in memory instead of $R_1$. For this we substitute every occurrence of $R_1$ in $\hat \pi$ by a copy of $C$ that we call $C_{R_1}$.
If $C$  and $C_{R_1}$ are the same clause, and in case both are in memory, they just count as one clause,  but it is convenient to think of $C_{R_1}$ as a place holder for $R_1$. 
Each time $R_1$ is needed for an inference (this can be more than once), $R_1$ can be inferred again by downloading $A$ in the proof, and resolving $A$ with $C_{R_1}$. This requires one extra unit of memory space for $A$, but once $R_1$ is derived, $A$ can me removed. In the next step
  $R_1$ is used for an inference  are deleted from memory (keeping $C_{R_1}$).
This implies that the extra clause in memory cannot interfere with other uses of $R_1$ or other clauses with positive literals in the refutation $\pi'$.
Once $R_1$ is removed from a configuration in $\hat\pi$, the clause $C_{R_1}$ is also  removed from   the configuration in $\pi'$ (keeping $C$ if this clause is still in memory).
\end{proof}

Observe that a consequence of the lemma is that in the refutation $\pi'$ in the first two steps, the only two negative axioms, $\bar z_1$ and 
$\bar z_2$ in $\Peb_G[\vee]$ corresponding to the sink vertex $z$ in $G$ are downloaded.    
Also, if for any $i$, step $i$ is a download step, then the memory configuration in $\mathbb M_{i-1}$ contains  only negative clauses.

The next result, together with Theorem~\ref{lem:cs>nproj}, implies that from a negative refutation $\pi$ of $\Peb_G[\vee]$ it is possible to extract a negative refutation
of $Peb_G$ with negative width bounded by the clause space of $\pi$.



\begin{lemma}\label{lem:negative-res}
    Let $\pi=\mathbb M_1,\dots, \mathbb M_t$ be a configurational negative Resolution refutation of $\Peb_G[\vee]$ satisfying the properties of
    Lemma~\ref{lem:normal-form}.
    For $i\in [t-1]$,  and for every clauses $C\in \Mproj(\mathbb M_{i+1})$ it is possible to derive $C$ by negative Resolution, 
    from $\NMproj(\mathbb M_{i})$ and axioms in $\Peb_G$ 
    and using only negated literals in $\NMproj(\mathbb M_i)\cup \NMproj(\mathbb M_{i+1})$.
\end{lemma}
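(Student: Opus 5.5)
We argue by cases on the rule that produces $\mathbb M_{i+1}$ from $\mathbb M_i$, in the style of the projection arguments of \cite{BN11UnderstandingSpaceFULLREF}. Fix $C\in\Mproj(\mathbb M_{i+1})$, so $C\in\mproj(\mathbb D')$ for some $\mathbb D'\subseteq\mathbb M_{i+1}$.

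\emph{Erasure.} Then $\mathbb M_{i+1}\subseteq\mathbb M_i$, so $\Mproj(\mathbb M_{i+1})\subseteq\Mproj(\mathbb M_i)$, and nothing needs to be derived.

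\emph{Inference.} The new clause $D$ is a consequence of two clauses of $\mathbb M_i$. Hence $\mathbb D'$ is implied by some $\mathbb D''\subseteq\mathbb M_i$ (replace $D$ by its two parents). So $\mathbb D''\models C[\vee]$, and a minimal subclause $C'\subseteq C$ lies in $\mproj(\mathbb D'')\subseteq\Mproj(\mathbb M_i)$. Minimality of $C$ for $\mathbb D'$ and the implication $\mathbb D''\models\mathbb D'$ do not force $C'=C$, but this is harmless: $C$ is a weakening of $C'$, and weakening is allowed in the derivation model.

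If $C$ contains no positive literal, then $C'$ is negative, so $C'\in\NMproj(\mathbb M_i)$ and we are done by weakening. If $C$ has positive literals, the statement still only asks us to derive $C$. By Lemma~\ref{lem:normal-form}, positive literals occur in memory only in the short window of the form: axiom, then $R_1$, then $R_2$. In that window $\mathbb M_i$ consists of negative clauses plus at most one axiom $A[\vee]$ or one clause $R_1$. A set of negative lifted clauses projects only negative clauses. Any projected clause using $A$ or $R_1$ is therefore obtained by resolving the axiom $A\in\Peb_G$ with negative projected clauses. That is a negative Resolution step whose negated literals come from the two projections.

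\emph{Axiom download.} By Lemma~\ref{lem:normal-form}, $\mathbb M_i$ contains only negative clauses, and the new clause is $A[\vee]$-part of some axiom $A=\bigvee_{u\in\pred(v)}\bar u\vee v$, or one of the sink clauses $\bar z_j$. We want to show that any $C\in\mproj(\mathbb D'\cup\{A_\ell\})$, with $\mathbb D'\subseteq\mathbb M_i$, is derivable from $\NMproj(\mathbb M_i)\cup\{A\}$ by at most one negative resolution step on $v$, followed by weakening.

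The semantic tool is the one used in Lemma~\ref{lem:cs>nproj}. Take an assignment that sets one twin of each variable of $C$ so as to falsify the lifted clause, and sets the other twins freely. Since lifted negative clauses are monotone decreasing, we can then construct a satisfying assignment of $\mathbb D'\cup\{A_\ell\}$ falsifying $C[\vee]$ unless $C$ is a weakening of $A$, or of a resolvent $A$ with a negative projected clause $N\ni\bar v$, or of a negative projected clause itself.

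\emph{Main obstacle.} The hard part is this last semantic claim. For $\vee$, a single lifted axiom clause $A_\ell$ is weaker than $A[\vee]$, so $\mathbb D'\cup\{A_\ell\}$ may project clauses mixing twins. We handle this by decomposing $C=C_1\vee C_2$ according to whether $v\in C$. Then we show $\mathbb D'\models (C_1\vee\bar v)[\vee]$, using the fact that the positive part $v_1\vee v_2$ of $A_\ell$ can be made false by assigning both twins of $v$ to $0$. This yields a negative projected clause $N=C_1\vee\bar v\in\NMproj(\mathbb M_i)$, and resolving $N$ with $A$ gives (a subclause of) $C$. All literals used appear negated in $N$, in $A$ (which is in $\NMproj(\mathbb M_{i+1})$ as an axiom of $\Peb_G$), or in $C$, which gives the required bound on negated literals.
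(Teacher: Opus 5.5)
Your structure matches the paper's. Erasure and inference create nothing that is not a weakening of an old projection. In the download case, Lemma~\ref{lem:normal-form} makes $\mathbb M_i$ purely negative, and $\mathbb D'\models (C\vee\bar v)[\vee]$ yields a negative projected clause $N\ni\bar v$ with $N\setminus\{\bar v\}\subseteq C$, to be resolved with $A=\bigvee_{u\in\pred(v)}\bar u\vee v$.

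\textbf{The gap.} The download case stops short of the second half of the key claim. For the resolvent $(N\setminus\{\bar v\})\vee\bigvee_{u\in\pred(v)}\bar u$ to be a subclause of $C$, you need $\bar u\in C$ for every $u\in\pred(v)$. You assert this (``resolving $N$ with $A$ gives (a subclause of) $C$'') and yourself call the underlying semantic claim the hard part, but your ``main obstacle'' paragraph only handles $\bar v$. The paper proves the missing half explicitly, and it is short:
\begin{itemize}
\item Because $C$ is new, some $\alpha$ satisfies $\mathbb D'$ and falsifies $C[\vee]$.
\item Such an $\alpha$ must falsify $A_\ell=\bigvee_{u}\bar u_{j(u)}\vee v_1\vee v_2$, so $\alpha(u_{j(u)})=1$.
\item If $\bar u\notin C$, resetting $u_{j(u)}$ to $0$ keeps the negative clauses of $\mathbb D'$ satisfied and $C[\vee]$ falsified, but now satisfies $A_\ell$. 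This contradicts $\mathbb D'\cup\{A_\ell\}\models C[\vee]$.
\end{itemize}
Your remark about monotonicity points toward this, but the argument is never carried out.

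\textbf{Consequence for the literal bound.} The same omission undermines your bound on negated literals, which is the purpose of the lemma (it yields the width bound in Theorem~\ref{thm:translated-refutation}). You account for the literals $\bar u$ by saying $A\in\NMproj(\mathbb M_{i+1})$ as an axiom of $\Peb_G$. That is false in general: $\mathbb M_{i+1}$ contains only one of the $2^{|\pred(v)|}$ clauses of $A[\vee]$ plus negative clauses, and typically does not imply $A[\vee]$, so $A\notin\proj(\mathbb M_{i+1})$. The bound holds only because $\bar u\in C\in\NMproj(\mathbb M_{i+1})$ and $\bar v\in N\in\NMproj(\mathbb M_i)$, that is, because of the step you skipped.

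\textbf{Smaller points.}
\begin{itemize}
\item $\mathbb D'\models (C\vee\bar v)[\vee]$ holds because every assignment falsifying it sets some twin of $v$ to $1$ and hence satisfies $A_\ell$. It is not because $v_1\vee v_2$ ``can be made false''.
\item You need $\mathbb D'\not\models C[\vee]$ to conclude $\bar v\in N$.
\item Your decomposition fails for $C$ containing $v$ positively. For example, with $\mathbb D'=\{\bar a_2\}$, $A_\ell=\bar a_1\vee v_1\vee v_2$ and $C=\bar a\vee v$, $C$ is a new minimal projection but $\mathbb D'\not\models(\bar a\vee\bar v)[\vee]$. The same resetting argument shows such $C$ are instead weakenings of $A$.
\item A downloaded sink clause $\bar z_j$ does not fit your resolve-with-$A$ template. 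There the resetting argument gives $\bar z\in C$.
\end{itemize}
Only negative clauses and axioms, i.e.\ $\NMproj$, are actually needed, and that is all the paper's proof treats.
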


\begin{proof}
By induction on $i$. Let $z$ be the sink in $G$.
As observed before,  $\NMproj(\mathbb M_1) =\emptyset$ and $\NMproj(\mathbb M_2) =\{\bar z\}$ which is an axiom in $\Peb_G$. 
For $i>2$, we only need to consider the case in which $i$ is an axiom download step. In an inference step, $\NMproj(\mathbb M_i)=\NMproj(\mathbb M_{i-1})$, and is a deletion step, no new  clauses can appear in $\NMproj(\mathbb M_i)$. If a new axiom $A\in \Peb_G[\vee]$ appears in the refutation it can correspond  to the sink vertex, to an internal vertex, or to a source vertex in $G$.

If $A$ corresponds to a sink vertex, for example $A=\bar z_1$ (this can happen if the axiom was deleted after the first two steps), then, since in $\mathbb M_{i-1}$ there are only clauses with negative literals, the only new clause that can appear in $\NMproj(\mathbb M_i)$ is $\bar z$, which is an axiom.

If $A$ corresponds to an internal  vertex, for example, $A=(\bar a_1 \vee\bar b_2\vee c_1\vee c_2)$, this clause is one of the clauses in the formula ${\cal A}[\vee]$
for the axiom ${\cal A}= (\bar a\vee \bar b\vee  c)$ in $\Peb_G$. 
Let us suppose that there is a clause $C\in \NMproj(\mathbb M_{i})\setminus \Mproj(\mathbb M_{i-1})$ (otherwise, nothing needs to be proven). If $C$ is an axiom in $\Peb_G$, then it can clearly be obtained by N-Resolution. In case $C$ is a negative clause
then,
$\mathbb M_{i-1}\not\models C[\vee]$. This is because the other possible reason 
for $C$ not being in $\NMproj(\mathbb M_i)$ would be that 
there is a subclause $C'\subsetneq C$ with 
$C'\in \proj(\mathbb M_{i-1})$, but  then also $C'\in \proj(\mathbb M_{i})$
implying $C\not\in \NMproj(\mathbb M_{i})$.

We claim that for any such clause 
$C$
there is 
a  subclause $C'\subseteq  C$ such that
$C'\vee \bar c\in \NMproj(\mathbb M_{i-1})$ and moreover the negated literals $\bar a$ and $\bar b$ belong to $C$.

The result follows from the claim since we can resolve $C'\vee \bar c$ with the axiom $(\bar a\vee \bar b\vee c)$ (which is a negative Resolution step
since $C$ is a negative clause) and obtain the clause $C''\subseteq C$. Observe that all the variables used in the Resolution step appear negated in $\NMproj(\mathbb M_{i-1})\cup \NMproj(\mathbb M_{i})$.

We  prove the claim. 
As stated above $\mathbb M_{i-1}\not\models C[\vee]$. Let $\alpha$ be an assignment satisfying $\mathbb M_{i-1}$ 
and falsifying 
$C[\vee]$, since  by assumption $C\in\Mproj(\mathbb M_i\cup \{A\})$
this implies $\alpha(A)=0$ and therefore $\alpha(c_1)=\alpha(c_2)=0$, which implies 
$\mathbb M_{i-1}\models C\vee \bar c[\vee]$  and therefore for some subclause $C'\subseteq C$, $C'\vee \bar c\in \NMproj(\mathbb M_{i-1})$.

Also the literal $\bar a$ must be in $C$. Arguing in the same way as for $\bar c$, $\mathbb M_{i-1}\models C\vee a[\vee]$. 
But by the conditions of the refutation (Lemma \ref{lem:normal-form}), the clauses in $\mathbb M_{i-1}$ 
contain only negative literals and can only (minimally) project negative clauses (or the tautology 1 which is always a logical consequence). The only way to avoid a contradiction is that $\bar a\in C$ which means
$C\vee a$ is tautological, and therefore implied by $\mathbb M_{i-1}.$
The same argument applies to literal $\bar b$.
This proves the claim.

The only case missing is when the downloaded axiom $A$ corresponds to a source vertex, $A=(a_1\vee a_2)$. This is exactly as the first part of
previous case (when dealing with $c_1\vee c_2$), and in this case the resolvent of some negative clause in $\NMproj(\mathbb M_{i-1})$ with the axiom
$a\in \Peb_G$ is a subclause of $C$.
\end{proof}

Lemmas~\ref{lem:cs>nproj} and \ref{lem:negative-res} imply:

\begin{theorem}\label{thm:translated-refutation}
    Let $G$ be a sDAG  and $\pi$ be  an N-Resolution refutation of $\Peb_G[\vee]$ with $\clspaceabbr(\pi)=s$.  There is a negative Resolution refutation of $\Peb_G$ with negative width at most $s$.  
\end{theorem}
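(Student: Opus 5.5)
First, by Lemma~\ref{lem:normal-form}, replace $\pi$ by a configurational N-Resolution refutation $\pi'=(\mathbb M_1,\dots,\mathbb M_t)$ of $\Peb_G[\vee]$ that is in the normal form of that lemma. This costs at most one extra clause of space, so $\clspaceabbr(\pi')\le s+1$. The goal is then to chain together the local derivations given by Lemma~\ref{lem:negative-res}, obtaining a negative refutation of $\Peb_G$ whose clauses only use negated variables from the sets $\NMproj(\mathbb M_i)$. After that, Lemma~\ref{lem:cs>nproj} bounds the negative width.

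\textbf{Building the refutation.} Inductively over $i$, I will construct a negative Resolution derivation from $\Peb_G$ of every clause in $\NMproj(\mathbb M_i)$. Clauses of $\NMproj(\mathbb M_i)$ that are axioms of $\Peb_G$ are free. For a negative clause in $\NMproj(\mathbb M_{i+1})$, Lemma~\ref{lem:negative-res} derives it by N-Resolution from $\NMproj(\mathbb M_i)$ and axioms of $\Peb_G$. Every intermediate clause in that derivation mentions only negated literals occurring in $\NMproj(\mathbb M_i)\cup\NMproj(\mathbb M_{i+1})$. Concatenating these derivations yields an N-Resolution derivation of everything in $\NMproj(\mathbb M_t)$.

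Since $\pi'$ is a refutation, $\emptycl\in\mathbb M_t$. Hence $\emptycl[\vee]=\emptycl$ is implied by a one-clause subset of $\mathbb M_t$, so $\emptycl\in\Mproj(\mathbb M_t)$. Being negative, it also lies in $\NMproj(\mathbb M_t)$. This gives a negative refutation of $\Peb_G$.

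\textbf{Width bound and main obstacle.} The negative width of each clause is at most the number of distinct negated variables in $\NMproj(\mathbb M_i)\cup\NMproj(\mathbb M_{i+1})$ for some $i$. This is the main obstacle. Lemma~\ref{lem:cs>nproj} applied to a single configuration gives strictly fewer than $|\mathbb M_i|\le s+1$ variables, i.e.\ at most $s$. The union over two consecutive configurations could in principle be larger.

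I would handle this using the structure of the step in Lemma~\ref{lem:negative-res}. Only download steps create new projected clauses. In the proof of that lemma, each new clause $C\in\NMproj(\mathbb M_{i+1})$ is obtained in a single resolution step from some $C'\vee\bar c\in\NMproj(\mathbb M_i)$ and an axiom $\bar a\vee\bar b\vee c$. The resolvent is a subclause of $C$. So each clause occurring in the derivation lies either in $\NMproj(\mathbb M_i)$ or in (a subclause of) $\NMproj(\mathbb M_{i+1})$. Its negative width is therefore bounded by the number of negated variables of a single configuration's negative projection.

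Then Lemma~\ref{lem:cs>nproj} for that configuration gives width at most $|\mathbb M_j|-1\le s$. This absorbs exactly the extra unit lost to the normal form. I also need to check that Lemma~\ref{lem:cs>nproj} is applied to nonempty configurations, which is trivial when the projection is nonempty, and that the axioms of $\Peb_G$ used have width within these bounds. The latter holds because they appear in $\NMproj$ of the configuration where they were projected.
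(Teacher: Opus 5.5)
Your proposal is correct and follows the paper's route. The paper just states that Lemmas~\ref{lem:cs>nproj} and~\ref{lem:negative-res} (for a refutation in the normal form of Lemma~\ref{lem:normal-form}) imply the theorem, and you make explicit what it leaves implicit: that $\emptycl\in\NMproj(\mathbb M_t)$, that the strict inequality of Lemma~\ref{lem:cs>nproj} absorbs the extra clause from the normal form, and that each clause is bounded by a single configuration's projection rather than the union. One correction: a pebbling axiom $\bar a\vee\bar b\vee c$ need not lie in any $\NMproj(\mathbb M_j)$, since only one of its lifted clauses is ever in memory. Its width is still at most $s$ because the resolvent $C'\vee\bar a\vee\bar b$ is a subclause of the new clause $C\in\NMproj(\mathbb M_{i+1})$, which you already use, so $\bar a,\bar b\in C$.
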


The last step in the proof of the lower bound  shows that for a sDAG $G$, from a N-Resolution refutation for $\Peb(G)$ with negative width $w$ it is possible to extract a strategy for the black pebble game on $G$ using at most $w+1$ pebbles. For this, it is useful to consider the dual white pebble game. In this version of the game, a pebble can be placed on any vertex of $G$ at any time, but it can only be removed from a vertex if all its predecessors have a pebble on them. The game starts with a pebble on the sink vertex and ends when all the pebbles have been removed from $G$. As mentioned in the preliminaries, reversing the order of a black pebbling strategy on a graph $G$, one obtains a strategy with the same number of pebbles in the white pebble game  and therefore, for any graph $G$, the black and the white pebbling numbers coincide.

\begin{lemma}\label{lem:white-pebbling}
For any DAG $G=(V,E)$ with a single sink $z$, the negative-width in an N-Resolution refutation for $\Peb_G$ is at least $\Black(G)-1$.
\end{lemma}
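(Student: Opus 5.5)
We convert an N-Resolution refutation $\pi$ of $\Peb_G$ with negative width $w$ into a white pebbling of $G$ that never has more than $w+1$ pebbles on the graph. Since white and black pebbling prices coincide, this gives $\Black(G)\leq w+1$.

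First we normalize $\pi$. The only clauses of $\Peb_G$ containing a positive literal are the pebbling axioms $\bigvee_{u\in\pred(v)}\bar u\vee v$, and each of these has exactly one positive literal. In a negative Resolution step one parent is negative, so the resolvent of a negative clause and an axiom again has no positive literal. The resolvent of two axioms would not be a negative step. Hence every non-axiom clause in $\pi$ is negative, and every inference has the form
\[
\frac{C\vee \bar v \qquad \bigvee_{u\in\pred(v)}\bar u\vee v}{C\vee\bigvee_{u\in\pred(v)}\bar u}
\]
or a weakening step. Reading the refutation backwards from $\emptycl$ therefore gives a path of negative clauses. Following the branch of negative parents, the sequence of negative clauses $N_0=\emptycl, N_1,\dots,N_m$ ends at the sink axiom $\bar z$. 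Each $N_{j}$ is obtained from $N_{j+1}$ either by resolving one literal $\bar v\in N_{j+1}$ against the axiom of $v$, or by a weakening, which only deletes literals when read backwards. The only other premises on this branch are axioms, so the whole refutation reduces to this linear chain.

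Now read the chain in reverse, from $N_m=\bar z$ up to $N_0=\emptycl$, and let $W_j=\{v \mid \bar v\in N_j\}$ be the white pebble configuration. The chain starts with $W=\{z\}$, matching the convention $W_1=\{z\}$ of a white pebbling, and it ends with $W=\emptyset$.

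Each forward resolution step on $v$ replaces $\bar v$ by $\bigvee_{u\in\pred(v)}\bar u$. In pebbling terms this is done in two phases:
\begin{enumerate}
\item Place white pebbles on all predecessors of $v$ not already pebbled. Placement is always allowed in the white game.
\item Remove the pebble from $v$. This is legal because all its predecessors are now pebbled.
\end{enumerate}
For a source $v$ the axiom is just the unit clause $v$, so the pebble on $v$ is removed with no precondition. Weakening steps (clauses being subsets of others, possibly with added literals) need separate handling. A literal added by weakening corresponds to a white placement, which is always free. A literal that disappears corresponds to a removal, which is only allowed when all predecessors are pebbled. To avoid this, I would first argue that weakening can be eliminated from negative refutations without increasing negative width, by replacing clauses with subclauses in the standard way. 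After this elimination, every clause on the chain arises by exact resolution.

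The pebble count is bounded as follows. During the step on $v$, the pebbles on the board are at most $W_{j+1}\cup \pred(v)$. Since $\pred(v)\cup (W_{j+1}\setminus\{v\})=W_j$, this set has size at most $|W_j|+1\leq w+1$. Thus the pebbling never exceeds $w+1$ pebbles, and $\Black(G)=\White(G)\leq w+1$.

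The step I expect to be the main obstacle is justifying the linear-chain reduction and the weakening elimination. A negative refutation may reuse derived negative clauses several times, so the inference graph is not a chain. I only need one branch, however: choosing the negative parent at each inference starting from $\emptycl$ yields a path. I would verify that this path contains every negative literal move needed, which holds since the other parent at each step is always an axiom.
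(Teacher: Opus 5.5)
Your proof is correct and follows the paper's argument: the Horn structure forces the negative clauses to form a linear chain from $\bar z$ to $\emptycl$, each step resolving against a pebbling axiom, and reading that chain as a white pebbling costs at most one pebble more than the negative width, giving $\Black(G)\le w+1$. Two details the paper leaves implicit are filled in by your proposal: eliminating weakening (which keeps every derived clause negative) and extracting the chain of negative parents from a DAG-shaped refutation.
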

\begin{proof}
    We recall that the formula $\Peb(G)$ is a Horn formula and therefore all its clauses have at most one positive literal. This implies that a negative refutation must start with the only negative clause $(\bar x_z)$ and this clause is resolved with another axiom producing a new negative clause. In each Resolution step $i$, the negative clause $C_i$ obtained in the previous step is resolved with an axiom, producing a new negative clause $C_{i+1}$. This is the only way to reach the empty clause. This kind of strategy can be transformed into a strategy for playing the white pebble game on $G$. Starting with a white pebble on the sink vertex $z$, at each Resolution step $i$ there are white pebbles
    exactly on the vertices corresponding to the literals in the negative clause $C_i$. It is possible to go from $C_i$ to $C_{i+1}$ following correct white pebbling steps.  This is done first placing a pebble on the vertices corresponding to variables in the axiom resolved with $C_i$ that do not have a pebble and then removing the
 pebble corresponding to the resolved variable. This is possible since the variables for the predecessor of the
    resolved vertex appear in the axiom resolved with $C_i$, and therefore, their corresponding vertices have a pebble on them. The refutation ends resolving a source axiom, and therefore the white pebble on its vertex can be removed. At each step, we need as many pebbles as literals in the clause $C_i$ plus one more for the vertex on the  resolved variable. Therefore, the pebbling number of $G$ is bounded by 
    the negation-width of $\Peb_G$ in N-Resolution, plus one.    
\end{proof}

The lower bound in the theorem follows directly from the two previous results.

\begin{theorem}Let $G$ be a sDAG. 
$\Black(G)\leq \Nclspaceref{\Peb_G[\vee]}$.
\end{theorem}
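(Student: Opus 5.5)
The plan is to chain Lemma~\ref{lem:normal-form}, Lemma~\ref{lem:cs>nproj}, Lemma~\ref{lem:negative-res} and the white-pebbling simulation of Lemma~\ref{lem:white-pebbling}. Composing Theorem~\ref{thm:translated-refutation} with Lemma~\ref{lem:white-pebbling} as black boxes costs one pebble: the normal form costs $+1$ in space, and the simulation costs $+1$ pebble over the negative width. So I will not compose them blindly. Instead I will count pebbles directly at the moment of each simulated resolution step, where the strict inequality in Lemma~\ref{lem:cs>nproj} recovers exactly one unit.

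\textbf{Setup.} Let $s=\Nclspaceref{\Peb_G[\vee]}$ and take an optimal N-Resolution refutation. By Lemma~\ref{lem:normal-form}, transform it into a refutation $\pi'=(\mathbb M_1,\dots,\mathbb M_t)$ in normal form with $|\mathbb M_i|\le s+1$ for all $i$. For each $i$, let $V^-_i$ be the set of variables appearing negated in some clause of $\NMproj(\mathbb M_i)$. Lemma~\ref{lem:cs>nproj} gives $|V^-_i|<|\mathbb M_i|\le s+1$, hence $|V^-_i|\le s$.

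\textbf{Extracting a refutation of $\Peb_G$.} Using Lemma~\ref{lem:negative-res}, build an N-Resolution refutation of $\Peb_G$ step by step. Since $\Peb_G$ is Horn, this refutation is a linear chain of negative clauses, each resolved with an axiom. I will check that every resolution step $C'\vee\bar c,\ (\bar a\vee\bar b\vee c)\vdash C''$ produced in the proof of Lemma~\ref{lem:negative-res} is attached to a single download step $i$. The premise $C'\vee\bar c$ lies in $\NMproj(\mathbb M_{i-1})$, and the literals $\bar a,\bar b$ lie in the target clause $C\in\NMproj(\mathbb M_i)$. Moreover, since $\mathbb M_{i-1}\subseteq\mathbb M_i$ at a download step, monotonicity of $\Mproj$ gives $\NMproj(\mathbb M_{i-1})\subseteq \NMproj(\mathbb M_i)$; I expect minimality to be preserved here, or else to pass to a subclause. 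Hence all variables $C'\cup\{c,a,b\}$ involved in that step lie in $V^-_i$.

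\textbf{Pebble count.} Now replay the proof of Lemma~\ref{lem:white-pebbling}. During the transition from $C_j$ to $C_{j+1}$, the white pebbles sit on the variables of $C_j$ together with the newly placed predecessors, just before the resolved vertex is removed. That peak set is contained in $V^-_i$ for the corresponding download step $i$. Therefore at every moment at most $|V^-_i|\le s$ pebbles are used, and $\Black(G)=\White(G)\le s$, as claimed.

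\textbf{Expected obstacle.} The delicate point is the middle step: ensuring that the extracted refutation really is a single linear chain whose consecutive clauses, including the ``$+1$'' pebble on the resolved vertex $c$, all fit inside one set $V^-_i$. It must not straddle deletion steps, where projections can shrink. I would handle this by noting that deletions and inferences create no new $\NMproj$ clauses, so every new chain clause is born at a download step. The resolution producing it uses only clauses alive in $\NMproj(\mathbb M_{i-1})\cup\NMproj(\mathbb M_i)\subseteq\NMproj(\mathbb M_i)$. If the chain occasionally needs a clause that was already derived earlier but has since left the projection, it can be re-derived, so only the peak count per step matters.
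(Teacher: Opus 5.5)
Your argument is correct. It uses the same ingredients as the paper: Lemmas~\ref{lem:normal-form}, \ref{lem:cs>nproj}, \ref{lem:negative-res} and the white-pebbling simulation of Lemma~\ref{lem:white-pebbling}. The difference is the final accounting, and there your version is sharper.

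The paper only says the bound \emph{follows directly} from Theorem~\ref{thm:translated-refutation} and Lemma~\ref{lem:white-pebbling}. Composed as black boxes, those give only $\Black(G)-1\le s$, i.e.\ $\Black(G)\le \Nclspaceref{\Peb_G[\vee]}+1$. The normal form costs one memory unit, so the extracted width bound is $s$ rather than $s-1$, and the simulation then adds a pebble for the resolved vertex.

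Your key observation recovers that unit and yields the inequality as stated. The entire peak set $\vars(C'\vee\bar c)\cup\{a,b\}$ of each simulated resolution step lies inside the single set $V^-_i$ of the download step that creates the conclusion. This holds because $C'\vee\bar c\in\NMproj(\mathbb M_{i-1})\subseteq\NMproj(\mathbb M_i)$ and $\bar a,\bar b\in C\in\NMproj(\mathbb M_i)$. The monotonicity you hedge on holds outright from the subset-based definition of $\Mproj$; the paper itself notes $\Mproj(\mathbb D')\subseteq\Mproj(\mathbb D)$ for $\mathbb D'\subseteq\mathbb D$. So no passage to subclauses is needed at that point.

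Two small points would make the write-up clean:
\begin{enumerate}
\item Build the chain backwards from $\emptycl$, attaching each premise to the download step at which it most recently entered $\NMproj$. This makes your re-derivation remark unnecessary.
\item Keep each chain clause as the exact resolvent, which may be a proper subclause of the projected clause. Skip a step when the resolved literal $\bar c$ is already absent, since white pebbles cannot be removed at will.
\end{enumerate}
Subclauses only shrink the pebble sets, so every configuration still lies inside some $V^-_i$ with $|V^-_i|\le s$.
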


For the case of the $\Peb_G[\oplus]$ formulas, the same lower bound holds, with a  simple proof.

\begin{theorem}
Let $G$ be a sDAG. 
$\Black(G)\leq \Nclspaceref{\Peb_G[\oplus]}$.
\end{theorem}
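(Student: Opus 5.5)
We restrict $\Peb_G[\oplus]$ by a partial assignment that reduces it, up to renaming and flipping the polarity of some literals, to a formula to which the black pebbling lower bound applies. Restrictions do not increase clause space, and they preserve the negative-Resolution property as long as the polarity of the surviving variables is unchanged. The introduction describes exactly this: a special partial assignment that flips the polarity of the variables in $\Peb_G$.

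\textbf{Step 1: choice of restriction.} For every vertex $v$ we fix $v_1=1$ and leave $v_2$ free. Then $v_1\oplus v_2=\bar v_2$, so $\Peb_G[\oplus]\restriction_\rho$ is exactly the formula $\Peb_G$ with every variable $v$ replaced by the literal $\bar v_2$. Each clause of $\Peb_G[\oplus]$ either becomes satisfied or becomes one of these flipped clauses. The resulting formula consists of the clauses $\bigvee_{u\in\pred(v)}u_2\vee\bar v_2$ and the unit clause $z_2$. It is anti-Horn: every clause has at most one negative literal.

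\textbf{Step 2: transfer of the refutation.} Restricting an N-Resolution refutation of $\Peb_G[\oplus]$ clause by clause gives a refutation of the restricted formula in no more space, after the standard replacement of inference steps by weakenings or copies when a resolved variable is assigned. A clause that is negative before restriction stays negative after it. So we obtain an N-Resolution refutation of the flipped formula of space at most $\Nclspaceref{\Peb_G[\oplus]}$.

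\textbf{Step 3: the space bound.} We must show that an N-Resolution refutation of the anti-Horn flipped formula needs space at least $\Black(G)$. In the flipped formula the only purely negative clauses are the source axioms $\bar s_2$ and the resolvents built from them. A negative Resolution step resolves a negative clause $\bar v_2\vee N$ with a clause containing $v_2$. That second clause is an axiom $\bigvee_{u\in\pred(w)} u_2\vee\bar w_2$ with $v\in\pred(w)$, or a derived clause. The resolvent replaces $\bar v_2$ by $\bar w_2$ plus the other positive literals $u_2$, and those must later be eliminated using the negative clauses $\bar u_2$. This mirrors forward black pebbling: the negative unit or small clauses in memory correspond to pebbled vertices. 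I would define the pebble configuration at each time as the set of vertices $v$ such that $\bar v_2$ is implied by a negative clause in memory, or more simply as the set of vertices appearing negatively in unit clauses in memory. I would then argue, in the style of the Esteban--Torán argument that clause space is at least black pebbling, that reaching $\bar z_2$ together with the unit $z_2$ forces a legal black pebbling whose pebble count is at most the number of clauses in memory. A cleaner alternative is to note that this refutation is in fact a positive-style derivation in the unflipped $\Peb_G$. Then I would use the known fact that the clause space of $\Peb_G$ in any Resolution is at least the black pebbling price of the refutation DAG, which embeds into $G$.

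\textbf{Main obstacle.} The hard step is Step 3, and specifically making the correspondence between memory configurations and black pebble configurations exact. Derived clauses may carry several positive literals at once, and each such literal corresponds to a pending obligation rather than a pebble. I would handle this with a normal-form argument analogous to Lemma~\ref{lem:normal-form}: clauses with positive literals live only briefly, so the count is governed by the negative clauses in memory, each of which contributes at least one pebble.
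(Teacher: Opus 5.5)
Your Steps 1 and 2 are exactly the paper's proof. Setting $v_1=1$ turns $\Peb_G[\oplus]$ into the polarity-flipped formula $\widehat{\Peb_G}$, and restrictions do not increase N-Resolution space. The paper then invokes Theorem~\ref{thm:tree-pcs}, whose proof extracts a black pebbling of $G$ from a positive refutation of $\Peb_G$ (equivalently, from your N-refutation of the flipped formula), with at most as many pebbles as clauses in memory. Your Step 3 is that lemma, and your sketch of it does not work as written.

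\textbf{The proposed pebbling is illegal.} The pebble set ``vertices $v$ with $\bar v_2$ in memory'' is not a legal black pebbling. Download $(a_2\vee b_2\vee\bar c_2)$ and $\bar a_2$, resolve to $(b_2\vee\bar c_2)$, erase $\bar a_2$, and later resolve with $\bar b_2$. Vertex $c$ then gets a pebble while $a$ has none.

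\textbf{The missing idea} is the paper's Rule~2: keep the pebble on $a$ as long as $\bar a_2$, or some partially resolved clause such as $(b_2\vee\bar c_2)$ obtained from it, is in memory, and charge the pebble to that clause. The charging is injective for two reasons. First, in an anti-Horn formula every negative clause is a unit, so your $\bar v_2\vee N$ always has $N$ empty and N-Resolution is unit resolution. Second, with in-degree~$2$ a partially resolved axiom has absorbed exactly one unit. This gives at most $|\mathbb M_t|$ pebbles at every step, with no normal-form detour.

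Your two alternatives need the same idea. For the Esteban--Tor\'an route you need $\Black(G_\pi)\ge\Black(G)$. That follows from a projection of $G_\pi$ onto $G$ sending each partially resolved clause to the predecessor it absorbed, not to $c$. An embedding of $G_\pi$ into $G$ would give the opposite inequality. Similarly, ``each negative clause contributes at least one pebble'' is the wrong direction: you need each pebble charged to a distinct clause.

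\textbf{Step 3 fails for unbounded in-degree.} As you state it, for an arbitrary sDAG, Step 3 is false. Take the star with sources $u^1,\dots,u^d$ and sink $z$. Its formula $\widehat{\Peb_G}$ has an N-refutation in clause space $3$: resolve the wide axiom successively with $\bar u^1_2,\dots,\bar u^d_2$, erasing each unit and the previous clause as you go, and finally resolve $\bar z_2$ with $z_2$. Yet $\Black(G)=d+1$. So the restriction route gives the bound only for in-degree at most $2$, or up to a constant factor for bounded in-degree. The paper's proof inherits exactly this restriction from Theorem~\ref{thm:tree-pcs}, which assumes in-degree at most $2$, so you should state that hypothesis explicitly.
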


\begin{proof}
Consider the partial assignment $\alpha$ that for each pair of variables  $v_1,v_2$ in ${\Peb_G[\oplus]}$
gives value 1 to $v_1$ and leaves $v_2$ unassigned. ${\Peb_G[\oplus]}\restriction_\alpha$ is a renaming of the formula $\Peb_G$ 
in which each literal has been substituted by its negation. In Theorem~\ref{thm:tree-pcs} this formulas is called $\widehat{\Peb_G}$ and it is shown show that $\Black(G)\leq \Nclspaceref{\widehat{\Peb_G}}$. Since
for any unsatisfiable formula $F$ and any partial assignment $\alpha$, a refutation of $F$ cannot use less space than a refutation of 
$F\restriction_\alpha$, the result follows.

\end{proof}

\section{Comparing space in tree-like and negative Resolution}\label{sec:comparing}

In this section, we study how different the space measure can be in the two restrictions of Resolution.
The best known separation in space between general and tree-like Resolution is by a $\log n$ factor \cite{JMNZ12RelatingProofCplx, ToranW21}, there are formulas $F$  for which
$\treeclspaceref[]{F}=\Theta(\log n)$ and  $\clspaceref[]{F}=O(1),$ and it is open whether this can be improved.  
The same separation between tree-like and negative Resolution space can be achieved.
\begin{theorem}
    There is a family of  formulas $\{F_n\}$ with $O(n)$ variables for which\\ 
    $\treeclspaceref[]{F_n}=\Theta(log n)$ and  $\Nclspaceref[]{F_n}=O(1).$  
\end{theorem}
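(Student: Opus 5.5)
We will take $F_n$ to be the pebbling formula $\Peb_{P_n}$ of a directed path $P_n$ on $n$ vertices, possibly lifted as $\Peb_{P_n}[\oplus]$ if the base formula turns out too easy for tree-like Resolution. The intended argument has two parts: the tree-like space of the path formula is logarithmic because the reversible (equivalently red--blue) price of a path is $\Theta(\log n)$, while negative Resolution can use the Horn/chain structure of the path to refute in constant space.

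\textbf{Tree-like bound.} For $F_n=\Peb_{P_n}[\oplus]$ we invoke the result of \cite{ToranW21} that $\treeclspaceref{\Peb_G[\oplus]}=\Theta(\Rev(G))$, together with $\Rev(P_n)=\Theta(\log n)$ from \cite{Bennett89TimeSpaceReversible}. This gives $\treeclspaceref{F_n}=\Theta(\log n)$ directly.

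The unlifted formula $\Peb_{P_n}$ does not work. Its tree-like space is constant: Prover queries the middle vertex and binary-searches, and Delayer can only answer $\ast$ $O(1)$ times because any $\ast$ is decided by Prover. Equivalently, the blue cost of a path is $2$, so Theorem~\ref{thm:TreeCS-leq-min-RMc-RefGraph} gives constant tree-like space. The $\vee$-lift is ruled out for the same reason, since $\Blue(P_n)=2$. So we commit to the $\oplus$-lift, which has $O(n)$ variables.

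\textbf{Negative Resolution bound.} We need $\Nclspaceref{\Peb_{P_n}[\oplus]}=O(1)$. Our bounds from Theorem~\ref{thm:upper-peb-nres-space} only give $O(\Rev)$, so we exploit the path structure directly with a backward sweep. Start from the sink $z$: the axioms $z[\oplus]$ negated are $\bar z_1\vee z_2$ and $z_1\vee\bar z_2$ type clauses, which are not negative. To overcome this, we plan to rename polarities. Flipping the variable $v_2$ for every vertex $v$ keeps the formula equivalent to $\Peb_{P_n}[\oplus]$ up to renaming, but negative Resolution is not closed under renaming. So we choose the renaming (XOR with negated second variable) so that the pebbling chain can be processed by resolving, at each step, a purely negative clause set $\{\bar v_1\vee\bar v_2\text{-type}\}$ encoding "$v$ is false" with the four axioms for the edge $(u,v)$. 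This should yield the negative clauses encoding "$u$ is false" in $O(1)$ space, mirroring the $2\Rev+4$ construction but with only one predecessor per step.

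\textbf{Main obstacle.} Encoding $\neg(v_1\oplus v_2)$ needs the clauses $v_1\vee\bar v_2$ and $\bar v_1\vee v_2$, which are not negative. In the proof of Theorem~\ref{thm:upper-peb-nres-space} this was handled by case splitting on both assignments and keeping unit negative clauses, at a cost of depth in the recursion. If the direct sweep cannot be kept negative, the fallback is to change the family. We would take a formula whose variables all appear negatively except in Horn-like chain axioms, for example a composition of a constant-width structure with a gadget that is hard for tree-like Resolution (tree-like space $\Theta(\log n)$ via the Prover--Delayer game of Theorem~\ref{thm:TreeClauseSpaceProverDelayerCharacterization}). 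Such a formula admits a linear negative refutation keeping $O(1)$ negative clauses, as in the white pebbling sweep of Lemma~\ref{lem:white-pebbling} on a graph of constant black pebbling price. The key check is that the negative width stays $O(1)$, which by a sweep argument gives constant space. The Delayer lower bound would then be proved by having Delayer answer $\ast$ on the first query of each block in a binary-search-resistant way, scoring $\Omega(\log n)$.
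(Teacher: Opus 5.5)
\textbf{Gap: the negative-Resolution half is missing.} Your tree-like half is fine. By the result of \cite{ToranW21} and $\Rev(P_n)=\Theta(\log n)$, the formula $\Peb_{P_n}[\oplus]$ does have tree-like space $\Theta(\log n)$. But you never prove $\Nclspaceref{\Peb_{P_n}[\oplus]}=O(1)$, and that is the substance of the separation.

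The obstacle you identify is genuine, not technical. Under $\oplus$ (and under its renaming to $\neg\oplus$, which is the only thing flipping $v_2$ can produce), each of $x_v$ and $\neg x_v$ is encoded by two clauses, at least one of which has a positive literal. Two non-negative clauses can never be resolved together in N-Resolution. Concretely, going backwards from the units $\bar v_1,\bar v_2$, the edge axioms only give the mixed pair $(u_1\vee\bar u_2)$, $(\bar u_1\vee u_2)$ for the predecessor $u$. Turning these into $\bar u_1,\bar u_2$ needs $(\bar u_1\vee\bar u_2)$, which can only be obtained by a derivation reaching back to the source, or by storing it. So the sweep cannot carry its state in $O(1)$ negative clauses.

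The only available bound, Theorem~\ref{thm:upper-peb-nres-space}, gives $2\Rev(P_n)+4=\Theta(\log n)$. That is the same order as the tree-like bound, so it cannot separate. ``Mirroring the $2\Rev+4$ construction with one predecessor per step'' does not help, because the logarithm comes from the recursion along the path, not from the in-degree. Note too that the paper only knows $\Omega(\Black(G))$ and $O(\Rev(G))$ for $\Peb_G[\oplus]$ in N-Resolution, so constant space for paths would be a new result. Your fallback names no concrete family and proves neither bound.

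\textbf{The missing idea: change the graph, not the gadget.} You discarded the $\vee$-lift only because you had fixed the graph to be a path. The paper keeps $\vee$ and takes $F_n=\Peb_{G_n}[\vee]$ with $G_n$ the road graph $R(2,n)$, which has $O(n)$ vertices.

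For the tree-like bound, Lemma~\ref{lem:road-graph} gives $\Blue(G_n)=\Theta(\log n)$, and Theorem~\ref{thm:tree-cs-peb} turns this into $\treeclspaceref{F_n}=\Theta(\log n)$.

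For N-Resolution, the $\vee$-lift is exactly what makes a backward sweep negative. For the two vertices $v,v'$ of a column, $\neg(v\wedge v')$ is the four negative clauses $(\bar v_i\vee\bar v'_j)$, $i,j\in\{1,2\}$. The sweep runs as follows:
\begin{enumerate}
\item Resolving the sink axioms $(\bar y_a\vee\bar y'_b\vee z_1\vee z_2)$ with the units $\bar z_1,\bar z_2$ yields the four clauses for the last column.
\item From the clauses for column $(v,v')$, each axiom $(\bar u_a\vee\bar u'_b\vee v_1\vee v_2)$ is resolved twice against them to get $(\bar u_a\vee\bar u'_b\vee\bar v'_c)$ for $c=1,2$.
\item These are resolved with $(\bar u_a\vee\bar u'_b\vee v'_1\vee v'_2)$ to get $(\bar u_a\vee\bar u'_b)$.
\item At the sources, the axioms $(s_1\vee s_2)$ and $(s'_1\vee s'_2)$ close the refutation.
\end{enumerate}
Every inference has a negative parent, and only a constant number of clauses are ever in memory.
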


\begin{proof}
    Let $G_n$ be the  road graph length $n$ and width 2 and let $F_n=\Peb_{G_n}[\vee]$. Starting from the sink axioms, and moving towards the sources of $G_n$ it is possible to refute the formula with constant space in general Resolution. For this at each level with two vertices $v, v'$ in $G_n$ one has to  keep in memory the four clauses $(\bar v_i\vee \bar v'_j)$, $i,j\in\{1,2\}$. 
On the other hand, by Lemma \ref{lem:road-graph} and Theorem \ref{thm:tree-cs-peb},  $\treeclspaceref[]{F}=\Theta(\log n)$.
\end{proof}

We can achieve a much larger separation in the other direction. This is interesting since it is known that 
N-Resolution polynomialy simulates tree-like Resolution in size (see e.g. \cite{BP07Complexity}). It is easier to  first analyze 
the case of positive Resolution, then the result can be adapted to negative Resolution

\begin{theorem}\label{thm:tree-pcs}
There are unsatisfiable formula families  $\{F_n\}$, $\{H_n\}$  with $O(n)$ variables   for which 
\begin{enumerate}
\item $\treeclspaceref[]{F_n}=O(1)$,   
but $\Pclspaceref[]{F_n}=\Omega(\frac{n}{\log n})$, and  
\item
 $\treeclspaceref[]{H_n}=O(1)$ but $\Nclspaceref[]{H_n}=\Omega(\frac{n}{\log n})$. 
\end{enumerate}
\end{theorem}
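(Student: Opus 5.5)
Throughout, $\widehat{\Peb_G}$ denotes the formula obtained from $\Peb_G$ by replacing every literal with its negation. Here $G=G_n$ is a sDAG of fan-in $2$ with $n$ vertices and $\Black(G_n)=\Omega(n/\log n)$; such graphs exist by the classical Paul--Tarjan--Celoni construction. We take $H_n=\widehat{\Peb_{G_n}}$ and $F_n=\Peb_{G_n}$. The clauses of $\widehat{\Peb_G}$ are $(\bigvee_{u\in\pred(v)} u\vee \bar v)$ for every vertex $v$, together with the unit clause $z$. Hence every clause has at most one negative literal, and the only purely negative clauses are the source clauses $\bar s$. This formula is the object the earlier lower bound for $\Peb_G[\oplus]$ refers to. Since the two items are dual under renaming literals (positive Resolution on $\Peb_G$ corresponds to negative Resolution on $\widehat{\Peb_G}$), it suffices to prove item 2. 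Both formulas have $n$ variables and $O(n)$ clauses.

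\textbf{Upper bound.} The tree-like upper bound for $\widehat{\Peb_G}$ uses Theorem~\ref{thm:TreeClauseSpaceProverDelayerCharacterization}: we show that $\PD(\widehat{\Peb_G})=O(1)$. Prover maintains a vertex $v$ whose variable is set to $1$, starting at the sink $z$, where the axiom $z$ forces this value. He queries the (at most two) predecessors of $v$. If both are $0$, the axiom $(\bigvee_{u\in\pred(v)}u\vee\bar v)$ is falsified and the game ends. Otherwise some predecessor equals $1$, and Prover moves to it; when Delayer answers $\ast$, Prover chooses the value $1$ for the first predecessor. This walk ends at a source $s$ with $s=1$, which falsifies the axiom $\bar s$. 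The costly case is Delayer answering $\ast$ at each step, which would give $\Theta(\text{depth})$ points, so this naive walk is not enough. The fix is to answer $\ast$ with value $0$ instead: then the second predecessor is queried, and if it is also $0$ the axiom is falsified. Each $\ast$ either ends the walk or is paid once per vertex. To keep the total bounded, Prover chooses values so that each point buys termination. Concretely, for the query on $u_1$, a $\ast$ is answered with $0$, and then $u_2$ is queried; a $\ast$ there is answered with $0$, which falsifies the axiom. So Delayer scores at most $2$ per step but only once before the game ends, while non-$\ast$ answers cost nothing. Thus $\PD=O(1)$, and tree-like space is $O(1)$ (at most $4$ or $5$).

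\textbf{Lower bound.} I follow the strategy of Lemma~\ref{lem:white-pebbling}, with the polarities flipped. In a negative Resolution refutation of $\widehat{\Peb_G}$, each inference has a negative parent, and the only negative clauses ever present are derived from the source axioms. The goal is to turn a configurational negative refutation of clause space $s$ into a black pebbling of $G$ with $O(s)$ pebbles. For each configuration, pebble the vertices $v$ whose positive literal occurs in some clause in memory. The main obstacle is that negative width is not clause space: a single clause can contain many literals. I would therefore use a projection-style argument in the spirit of Lemma~\ref{lem:cs>nproj}, or the known result that clause space is at least the black pebbling number for refutations of $\Peb_G$ in the \cite{ET01SpaceBounds}-type sense. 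The intended argument is as follows. In any resolution refutation the set of vertices $v$ with $\mathbb M\models \bar v$ (for $\widehat{\Peb_G}$, the vertices forced to $0$, i.e. the ``true'' pebbles in the black interpretation) evolves as a black pebbling, in the style of Ben-Sasson's semantic argument. A Hall-matching argument as in Lemma~\ref{lem:cs>nproj} then bounds the number of minimal implied negative unit and disjunctive clauses by $|\mathbb M|$; the negativity restriction is what makes these implied clauses individually derivable, replacing authoritarian lifting. This gives $\Nclspaceref{\widehat{\Peb_G}}\ge \Black(G)-O(1)=\Omega(n/\log n)$. I expect this translation from clause space to pebbles to be the delicate step.
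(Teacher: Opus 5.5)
Your choice $F_n=\Peb_{G_n}$, $H_n=\widehat{\Peb_{G_n}}$, and the reduction of item~1 to item~2 by flipping all polarities, agree with the paper. However, both halves of your argument for item~2 have gaps.

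\textbf{Upper bound.} The sink-to-source walk does not give $\PD(\widehat{\Peb_G})=O(1)$. Your claim that a $\ast$ is paid only once before the game ends is false. At the current vertex, Delayer answers $\ast$ on the first queried predecessor (you set it to $0$ and she scores a point). She then answers $1$ on the second predecessor, so the walk continues. On a pyramid of height $h$ she can do this at every level, whichever predecessor you query first, and she scores $h$ points. The fix is to walk in the other direction: query the vertices in topological order, starting at the sources. When $v$ is queried, all its predecessors are already $0$. So the answer $1$ falsifies an axiom, and on $\ast$ Prover sets $v=1$ and falsifies it; hence Delayer scores at most one point. Equivalently, and this is what the paper uses, the input refutation of $\Peb_G$ that repeatedly replaces the topologically last vertex of the current negative clause by its predecessors is tree-like with clause space $3$.

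\textbf{Lower bound.} This part is only a plan, and the tools you name cannot carry it.
\begin{enumerate}
\item There is no result saying that refuting $\Peb_G$ needs clause space $\Black(G)$. $\Peb_G$ and $\widehat{\Peb_G}$ have constant-space tree-like refutations, which is item~1 itself. So an argument valid ``in any resolution refutation'' cannot give the bound; the negativity restriction must be used essentially, and your sketch never says how.
\item Your semantic set $\{v:\mathbb M\models\bar v\}$ does not evolve as a legal black pebbling. If $\bar u$ is resolved with $(u\vee w\vee\bar d)$ and then deleted, then as soon as $\bar w$ becomes implied so does $\bar d$, although $\bar u$ need not be.
\item Your rule of pebbling the vertices whose positive literal occurs in memory picks the wrong vertices. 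In $(w\vee\bar d)$ the literal $w$ marks the predecessor still to be derived, not a pebbled one.
\item The obstacle you name, wide negative clauses, never arises, and the Hall-matching machinery of Lemma~\ref{lem:cs>nproj} is unnecessary.
\end{enumerate}

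The missing idea follows from a fact you noted yourself. Every clause of $\widehat{\Peb_G}$ has at most one negative literal, and resolution preserves this. So the all-negative parent of every N-Resolution step is a negative unit $\bar v$, and the refutation is a unit resolution. The paper states the dual: a positive refutation of the Horn formula $\Peb_G$ is a unit resolution.

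Such a refutation is simulated directly by a black pebbling. Keep a pebble on $v$ while $\bar v$ is in memory, or while some clause $(w\vee\bar d)$ obtained by resolving $\bar v$ with $(v\vee w\vee\bar d)$ is in memory. Then:
\begin{enumerate}
\item a new unit $\bar d$ can only be derived when both predecessors of $d$ carry pebbles;
\item $\bar z$ must be derived, since the axiom $z$ must be used and can only be resolved against $\bar z$;
\item each clause in memory justifies at most one pebble.
\end{enumerate}
Hence $\Black(G_n)$ is at most the clause space, which gives $\Omega(n/\log n)$.
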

\begin{proof}
For the first part, we consider the pebbling formula $\Peb_G$ for a sDAG $G$ with in-degree at most 2. It is well known that resolving the clauses following a topological ordering of the underlying graph $G$,  $\Peb_G$ has a tree-like refutation with clause space O(1). This is in fact a negative Resolution refutation.
We will show that from a positive refutation $\pi$ of $\Peb_G$ with clause space $k$ one can extract a black pebbling strategy $S_\pi$ for $G$ with 
$\Black(S_\pi)\leq k.$ The result follows  from the fact that there are known families of expander graphs  $G_n$ with $n$ vertices and high pebbling number, $\Black(G_n)=\Omega(\frac{n}{\log n})$ \cite{PTC76SpaceBounds}.

We observe that any positive Resolution of a Horn formula must be a unit Resolution. This is because every derived clause in the refutation of a Horn formula must be a Horn clause, and therefore the only positive clauses in the refutation are unit clauses.

Depending on the step taken in the configurational refutation, we perform some pebbling actions in the following way:

\begin{enumerate}
\item{}If a source axiom $a$ is downloaded, then a pebble is placed on vertex $a$.

\item{}If a unit clause $c$ is deleted and $c$ has been resolved with some other clause $A$ containing $\bar c$ and some positive literal $d$, then the pebble on $a$ is not deleted until the clause $d$ has been derived.

\item{}If a Resolution step is performed obtaining a unit clause with a positive literal $d$, then a pebble is placed on $d$. 
\end{enumerate}

We show that this is a correct pebbling strategy, and that the number of pebbles used at any moment $t$ in the strategy is 
bounded by the number of clauses in $\mathbb M_t.$

For the correctness of the strategy, observe that the only moments in which new pebbles are set are when a source axiom $a$ is downloaded or a
positive literal $d$ is derived. In the first situation, it is clear that one can place a pebble on $a$. In the second case,
the literal $d$ comes from an axiom of the kind $(\bar a\vee \bar b\vee d)$ and two Resolution steps with the clauses $a$ and $b$ have been performed to derive $d$.
 Due to Rule 2, the pebbles on $a$ and $b$ are still set, and therefore it is possible to place a pebble on $d$.
Also, at some point, a pebble has to be placed on the sink $z$ since the positive literal $z$ has to be derived in $\pi$ because
this is the only clause that can be resolved with the axiom $\bar z$ and without this axiom the formula is satisfiable.

We show now that the number of pebbles at any moment $t$ is bounded by $|\mathbb M_t|$. Observe that if at time $t$ 
there is a pebble on vertex $u$, then either the clause $u\in \mathbb M_t$ or there is a clause $(\bar v\vee w)$ in $\mathbb M_t$ obtained from the Resolution of $u$ and some other clause in a previous configuration. In this case we can associate  $u$ with this clause since no other unit clause can be resolved to obtain the resolvent $(\bar v\vee w)$.
 
For the part about N-Resolution,    let us define for a formula $F$, the modified formula  $\widehat F$ obtained by substituting each literal in $F$ by its negation, that is, each positive literal in $F$ is negated in $\widehat F$ and each negated literal is transformed into its positive version. A positive Resolution refutation $\pi$ of $F$ corresponds then to a negative refutation of 
$\widehat F$. Therefore, by the previous result, for a DAG $G$, $\Nclspaceref{\widehat {\Peb_G}}=\Omega(\Black(G))$. 

Observe that the families $F_n$ and $H_n$ provide also a way to separate the space requirements in N- and P-Resolution.
\end{proof}

\section{Conclusions and open problems}
We have introduced the blue pebbling price,   a new cost  measure for pebbling games, intermediate between the well known 
black and reversible pebbling measures. It would be interesting to find graph families for which all three measures are 
pairwise asymptotically different.
We have shown that the blue price is a useful tool for the analysis of 
clause space in Resolution, exactly characterizing this measure for general unsatisfiable formulas in tree-like Resolution  improving existing results. We have also considered the space in negative Resolution  obtaining tight upper and lower bounds 
related to pebbling for 
different classes of lifted pebbling formulas. The bounds, which  not  completely match, can be seen in Table 1. It is open whether they can be improved in terms of pebbling game measures. 
Finally we have obtained close to optimal space separations between negative  and tree-like Resolution.
We do not know of any other case of two proof systems $A$ and $B$ in which $A$ can simulate $B$ in size,
but requires much higher clause space than $B$ for refuting certain formulas.

\bibliography{references.bib}
\bibliographystyle{alpha}
\end{document}